\let\csname equation*\endcsname\relax
\let\csname endequation*\endcsname\relax
\theoremstyle{plain}
\newtheorem{thm}{Theorem}
\newtheorem{prp}[thm]{Proposition}
\newtheorem{cor}[thm]{Corollary}
\newtheorem{lem}[thm]{Lemma}
\newtheorem{rem}[thm]{Remark}
\newtheorem{fct}[thm]{Fact}
\newtheorem{rmk}[thm]{Remark}
\newcommand{\ol}{\overline}
\newcommand{\ep}{\epsilon}
\newcommand{\ve}{\varepsilon}
\newcommand{\ceq}{\coloneqq}
\newcommand{\lto}{\longrightarrow}
\newcommand{\lsrj}{\relbar\joinrel\twoheadrightarrow}
\newcommand{\lrto}{\leftrightarrow}
\newcommand{\llrto}{\longleftrightarrow}
\newcommand{\midd}{\, | \,}
\newcommand{\dapp}{\, \stackrel{?}{\approx} \,}
\newcommand{\thf}{(1/2)}
\newcommand{\tmx}{\text{max}}
\newcommand{\tmix}{\text{mix}}
\newcommand{\bbC}{\mathbb{C}}
\newcommand{\bbE}{\mathbb{E}}
\newcommand{\bbN}{\mathbb{N}}
\newcommand{\bbR}{\mathbb{R}}
\newcommand{\bbV}{\mathbb{V}}
\newcommand{\bbZ}{\mathbb{Z}}
\newcommand{\frS}{\mathfrak{S}}
\newcommand{\clH}{\mathcal{H}}
\newcommand{\clU}{\mathcal{U}}
\newcommand{\clV}{\mathcal{V}}
\newcommand{\sfP}{\mathsf{P}}
\newcommand{\clN}{\mathfrak{N}}
\DeclareMathOperator{\av}{av}
\DeclareMathOperator{\id}{\mathsf{id}}
\DeclareMathOperator{\SU}{SU}
\DeclareMathOperator{\Prb}{Pr}
\DeclareMathOperator{\rank}{rank}
\newcommand{\kb}[1]{\left| #1 \right> \! \left< #1 \right|}
\newcommand{\ff}[2]{#1^{\downarrow #2}}
\newcommand{\hg}[2]{{}_{#1} F_{#2}}
\newcommand{\abs}[1]{\left| #1 \right|}
\newcommand{\tket}[1]{| #1 \rangle}
\newcommand{\HGcomma}{{\normalcomma}\mskip\HGmuskip}
\newcommand*\HG[6][8]{%
 \begingroup
  \HGmuskip=#1mu\relax
  \mathchardef\normalcomma=\mathcode`,
  \mathcode`\,=\string"8000
  \begingroup\lccode`\~=`\,
  \lowercase{\endgroup\let~}\HGcomma
  {}_{#2}F_{#3}{\left[\genfrac..{0pt}{}{#4}{#5};#6\right]}%
 \endgroup}
\newcommand*\qHG[7][8]{%
 \begingroup
  \HGmuskip=#1mu\relax
  \mathchardef\normalcomma=\mathcode`,
  \mathcode`\,=\string"8000
  \begingroup\lccode`\~=`\,
  \lowercase{\endgroup\let~}\HGcomma
  {}_{#2}\phi_{#3}{\left[\genfrac..{0pt}{}{#4}{#5};#6,#7\right]}%
 \endgroup}
\begin{document}

\title[Stochastic behavior of outcome of Schur-Weyl duality measurement]%
{Stochastic behavior of outcome of Schur-Weyl duality measurement}
\author{Masahito Hayashi$^{1,2,3}$, Akihito Hora$^{4}$, Shintarou Yanagida$^{3}$}
\address{$^1$ School of Data Science, The Chinese University of Hong Kong,
Shenzhen, Longgang District, Shenzhen, 518172, China}
\address{$^2$ International Quantum Academy (SIQA), Shenzhen 518048, China}
\address{$^3$ Graduate School of Mathematics, Nagoya University, 
 Furo-cho, Chikusa-ku, Nagoya, 464-8602, Japan}
\address{$^4$ Department of Mathematics, Faculty of Science, Hokkaido University, %
 Kita 10, Nishi 8, Kita-Ku, Sapporo, Hokkaido, 060-0810, Japan}
\eads{\mailto{hmasahito@cuhk.edu.cn},\mailto{hora@math.sci.hokudai.ac.jp}, 
 \mailto{yanagida@math.nagoya-u.ac.jp}}
\vspace{10pt}
\begin{indented}
\item[] November, 2023
\end{indented}

\begin{abstract}
We focus on the measurement defined by the decomposition based on
Schur-Weyl duality on $n$ qubits.
As the first setting, we discuss the asymptotic behavior 
of the measurement outcome when the state is given as
the permutation mixture 
$\rho_{mix,n,l}$
of the state $\ket{1^{l} \, 0^{n-l}}
:=\ket{1}^{\otimes l} \otimes \ket{0}^{\otimes (n-l)}$.
In contrast, when the state is given as the Dicke state
$ \ket{\Xi_{n,l}}$,
the measurement outcome takes one deterministic value.
These two cases have completely different behaviors. 
As the second setting,
we study the case when the state is given as the tensor product
of the permutation mixture $\rho_{mix,k,l}$
and the Dicke state $ \ket{\Xi_{n-k,m-l}}$.
We derive various types of asymptotic distribution including 
a kind of central limit theorem when $n$ goes to infinity.
\end{abstract}

\vspace{2pc}
\noindent{\it Keywords\/}: 
Dicke state,
Schur-Weyl duality, 
central limit theorem,
law of large numbers.


\maketitle

\section{Introduction}\label{s:intro}
Schur-Weyl duality plays a central role in quantum information,
and it introduces direct sum decomposition on the $n$-tensor product system.
This decomposition makes a decomposition of the identity operator, and
can be regarded as a projection measurement over the corresponding quantum system,
so called Schur-Weyl duality measurement.
This measurement is used for the estimation for the eigenvalue of the density matrix,
universal quantum data compression, and
universal distortion-free entanglement concentration.
When we focus on the
$n$-tensor product of the qubit system,
each component of the decomposition is characterized by Young index $(n-x,x)$
with $x=0,1, \ldots, \lfloor n/2\rfloor$.
The outcome $X$ of the measurement 
is given by the second index $x$.
When the state is prepared as the $n$-tensor product of a state with 
eigenvalues $(1-p,p)$ under the condition $0<p< \frac{1}{2}$,
the variable $\frac{X}{n}$ converges to $p$
and the normalized difference 
$\frac{X-np}{\sqrt{n}}$ is asymptotically subject to the normal distribution 
with variance $p(1-p)$ \cite[(55)]{HM04}\cite{EBGMM}\cite[Section 6.3]{Ha}.
In contrast, when the state is prepared as the $n$-tensor product 
of the completely mixed state,
the variable $\frac{X}{n}$ converges to $1/2$
and the normalized difference 
$\frac{X-n/2}{\sqrt{n}}$ is asymptotically subject to 
the chi-square distribution \cite[Section 13.5]{FAA}.

For a further study, we remember that 
any $n$-tensor product state can be written as a mixture of 
equal weight states.
Once we fix our basis to the computation basis $\ket{0},\ket{1}$,
each equal weight state is given as 
the permutation mixture $\rho_{mix,n,l}$ 
of the state $\ket{1^{l} \, 0^{n-l}}\ceq
\ket{1}^{\otimes l} \otimes \ket{0}^{\otimes (n-l)}$.
Notice that 
the outcome $X$ of Schur-Weyl duality measurement 
has the same behavior with the state 
$\kb{1^{l} \, 0^{n-l}}$ and its permutation mixture $\rho_{mix,n,l}$.
The first question is the asymptotic behavior of 
the outcome $X$ when the state is
the permutation mixture $\rho_{mix,n,l}$
of a state $\kb{1^{l} \, 0^{n-l}}$.
When $l=np$ with $0<p<1/2$,
we show that 
the variable $\frac{X}{n}$ converges to $p$
and the normalized difference 
$\frac{X-np}{\sqrt{n}}$ converges to zero.
To see the behavior of the difference $\frac{X}{n}-p$, 
we need to enlarge it with larger scaling.
As the result, we show that $X-np$ 
is asymptotically subject to the geometric normal distribution.
In contrast, when $l=n/2$,
we show that
the variable $\frac{X}{n}$ converges to $1/2$,
but the normalized difference 
$\frac{X-n/2}{\sqrt{n}}$ does not converges to zero
and is asymptotically subject to the Rayleigh distribution.

On the other hand, when the state is the superposition 
$ \ket{\Xi_{n,l}}$
of the permutations of the state $\ket{1^{l} \, 0^{n-l}}$,
the outcome $X$ is always $0$
because such a state belongs to the symmetric subspace.
Such a state is called the Dicke state \cite{Di}, and 
it has been playing an important roles in 
calculation of entanglement measures \cite{WG,HMMOV,WEGM,W,ZCH}, 
quantum communication and quantum networking \cite{KSTSW,PCTPWKZ}. 
Some typical Dicke states have been realized in trapped atomic ions \cite{HCTW}. 
Recently, the multi-qubit Dicke state with half-excitations
has been employed to implement a scalable quantum search based on
Grover's algorithm by using adiabatic techniques \cite{IILV}. 

Now, we arise the following question.
How can we characterize the stochastic behavior of 
the distribution $P_{n,m,k,l}$ of the outcome $X$ of 
the Schur-Weyl duality measurement
when the state is given as the tensor product of 
$\rho_{mix,k,l}$ and $ \kb{\Xi_{n-k,m-l}}$?
When $k=l=0$, the state belongs to the symmetric subspace
and the variable 
$\frac{X}{n}$ takes the value $0$.
When $k=n, l=m$, the variable 
$\frac{X}{n}$ asymptotically converges to $m/n$ in probability.
Therefore, we expect that 
the variable $\frac{X}{n}$ takes values between $0$ and $m/n$.
Interestingly, as shown in the paper \cite{HHY1},
the probability mass function (pmf) $p(x \midd n,m,k,l)$ of 
the distribution $P_{n,m,k,l}$
can be expressed by using Hahn and Racah polynomials, which are 
$\hg{3}{2}$- and $\hg{4}{3}$-hypergeometric orthogonal polynomials.
Since the numerical calculation method of these functions
has been well established,
this method yields the explicit calculations of the distribution $P_{n,m,k,l}$.
Further, this method enables us to derive a useful recursion formula for 
the probability mass function (pmf) $p(x \midd n,m,k,l)$ of 
the distribution $P_{n,m,k,l}$. 

However, although these formulas give us the exact values of 
the pmf $p(x \midd n,m,k,l)$,
they did not directly present the trend of 
the distribution $P_{n,m,k,l}$ when $n$ is sufficiently large.
To extract their trend, we discuss 
their asymptotic behavior under two types of limit, i.e., Type I and Type II limits.
Type I limit assumes that $k$ and $l$ are fixed and $m$ is linear for $n$, 
i.e., $m=\xi n$ with fixed ratio $\xi$, under $n \to \infty$.
That is, the mixed state $\rho_{mix,k,l}$ is fixed and the size of the Dicke state $ \ket{\Xi_{n-k,m-l}}$ increases.
Type II limit does that $k,l$ and $m$ are linear for $n$ under $n \to \infty$.
That is, the sizes of the mixed state $\rho_{mix,k,l}$ and the Dicke state $ \ket{\Xi_{n-k,m-l}}$ increase.
Type I limit corresponds to the situation of the law of small numbers,
and Type II limit corresponds to the situation of the central limit theorem.
Under Type I limit, we show that 
the distribution $P_{n,m,k,l}$ converges to 
the convolution of two binomial distributions.

In Type II limit, i.e., the limit $n \to \infty$ with $k,l$ and $m$ linear for $n$,
we need to treat many ratios and parameters,
which are summarized in Table \ref{tab:intro:IIprm}.
While 
the fixed ratios 1 and 2 are natural,
the fixed ratio 3 is useful for our purpose.

\begin{table}[htbp]
\centering
\begin{tabular}{l|ll}
 fixed ratio set 1 
 & $\alpha=\frac{l  }{n}$ & $\beta=\frac{m-l}{n}$ \qquad
   $\gamma=\frac{k-l}{n}$ \qquad $\delta=\frac{n-m-k+l}{n}$ \\[1ex] 
 fixed ratio set 2 
 & $\xi=\frac{m}{n}$ &  $\kappa=\frac{k  }{n}$\phantom{-l} \qquad 
   $\tau=\frac{l  }{n}-\xi\kappa$ \\
[1ex] 
 fixed ratio set 3 
 & $\beta=\frac{m-l}{n}$ &  $\delta=\frac{n-m-k+l}{n}$\phantom{-l} \qquad 
   $\xi=\frac{m}{n}$  \\
   [1ex] \hline 
 limit pdf parameters 
 &$\mu \ceq \frac{1-\sqrt{D}}{2}$ &
  $\sigma^2  \ceq \frac{(1-\beta-\delta)\beta\delta}{D} $ 
  $D \ceq 4 \beta \delta+ (2\xi-1)^2$
 \end{tabular}
\caption{Ratios and parameters for Type II limit}
\label{tab:intro:IIprm}
\end{table}

Under this limit, we show that 
$\frac{X}{n}$ asymptotically converges to $\mu$ in probability,
where $\mu$ is defined in Table \ref{tab:intro:IIprm}.
We also derive a kind of central limit theorem
by using the above mentioned recursion formula.
That is, the normalized difference 
$\frac{X-np}{\sqrt{n}}$ is asymptotically subject to the normal distribution 
with the variance $\sigma$ defined in Table \ref{tab:intro:IIprm}
except for the case
$\beta \delta=0$ nor $\alpha =\gamma=0$.
That is,
denoting the probability for the distribution $P_{n,m,k,l}$
by $\Prb_n$, 
we have 
\begin{align}
 \lim_{n \to \infty} 
 \Prb_n\Bigl[t \le \frac{X-n \mu}{\sqrt{n} \sigma} \le u\Bigr]
 = \frac{1}{\sqrt{2\pi}} \int_t^u e^{-s^2/2} \, d s
\end{align}
for any real $t<u$.
Figure \ref{fig:intro:pi/Psi} shows some examples comparing the pmf $p(x \midd n,m,k,l)$
 and the normal distribution. 
\begin{figure}[htbp]
\centering
\includegraphics[width=.96\linewidth]{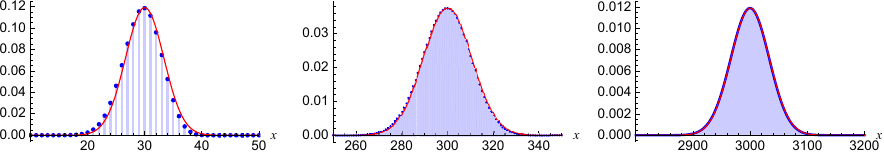}
\caption{Plots of $P_{n,m,k,l}[X=x]$ by blue dots and %
 the limit normal distribution by red lines %
 with $(m/n,k/n,l/n)=(0.4,0.6,0.3)$ fixed and %
 $n=100$ (left), $1000$ (middle), $10000$ (right).}
\label{fig:intro:pi/Psi}
\end{figure}

As shown in this paper,
the variance $\sigma$ is well defined and has non-zero value
except for these two cases.
When $\beta \delta=0$, 
the state $ \kb{\Xi_{n-k,m-l}}$ is 
$\rho_{n-k,0}$ or $\rho_{n-k,n-k}$.
In this case the outcome $X$ of Schur-Weyl duality measurement 
has the same behavior with the state 
$\kb{1^{l} \, 0^{n-l}}$ 
or $\kb{1^{l+n-k} \, 0^{k-l}}$. 
This case has been discussed in the above discussion.
When $\alpha =\gamma=0$, 
the state of the system to be measured is 
the state $ \kb{\Xi_{n-k,m-l}}$ so that 
the variable $\frac{X}{n}$ takes the value $0$.
Therefore, it is sufficient to discuss the case except for these two cases.
The above type of central limit theorem is the main contribution of this paper.
This analysis shows the intermediate case 
between the two cases mentioned the above.

Although several related preceding studies exist,
our obtained central limit theorem is different from theirs in the following way.
Several studies \cite{V,TV,Bi1,Bi2} discussed  
the asymptotic behavior of the distribution related to representation theory,
although their distributions do not depend on the state of the quantum system
while our distribution depends on the state of the quantum system.
In the context of quantum information, various studies 
\cite{Petz,Accardi1,Accardi2,guta-qubit,hayashi2008asymptotic, %
      guta-lan,qlan,hayashi2009quantum} 
discussed the central limit theorem related to unitary representation
in the sense of quantum Gaussian states dependently on the state of the quantum system.
However, since their analysis assumes the independent and identical tensor product structure,
these existing results cannot be applied to our setting
because our setting does not have the independent and identical tensor product structure.

In fact, Wigner 3j and 6j symbols are described by 
$\hg{3}{2}$- and $\hg{4}{3}$-hypergeometric orthogonal polynomials \cite{AW,VMK}.
However, their description and our description are different in the following points.
First, $\hg{3}{2}$- and $\hg{4}{3}$-hypergeometric orthogonal polynomials
have various parameters, and the parameter ranges of 
Wigner 3j and 6j symbols are different from our case.
Second, these polynomials are directly linked to the pmf in our case.
But, in the case of Wigner 3j and 6j symbols,
these polynomials express the coefficients in the superpositions,
and the squares of these polynomials express a certain distribution.
Therefore, it is not easy to derive a recursion formula for pmfs related to 
Wigner 3j and 6j symbols in a way similar to our case.

The remaining part of this paper is organized as follows.
Section \ref{S2-1} discusses 
the first setting, where the state is given as
$\rho_{mix,n,l}$.
Section \ref{S2-2-1} formulates
the second setting, where the state is given as
the tensor product of $\rho_{mix,k,l}$ and $ \kb{\Xi_{n-k,m-l}}$.
Section \ref{ss:intro:fin} reviews the results by \cite{HHY1}.
Section \ref{S2-2} states our obtained asymptotic results
in the second setting.
Section \ref{ss:I} proves our results under Type I limit.
Section \ref{sss:II:gen} presents 
our ad hoc derivation of the central limit theorem under Type II limit.
Section \ref{ss:AE-prf} proves the law of large numbers under Type II limit.
Section \ref{ss:CLT-prf} proves the central limit theorem under Type II limit.
Section \ref{ss:II:Cas} discusses the asymptotic behaviors of the expectation
and the variance under Type II limit.
Section \ref{ss:II:sp} discusses the tail probability under a special condition.
Section \ref{S-app} discusses two kinds of applications of our results.
Section \ref{S5} gives the conclusion and discussion.

\section{First setting}\label{S2-1}
This paper focuses the $n$-fold tensor product 
$\clH=(\bbC^2)^{\otimes n}$ of the qubit system $\bbC^2=\bbC \ket{0} \oplus \bbC \ket{1}$,
and consider it as the representation space of the permutation group $G=\frS_n$ 
acting by permutation of tensor factors as 
\begin{align} 
\pi(g)(\ket{i_1 \dotsm i_n}) \ceq \ket{i_{g^{-1}(1)} \dotsm i_{g^{-1}(n)}}, 
 \quad g \in \frS_n.
\end{align}
Also, we consider the representation of $\SU(2)$ as the product representation.

Our main focus is the $\SU(2)$-$\frS_n$ Schur-Weyl duality
on the tensor product system $\clH=(\bbC^2)^{\otimes n}$ depicted as
\begin{align}\label{eq:intro:SCS}
 \SU(2) \curvearrowright (\bbC^2)^{\otimes n} \curvearrowleft \frS_n.
\end{align}
The Schur-Weyl duality claims that we have the following decomposition of 
$(\bbC^2)^{\otimes n}$ into irreducible representations (irreps for short):
\begin{align}\label{eq:intro:SW}
 (\bbC^2)^{\otimes n} 
 = \bigoplus_{x=0}^{\lfloor n/2 \rfloor} \clU_{(n-x,x)} \boxtimes \clV_{(n-x, x)}.
\end{align} 
Here $\boxtimes$ denotes the tensor product of linear spaces, equipped with 
$\SU(2)$-action on the left and $\frS_n$-action on the right factor,
$\clV_{(n-x, x)}$ denotes the $\frS_n$-irrep corresponding to the partition $(n-x,x)$, 
and $\clU_{(n-x,x)}$ denotes the highest weight $\SU(2)$-irrep of dimension $n-2x+1$.
Using the hook formula (see \cite[I.5, Example 2; I.7, (7.6)]{M} for example),
we can compute the dimension of $\clV_{(n-x, x)}$ as 
\begin{align}\label{eq:hook}
 \dim \clV_{(n-x,x)} = \frac{k!}{\prod_{\square \in (n-x,x)} h(\square)} = 
 \frac{n-2x+1}{n-x+1} \binom{n}{x} = \binom{n}{x} - \binom{n}{x-1},
\end{align}
where $\square \in (n-x,x)$ denotes a box in the Young diagram 
corresponding to the partition $(n-x,x)$.
Using this formula, we have the completely mixed state 
$\rho_{\tmix, \clV_{(n-x,x)}} = \frac{1}{\dim \clV_{(n-x,x)}} \id_{\clV_{(n-x,x)}}$.
The highest weight $\SU(2)$-irrep of dimension $2j+1$ has the standard basis 
$ \{ \ket{j,m} \mid m=-j,-j+1, \dotsc, j\}$, where
$j$ and $m$ express the total and z-direction angular momenta, respectively.

In our analysis,  
the projector $\sfP_{x}$ onto the isotypical component 
in the decomposition \eqref{eq:intro:SW}
plays a key role, and is defined as
\begin{align}\label{eq:intro:proj}
 \sfP_{x}\colon (\bbC^2)^{\otimes n} \lsrj \clU_{(n-x,x)} \boxtimes \clV_{(n-x,x)},
\end{align}
where $x$ takes a value in $\{0,1, \ldots, \lfloor n/2\rfloor\}$.
As the set of the projectors 
$\{\sfP_{x}\}_{x}$ forms a projective measurement,
and is called
Schur-Weyl duality measurement.
The aim of this paper is to investigate the behavior of 
the measurement outcome $X$.

First, we assume that the state is given as  
the state $\kb{1^{l} \, 0^{n-l}}$ or
its permutation mixture $\rho_{mix,n,l}$ with $l \le n/2$, 
and they are defined as
\begin{align}\label{eq:intro:1^a}
   \ket{1^l \, 0^{n-l}} 
&\ceq \tket{\overbrace{1 \dotsm 1}^{l} \, \overbrace{0 \dotsm 0}^{k-l}}
 =  \ket{1}^{\otimes l} \otimes \ket{0}^{\otimes (n-l)} \in (\bbC^2)^{\otimes n}, \\
   \rho_{mix,n,l} &\ceq \tbinom{n}{l}^{-1}
 \bigl(\kb{1^l \, 0^{n-l}} + \text{permuted terms}\bigr).
\end{align}
Since the projection $ \sfP_{x}$ is permutation invariant,
we have
\begin{align}\label{NMU}
 p(x|n,l) \ceq \tr \sfP_{x} \kb{1^{l} \, 0^{n-l}}=\tr \sfP_{x} \rho_{mix,n,l}.
\end{align}
This probability can be calculated as
\begin{align}\label{MNER}
 p(x|n,l)
 = \frac{\binom{n}{x} - \binom{n}{x-1}}{\binom{n}{l}}
= \frac{\binom{n}{x}}{ \binom{n}{l}} \frac{n-2x+1}{n-x+1}
 \end{align}
for $x=0,1,\dotsc,l$ 
as follows.
The projection $\binom{n}{l}\rho_{mix,n,l}$ commutes with $\sfP_{x}$ and
satisfies the relation
\begin{align}\label{NJNGT}
\rank
 \Big(\sum_{x=0}^{x'} \sfP_{x}\Big)
\binom{n}{l}\rho_{mix,n,l} \Big(\sum_{x=0}^{x'} \sfP_{x}\Big)
= \binom{n}{x'}
 \end{align}
for $ x'=0,1,\dotsc,l $. 
Then, \eqref{NJNGT} implies \eqref{MNER}.

Next, we discuss the asymptotic behavior when 
the ratio $\frac{l}{n}$ is fixed to $\mu$.
Let us denote by $\Prb_n$ the probability for the distribution 
defined in \eqref{MNER}.

\begin{thm}\label{thm:II:bd=0:A}
When $\mu <\frac{1}{2}$,
we have
\begin{align}\label{eq:II:bd=0:BA1}
 \lim_{n \to \infty} \Prb_n\bigl[X=n\mu-i \bigr]
=\Bigl(\frac{\mu}{1-\mu}\Bigr)^i \, \frac{1-2\mu}{1-\mu}.
\end{align}
That is, the variable $n \mu -X$ asymptotically obeys 
the geometric distribution with parameter $\frac{\mu}{1-\mu}$.
We also have
\begin{align}\label{eq:II:bd=0:AU12}
 \bbE[X] = n \mu- \frac{\mu}{1-2\mu} + o(n^0), \quad 
 \bbV[X] = \frac{\mu(1-\mu)}{(1-2\mu)^2} + o(n^0). 
\end{align}
\end{thm}

\begin{proof}
First, we consider the case with $\mu<\frac{1}{2}$, i.e., $l<n-l$.
The probability \eqref{MNER} is rewritten as
\begin{align}\label{eq:bd=0:p2}
 p(l-i) = \frac{l(l-1)\cdots (l-i+1)}{(n-l+1)(n-l+2)\cdots (n-l+i)}
 \frac{n-2 l+2i+1}{n-l+i+1}.
\end{align}
Taking the limit $n \to \infty$, we obtain \eqref{eq:II:bd=0:BA1}.
{}To prove \eqref{eq:II:bd=0:AU12}, it is enough to show
\begin{align}\label{eq:bd=0:AU34}
 \lim_{n \to \infty} \bbE[ X-n \mu+\tfrac{\mu}{1-2\mu}  ] = 0, \quad 
 \lim_{n \to \infty} \bbE\bigl[\bigl(X-n \mu+\tfrac{\mu}{1-2\mu}\bigl)^2 \bigr]
 = \frac{\mu(1-\mu)}{(1-2\mu)^2}. 
\end{align}
Denoting $\phi \ceq \frac{\mu}{1-\mu}$, we have $0 \le \phi<1$, and can rewrite 
\eqref{eq:II:bd=0:BA1} as $\lim_{n \to \infty} p(n\mu-i)=\phi^i(1-\phi)$.
Then, we can deduce the following relations:
\begin{align}\label{eq:bd=0:AU56}
\begin{split}
  \lim_{R \to \infty} \lim_{n \to \infty} 
  \sum_{i=0}^R \bigl(i-\tfrac{\mu}{1-2\mu}\bigr) p(n\mu-i) 
&=\lim_{R \to \infty}-(R+1) \phi^{R+1} = 0,  \\
  \lim_{R \to \infty}\lim_{n \to \infty} 
  \sum_{i=0}^R \bigl(i-\tfrac{\mu}{1-2\mu}\bigr)^2 p(n\mu-i) 
&=\lim_{R \to \infty} \Bigl(\tfrac{\phi(1-\phi^{R+1})}{(1-\phi)^2}
  -(R+1)^2 \phi^{R+1}\Bigr) \\
&= \tfrac{\phi}{(1-\phi)^2} = \tfrac{\mu(1-\mu)}{(1-2\mu)^2}.
\end{split}
\end{align}
On the other hand, using \eqref{eq:bd=0:p2}, we have $p(l-i)\le \phi^i$. Hence, 
\begin{align} \label{eq:bd=0:AU78}
\begin{split}
 \lim_{R \to \infty} \lim_{n \to \infty} 
 \sum_{i=R+1}^{n} \abs{i- \tfrac{\mu}{1-2\mu}} p(n\mu-i) & \le
 \lim_{R \to \infty} \sum_{i=R+1}^{\infty} 
 \abs{i- \tfrac{\phi}{1-\phi}} \phi^i = 0, \\
 \lim_{R \to \infty} \lim_{n \to \infty} 
 \sum_{i=R+1}^{n} \bigl(i- \tfrac{\mu}{1-2\mu}\bigr)^2 p(n \mu-i) &=
 \lim_{R \to \infty} \sum_{i=R+1}^{\infty} 
 \bigl(i- \tfrac{\mu}{1-2\mu}\bigr)^2 \bigl(\tfrac{\mu}{1-\mu}\bigr)^i = 0. 
\end{split}
\end{align}
Therefore, combining \eqref{eq:bd=0:AU56} and \eqref{eq:bd=0:AU78}, 
we obtain \eqref{eq:bd=0:AU34}.
\if0
Next, we consider the case with $\xi>\frac{1}{2}$, i.e., $m>n-m$.
The probability \eqref{eq:bd=0:p1} 
is rewritten as
\begin{align}\label{eq:bd=0:p3}
 p\bigl((n-m)-i\bigr) = 
 \frac{m(m-1) \cdots (n-m-i+1)}{(n-m+1)(n-m+2) \cdots (m+i)}
 \frac{2m-n+2i+1}{m+i+1}.
\end{align}
The remaining part can be shown similarly as the case $\xi < \frac{1}{2}$.
\fi
\end{proof}

\begin{thm}\label{thm:II:bd=0:B}
When $\mu=\frac{1}{2}$,
we have 
\begin{align}\label{eq:II:bd=0:AN1}
 \lim_{n \to \infty} \Prb_n\Bigl[-u \le \frac{X-n \mu}{\sqrt{n}} \le -t\Bigr]
 = \int_t^u 4s e^{-2 s^2} \, d s
\end{align}
for any real $t<u$.
That is, $\frac{X-n \mu}{\sqrt{n}}$ asymptotically obeys the Rayleigh distribution.
We also have 
\begin{align}\label{eq:II:bd=0:NH12}
 \bbE[X] = n \mu- \sqrt{n}\sqrt{\frac{\pi}{8}} + o(\sqrt{n}), \quad 
 \bbV[X] = n\Bigl(\frac{1}{2}-\frac{\pi}{8}\Bigr) + o(n).
\end{align}
\end{thm}

\begin{proof}
Similarly as \eqref{eq:bd=0:p2}, we can rewrite the probability \eqref{MNER} as
\begin{align}\label{eq:bd=0:NK1}
 p(n\mu - \sqrt{n}R) = \frac{2 \sqrt{n}R+1}{\frac{n}{2}+\sqrt{n}R+1}
 \prod_{i=1}^{n \sqrt{R}} \Bigl(\frac{1-\frac{2 (i-1)}{n}}{1+\frac{2 i}{n}} \Bigr).
\end{align}
Hence, using $\log \frac{1-x/n}{1+y/n} = x+y+y^2+o(n^{-2})$, we have 
\begin{align}
&\lim_{n \to \infty}\log \bigl( \sqrt{n} p(n\mu-\sqrt{n}R) \bigr)
 = (\log 4 R) \lim_{n \to \infty} \sum_{i=1}^{R \sqrt{n}} 
    \log \frac{1-\frac{2 (i-1)}{n}}{1+\frac{2 i}{n}} \\
&= (\log 4 R) \lim_{n \to \infty} \frac{1}{\sqrt{n}} \sum_{i=1}^{R \sqrt{n}}
    \Bigl(-\frac{2 (i-1)}{\sqrt{n}}-\frac{2 i}{\sqrt{n}} \Bigr) 
 = (\log 4 R) \int_{0}^R (-4s) \, d s = - 2 R^2 (\log 4 R).
\end{align}
Therefore, we have
\begin{align}
  \lim_{n \to \infty} \Prb_n \bigl[ -u \le \tfrac{X-n \mu}{\sqrt{n}} \le -t \bigr]
&=\lim_{n \to \infty} \sum_{i= t \sqrt{n}}^{ u \sqrt{n}} p(n \mu -i) 
 =\lim_{n \to \infty} \sum_{i= t \sqrt{n}}^{ u \sqrt{n}} 
  \frac{1}{\sqrt{n}} \cdot \frac{4i}{\sqrt{n}} e^{-2 (i/\sqrt{n})^2} 
 \nonumber \\
&=\int_t^u 4s e^{-2 s^2} \, d s,
  \label{eq:bd=0:NA1}
\end{align}
which implies \eqref{eq:II:bd=0:AN1}.
The distribution defined by \eqref{eq:bd=0:NA1} is the Rayleigh distribution
with mean $\sqrt{\frac{\pi}{8}}$ and variance $\frac{1}{2}-\frac{\pi}{8}$.
Then, due to the same reason as the proof of \eqref{eq:II:bd=0:AU12} in 
Theorem \ref{thm:II:bd=0:A}, it is sufficient 
for the proof of \eqref{eq:II:bd=0:NH12} to show the following:
\begin{align}\label{eq:II:bd=0:AH6}
 \lim_{R \to \infty}\lim_{n \to \infty} \sum_{i=\sqrt{n}R}^{n} 
 \abs{\tfrac{i- \sqrt{n} \sqrt{\frac{\pi}{8}}}{\sqrt{n}}} \, p(n \mu +i) 
 =0, \quad 
 \lim_{R \to \infty}\lim_{n \to \infty} 
 \sum_{i=\sqrt{n}R}^{n} 
 \Big( \tfrac{i- \sqrt{n}\sqrt{\frac{\pi}{8}}}{\sqrt{n}}\Big)^2
 p(n \mu +i) =0.
\end{align}
Since the second statement implies the first, we will show only the second one.

Since $x-x^2\le \log (1+x) \le x$ for $x>0$,
we have $\log \frac{1-x}{1+y} \le x+y+y^2$ for $x,y>0$.
Hence, \eqref{eq:bd=0:NK1} implies
\begin{align}
\begin{split}
 \frac{\frac{n}{2}+\sqrt{n}R+1}{2 \sqrt{n}R+1} \log p(n \mu -j) 
 \le &\sum_{i=1}^j \Bigl(-\frac{2(i-1)}{n}-\frac{2i}{n}+\Bigl(\frac{2i}{n}\Bigr)^2\Bigr) \\
&= -\frac{2j^2}{n}+\frac{2j (j+1)(2j+1)}{ 3n^2}.
\end{split}
\end{align}
Then, as for the second statement in \eqref{eq:II:bd=0:AH6}, we have 
\begin{align}
 \sum_{i=\sqrt{n}R}^n
 \Bigl( \frac{i-\sqrt{n}\sqrt{\frac{\pi}{8}}}{\sqrt{n}}\Bigr)^2 p(n \mu +i)
 \le \sum_{i=\sqrt{n}R}^{n} \frac{2 i+1}{\frac{n}{2}+i+1}
 \Big( \frac{i- \sqrt{n}\sqrt{\frac{\pi}{8}}}{\sqrt{n}}\Big)^2
 e^{-\frac{2j^2}{n}+\frac{2j (j+1)(2j+1)}{ 3n^2}}.
\end{align}
Hence, 
\begin{align}
 \lim_{n \to \infty} \sum_{i=\sqrt{n}R}^n 
 \Bigl(\frac{i-\sqrt{n}\sqrt{\frac{\pi}{8}}}{\sqrt{n}}\Bigr)^2 p(n \mu +i) \le 
 \int_{R}^\infty 4 s \Bigl( s- \sqrt{\frac{\pi}{8}}\Bigr)^2 e^{-2s^2} \, d s < \infty,
\end{align}
which implies the second statement in \eqref{eq:II:bd=0:AH6}.
Thus the proof is completed.
\end{proof}

\begin{rmk}
The asymptotic expectation in \eqref{eq:II:bd=0:AU12} and \eqref{eq:II:bd=0:NH12}
can also be deduced from the integral expression of $\bbE[X]$. 
By a straightforward calculation using \eqref{MNER}, we have 
\begin{align}
 \bbE[X] = 
     l - \frac{1}{\binom{n}{l}} \sum_{x=0}^{  l-1} \binom{n}{x}. 
\end{align}
We can further rewrite it by the formula of partial sum of 
binomial coefficients in terms of incomplete beta function \cite[p.52, (3-3)]{WB} as
\begin{align}\label{eq:MN=0:E}
 \bbE[X] = l - 2^n l \int_0^{1/2} t^{n-l} (1-t)^{l-1} \, d t 
  = n \mu - n \mu \int_0^1 \frac{s^{n(1-\mu)} (2-s)^{n \mu}}{2-s} d s.
\end{align}
The integration part
\begin{align}
 \int_0^1 \frac{s^{n(1-\mu)} (2-s)^{n \mu} }{2-s} d s
 = \int_0^1 \frac{1}{2-s} e^{n f_1(s)} \, d s, \quad 
 f_1(s) \ceq (1-\mu) \log s+ \mu \log(2-s)
\end{align}
can be estimated by Laplace's method, and we have 
\begin{align}
 \bbE[X] = n \mu - 
 \begin{cases} \frac{\mu}{1-2\mu} + o(n^{0}) & (\xi<1/2) \\ 
 \sqrt{n} \sqrt{\frac{\pi}{8}} + o(\sqrt{n}) & (\xi=1/2) 
 \end{cases},
\end{align}
which recovers the expectation part of 
\eqref{eq:II:bd=0:AU12} and \eqref{eq:II:bd=0:NH12}.
\end{rmk}

\section{Second setting}
\subsection{Formulation}\label{S2-2-1}
Next, we focus on a Dicke state, which is a typical invariant state because 
it is given as the permutation invariant state 
with fixed weights $N$ and $M$ as
\begin{align}\label{eq:0:Dicke}
 \ket{\Xi_{N+M,M}} \ceq \tbinom{N+M}{M}^{-1/2} 
 \bigl(\ket{1^M \, 0^N} + \text{permuted terms}\bigr)
 \in (\bbC^2)^{\otimes (N+M)}.
\end{align}
When the state is $ \ket{\Xi_{n,l}}$,
the outcome $X$ of Schur-Weyl duality measurement 
is always zero due to the symmetry.
Then, we discuss the behavior of the outcome $X$ when 
the state is given as the tensor product state
$\rho_{mix,k,l}\otimes \kb{\Xi_{n-k,m-l}}$.
For brevity, we denote the range of the parameters $n,m,k,l$ by
\begin{align}\label{eq:intro:clN}
 \clN \ceq \{(n,m,k,l) \in \bbZ_{\ge 0}^4 \mid
  m, k \le n, \ m+k-n \le l \le m \wedge k\}.
\end{align}
Our focus is the pmf defined as
\begin{align}\label{eq:intro:p(x)}
  p(x \midd n,m,k,l) 
\ceq& \tr \sfP_{x} \rho_{mix,k,l}\otimes \kb{\Xi_{n-k,m-l}} \nonumber \\
  = & \tr \sfP_{x} \kb{1^{l} \, 0^{k-l}}\otimes \kb{\Xi_{n-k,m-l}}.
\end{align}
By using the state
\begin{align} 
 \ket{\Xi_{n,m|k,l}} \ceq \ket{1^l \, 0^{k-l}} \otimes \ket{\Xi_{n-k,m-l}}
 \in (\bbC^2)^{\otimes n},
\end{align}
the pmf $ p(x \midd n,m,k,l) $ has the form
\begin{align} 
p(x \midd n,m,k,l)= \bra{\Xi_{n,m|k,l}} \sfP_{(n-x,x)} \ket{\Xi_{n,m|k,l}}.
\end{align}

The setting implies a few but basic properties of $p(x)$. 
First, we can switch $\ket{0} \lrto \ket{1}$ in the system, 
and $p(x)$ inherits the symmetry
\begin{align}\label{eq:intro:sym}
 p(x \midd n,m,k,l) = p(x \midd n,n-m,k,k-l).
\end{align}
Second, since $p(x)$ comes from the decomposition \eqref{eq:intro:SW} 
of unitary representation, $p(x)$ is a probability mass function
for a discrete probability distribution $\{0,1,\dotsc,\lfloor n/2\rfloor\} \to \bbR_{\ge 0}$.

When $m=l$, the Dicke state 
$ \ket{\Xi_{n,m|k,l}}$ is $ \ket{ 0^{n-k}}$. We have
\begin{align}\label{BG1}
  p(x \midd n,m,k,l) 
  = & \tr \sfP_{x} \kb{1^{l} \, 0^{k-l}}\otimes \kb{ 0^{n-k}}
=p(x \midd n,l) ,
\end{align}
which does not depend on $k$.
Similarly, when $n-m-k+l=0$, the Dicke state $ \ket{\Xi_{n,m|k,l}}
$ is $ \ket{ 1^{k-l}}$. We have
\begin{align}\label{BG2}
  p(x \midd n,m,k,l) 
  = & \tr \sfP_{x} \kb{1^{l} \, 0^{k-l}}\otimes \kb{ 1^{k-l}}
=p(x \midd n,k) ,
\end{align}
which does not depend on $m,l$.
Hence, these two cases are reduced to the first setting,
and has redundancy for indices.
Also, when $k=0$, our state is the Dicke state 
$\ket{\Xi_{n,m|k,l}}$ so that the random variable $X$ takes $0$ with probability $1$.  

\subsection{Explicit formulas of $P_{n,m,k,l}$}\label{ss:intro:fin}
This section reviews several formulas for the distribution $P_{n,m,k,l}$ obtained in \cite{HHY1}.

\begin{thm}
\label{thm:Hahn}
Assume $m \le n-m$ and $x \le m$. Then, we have the following two kinds of the representation of the pmf $p(x \midd n,m,k,l)$.
\begin{align}
p(x \midd n,m,k,l) 
=& \frac{\binom{n}{x}}{\binom{n}{m}} \, \frac{n-2x+1}{n-x+1} \, 
 \sum_{i=0}^{M \wedge N} \binom{M}{i} \binom{N}{i} \, \omega_{(n-x,x)}(i), \label{eq:main-formula}\\
= &\binom{n-k}{m-l} \frac{\binom{n}{x}}{\binom{n}{m}} \frac{n-2x+1}{n-x+1} 
 \HG{4}{3}{-x,x-n-1,-M,-N}{-m,m-n,-M-N}{1},\label{NMF}
\end{align}
where $M \ceq m-l$, $N \ceq n-m-k+l$, and 
\begin{align}\label{eq:sph-omega}
 \omega_{(n-x, x)}(i) = \binom{m}{i}^{-1} \binom{n-m}{i}^{-1}
 \sum_{r=0}^{i \wedge x} (-1)^r \binom{x}{r} \binom{m-x}{i-r} \binom{n-m-x}{i-r}.
\end{align}
We also used the standard notation of hypergeometric series \cite{GR}:
\[
 \HG{4}{3}{a_1,a_2,a_3,a_4}{b_1,b_2,b_3}{z} \ceq 
 \sum_{i=0}^\infty \frac{(a_1)_i(a_2)_i(a_3)_i(a_4)_i}{(1)_i(b_1)_i(b_2)_i(b_3)_i} z^i, 
 \quad 
 (a)_i \ceq a(a+1)\cdots(a+i-1).
\]
\end{thm}

Figure \ref{figure-composite} plots of the pmf $p(x \midd n,m,k,l)$ 
for a few examples of $(n,m,k,l)$. 

\begin{figure}[t]
	\centering
	\includegraphics[width=8cm]{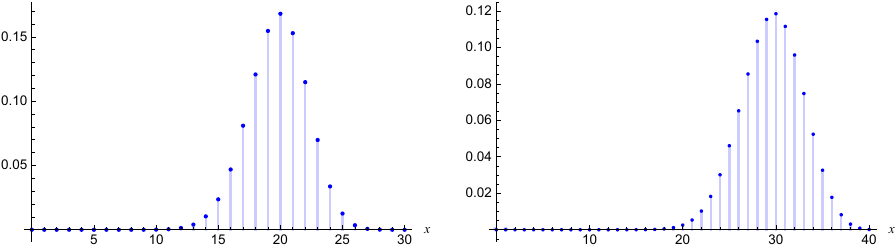}
\caption{Plots of $p(x \midd n,m,k,l)$ with $(n,m,k,l)=(100,30,40,20)$ %
 in left and $(n,m,k,l)=(100,40,60,30)$ in right.}
	\label{figure-composite}
\end{figure}

This distribution has simple forms in several special cases.
In the case $k=l$, we have the following proposition as the special case of \eqref{NMF}

\begin{prp}\label{prp:k=l}
Under the assumption $m \le n-m$ and $k=l$, we have 
\begin{align}
 p(x) = \frac{\binom{n}{x}}{\binom{n}{m}} \frac{\binom{l}{x}}{\binom{m}{x}}
        \frac{n-2x+1}{n-x+1} \binom{n-l-x}{m-l} 
\end{align} 
for $x \le l \wedge m$, and $p(x)=0$ otherwise.
\end{prp}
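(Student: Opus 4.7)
The plan is to start from Racah presentation (Theorem \ref{thm:Racah}) and exploit a parameter cancellation caused by the hypothesis $k=l$. Under this assumption one has $M=m-l$ and $N=n-m-k+l=n-m$, so the upper parameter $-N=m-n$ in the $\hg{4}{3}$-series of Theorem \ref{thm:Racah} coincides with the lower parameter $m-n$. Cancelling them reduces the series to a terminating $\hg{3}{2}$, giving
\begin{align*}
 p(x) = \binom{n-l}{m-l}\frac{\binom{n}{x}}{\binom{n}{m}} \frac{n-2x+1}{n-x+1}
        \HG{3}{2}{-x,\, x-n-1,\, -(m-l)}{-m,\, -(n-l)}{1}.
\end{align*}

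Next, I would verify that this $\hg{3}{2}$-series is Saalsch\"utzian (balanced): the identity $(-x)+(x-n-1)+(-(m-l))+1 = (-m)+(-(n-l))$ holds automatically, which is exactly the balancing condition. Saalsch\"utz's summation formula (see \cite[Thm.~2.2.3]{AAR} or \cite[(1.7.1)]{GR}), applied with $n=x$, $a=x-n-1$, $b=-(m-l)$, $c=-m$, then evaluates the series in closed form as
\begin{align*}
 \frac{(n-m-x+1)_x\,(-l)_x}{(-m)_x\,(n-l-x+1)_x}.
\end{align*}

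The remaining step is bookkeeping: convert each rising factorial into ordinary factorials, multiply by $\binom{n-l}{m-l} = (n-l)!/((m-l)!(n-m)!)$, and watch the factors $(n-m)!$, $(n-l)!$ together with the two occurrences of $(-1)^x$ cancel. What survives regroups cleanly as $\binom{n-l-x}{m-l}=\frac{(n-l-x)!}{(m-l)!(n-m-x)!}$ multiplied by $\binom{l}{x}/\binom{m}{x}=\frac{l!(m-x)!}{(l-x)!\,m!}$, which matches the asserted formula. The vanishing assertion is immediate: since $l \le m \le n-m$, Corollary \ref{cor:van} gives $p(x)=0$ for $x>m\wedge(n-m)\wedge k = l$, and the claimed formula itself also vanishes there because $\binom{l}{x}=0$.

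The only genuinely non-mechanical step is spotting the $\hg{4}{3}\to\hg{3}{2}$ reduction and recognizing its Saalsch\"utzian character; all subsequent manipulations are routine conversions between Pochhammer symbols and binomial coefficients, so I do not anticipate any conceptual obstacle.
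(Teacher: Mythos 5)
Your proposal is correct and follows essentially the same route as the paper's first proof of this proposition: both reduce the Racah presentation to a balanced terminating $\hg{3}{2}$ via the cancellation $-N=m-n$ forced by $k=l$, and both then close the sum with the Pfaff--Saalsch\"utz formula (the paper states it in Gamma-function form, you in Pochhammer form, but these are identical). The bookkeeping you describe checks out, and the appeal to Corollary \ref{cor:van} for the vanishing statement is legitimate since that corollary is established earlier in the paper.
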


\if0
Next, we treat the case when $M \ceq m-l$ or $N \ceq n-m-k+l$ is equal to $0$.
In each case, the vector $\ket{\Xi_{n,m|k,l}}$ is given by 
\begin{align}
 \ket{\Xi_{n,m|k,m}} = \ket{\Xi_{n,m|k,m+k-n}} = \ket{1^m 0^{n-m}}.
\end{align}
Thus, the pmf $p(x)$ is independent of the parameters $k$ and $l$.

\begin{lem}\label{lem:MN=0}
Assume $m-l=0$ or $n-m-k+l=0$. Then, we have 
\begin{align}\label{eq:MN=0:p}
 p(x) = \frac{\binom{n}{x}}{\binom{n}{m}} \frac{n-2x+1}{n-x+1}
      = \frac{\binom{n}{x} - \binom{n}{x-1}}{\binom{n}{m}}
\end{align}
for $x=0,1,\dotsc,m \wedge (n-m)$.
\end{lem}
\fi

Further, based on the formula \eqref{NMF},
we have the following recursion formula,
which turns out to be a key lemma in the asymptotic analysis of $p(x)$.

\begin{lem}[{Three-term recurrence relation of $p(x)$}]
\label{lem:intro:rec}
In the case $m \le n-m$ and $0 \le x \le m$, we have
\begin{align}
\begin{split}
 a_x \frac{n-2x-1}{n-x} \binom{n}{x+1}^{-1} p(x+1) 
&-(a_x+c_x-M N) \frac{n-2x+1}{n-x+1} \binom{n}{x}^{-1} p(x) \\
&+c_x \frac{n-2x+3}{n-x+2} \binom{n}{x-1}^{-1} p(x-1) = 0
\end{split}
\end{align}
with $M \ceq m-l$ and $N \ceq n-m-k+l$.
Here we regard the third term as $0$ in the case $x=0$, 
and regard the first term as $0$ in the case $x=m$.
The coefficients $a_x$ and $c_x$ are given by 
\begin{align}
 a_x &\ceq \frac{(m-x)(n-m-x)(n-k-x)(n-x+1)}{(n-2x)(n-2x+1)}, \\
 c_x &\ceq \frac{x(x-k-1)(m-x+1)(n-m-x+1)}{(n-2x+1)(n-2x+2)}.
\end{align}
\end{lem}

In addition, we have the following relation between the variance and expectation.

\begin{thm}\label{thm:Cas}
Let $X$ be a random variable distributed by $p(x) = p(x \midd n,m,k,l)$.
Then, the expectation $\bbE[X]$ and the variance $\bbV[X]$ are related as 
\begin{align}
 \frac{\bbV[X]}{n^2} = 
 \frac{\bbE[X]}{n}\left(1-\frac{\bbE[X]}{n}\right)+\frac{\bbE[X]}{n^2} - \eta,
\end{align}
where we put
\begin{align}\label{eq:Cas:eta}
 \eta \ceq \frac{n(n-m)-(m-l)(n-m-k+l)}{n^2}.
\end{align}
\end{thm}

\if0
Our distribution $P_{n,m,k,l}$ is defined abstractly, i.e., through 
the irreducible decomposition of the unitary representation $(\bbC^2)^{\otimes n}$,
and it is non-trivial to find properties of $p(x)$ as a function of $x$.
What we should do first is to obtain some explicit description of $p(x)$,
which turn out to be very non-trivial by virtue of 
the choice of our vector $\ket{\Xi_{n,m|k,l}}$ .
%
%
The first group of our main results is concerned with this description problem:
\begin{itemize}[nosep]

\item 
Presentation of $p(x)$ as a summation of Hahn polynomials 
(Theorem \ref{thm:intro:Hahn}).

\item
Presentation of $p(x)$ by a single Racah polynomial
(Theorem \ref{thm:intro:Racah}).

\item
Formula of the cdf $P_{n,m,k,l}[X \le x]$ via terminating 
$\hg{4}{3}$-hypergeometric series (Theorem \ref{thm:intro:cdf}).
\end{itemize}
The details will be explained in \S \ref{sss:intro:Hahn} and \S \ref{sss:intro:Racah}.

Theorems \ref{thm:intro:Racah} and \ref{thm:intro:cdf} claim that 
both the pmf and cdf are given by terminating $\hg{4}{3}$-hypergeometric series.
In this sense we may say that our distribution $P_{n,m,k,l}$ 
has a strong $\hg{4}{3}$-hypergeometric nature.
To the best of our knowledge, such a distribution has never appeared in literature.

In the course of the derivation of these theorems, we also obtain:
\begin{itemize}
\item 
(Seemingly) new summation formulas of terminating 
$\hg{3}{2}$- and $\hg{4}{3}$-hypergeometric series
(Corollaries \ref{cor:intro:p:hsf} and \ref{cor:intro:hsf}).
\end{itemize}
We could not find or derive them from known summation formulas
in the literature such as \cite{S} and \cite{GR}.
Let us also mention one of the interesting features:
In Corollary \ref{cor:intro:hsf}, the case $x =m$ is equivalent to 
the identity $\sum_{x=0}^{m} p(x)=1$.
The latter identity is trivial in representation/probability theoretic viewpoint,
but gives a non-trivial summation formula of hypergeometric series.

\subsubsection{Presentation as summation of Hahn polynomials}
\label{sss:intro:Hahn}

In the main text, we give several expressions of the pmf
\begin{align}
 p(x) = p(x \midd n,m,k,l) \ceq 
 \bra{\Xi_{n,m|k,l}} \sfP_{(n-x,x)} \ket{\Xi_{n,m|k,l}}.
\end{align}
The first expression is derived in \S \ref{ss:Hahn}
with the help of the representation theory of $\frS_n$.
The result is:

\begin{thm}[{Hahn presentation, Theorem \ref{thm:Hahn}}]\label{thm:intro:Hahn}
Assume $m \le n-m$. Then, we have 
\begin{align}\label{eq:intro:Hahn}
 p(x) = \frac{\binom{n}{x}}{\binom{n}{m}} \, \frac{n-2x+1}{n-x+1} \, 
 \sum_{i=0}^{M \wedge N} \binom{M}{i} \binom{N}{i} \, \omega_{(n-x, x)}(i), 
\end{align}
where $M \ceq m-l$, $N \ceq n-m-k+l$ and 
\begin{align}\label{eq:intro:omega}
 \omega_{(n-x, x)}(i) = \binom{m}{i}^{-1} \binom{n-m}{i}^{-1}
 \sum_{r=0}^{i \wedge x} (-1)^r \binom{x}{r} \binom{m-x}{i-r} \binom{n-m-x}{i-r}.
\end{align}
\end{thm}

See Figure \ref{fig:intro:pmf} for some plots of the pmf $p(x)$.
%

The function $\omega_{(n-x,x)}$ in Theorem \ref{thm:intro:Hahn} is in fact the zonal
spherical function on the Gelfand pair $(G,K)=(\frS_n,\frS_m \times \frS_{n-m})$ 
belonging to the irreducible character $\chi^{(n-x,x)}$,
and $\omega_{(n-x,x)}(i)$ is the value at the double coset $K g K$ whose distance is $i$.
See \S \ref{sss:Hora:H} for the detail.

Let us briefly explain the naming ``Hahn presentation" of Theorem \ref{thm:intro:Hahn}.
The details will be explained in \S \ref{sss:Hora:H}, Remark \ref{rmk:E=H=dH}.
Regarding the binomial coefficient $\binom{x}{r}$ as a polynomial of $x$,
we can consider the alternating sum 
\begin{align}
 E_i(x) \ceq 
 \sum_{r=0}^{i \wedge x} (-1)^r \binom{x}{r} \binom{m-x}{i-r} \binom{n-m-x}{i-r}
\end{align}
in \eqref{eq:intro:omega} as a polynomial of $x$.
The polynomial $E_i(x)$ is called Eberlein polynomial in \cite{De1}, and can be
rewritten by a $\hg{3}{2}$-hypergeometric series \cite{De2}, \cite[III.2]{BI}.
Using hypergeometric notation in \S \ref{ss:intro:ntn}, we have 
\begin{align}\label{eq:intro:omega=3F2}
 \omega_{(n-x, x)}(i) = \frac{ E_i(x) }{\binom{m}{i} \binom{n-m}{i}}
  = \HG{3}{2}{-i,-x,x-n-1}{-m,m-n}{1}
 \ceq \sum_{r \ge 0} \frac{(-i,-x,x-n-1)_r}{(1,-m,m-n)_r}.
\end{align}
The above $\hg{3}{2}$-hypergeometric polynomial $E_i(x)$ is 
equal to Hahn and dual Hahn polynomials \cite{De2}, \cite[III.2]{BI},
which will be also reviewed in Remark \ref{rmk:E=H=dH}.
Using the symbols in \cite{KLS}, we have 
\begin{align}\label{eq:intro:EQR}
 \tbinom{m}{i}^{-1} \tbinom{n-m}{i}^{-1} E_i(x) 
 = Q_x(i;m-n-1,-m-1,m) = R_i(\lambda(x);m-n-1,-m-1,m),
\end{align}
where $Q_x(i)$ denotes Hahn polynomial of degree $x$ and variable $i$,
and $R_i(\lambda(x))$ denotes dual Hahn polynomial of degree $i$ and variable
$\lambda(x) \ceq x(x-n-1)$.
The families of Hahn and dual Hahn polynomials enjoy orthogonality, sitting in 
the Askey scheme of hypergeometric orthogonal polynomials (Figure \ref{fig:Askey}).

\subsubsection{Presentation via Racah polynomial}\label{sss:intro:Racah}

Hahn presentation of the pmf $p(x)$ in Theorem \ref{thm:intro:Hahn}
is a double sum, and is not useful for analysis.
For example, it is difficult to check the trivial identity 
$\sum_{x=0}^{\fl{n/2}} p(x)=1$ using the double sum formula.
By some hypergeometric calculation, we can rewrite this double sum by a single sum.
The result is:

\begin{thm}[{Racah presentation, Theorem \ref{thm:Racah}}]\label{thm:intro:Racah}
Assume $m \le n-m$ and $x \le m$. Then, we have 
\begin{align}
 p(x \midd n,m,k,l) = 
 \binom{n-k}{m-l} \frac{\binom{n}{x}}{\binom{n}{m}} \frac{n-2x+1}{n-x+1} 
 \HG{4}{3}{-x,x-n-1,-M,-N}{-m,m-n,-M-N}{1}
\end{align}
with $M \ceq m-l$ and $N \ceq n-m-k+l$.
\end{thm}

Note that the symmetry \eqref{eq:intro:sym} of $p(x)$ is manifest in this presentation,
since the map 
$(n,m,k,l) \mapsto (n,n-m,k,k-l)$ is equivalent to $(m,n-m,M,N) \mapsto (n-m,m,N,M)$.

The terminating $\hg{4}{3}$-hypergeometric series appearing 
in Theorem \ref{thm:intro:Racah} is a Racah polynomial,
which is the origin of the naming ``Racah presentation".
Using the symbol in \cite{KLS} (see Remark \ref{rmk:Racah}), we have
\begin{align}\label{eq:intro:Racah}
 \HG{4}{3}{-x,x-n-1,-M,-N}{-m,m-n,-M-N}{1} = 
 R_x(\lambda(M); -m-1,-n+m-1,-n+k-1,0),
\end{align}
where $R_x(\lambda(M))$ denotes Racah polynomial of degree $x$ 
and variable $\lambda(M) \ceq -M N$.
Similarly to Hahn and dual Hahn polynomials, 
Racah polynomials also form a family of hypergeometric orthogonal polynomials, 
and sit in the top line of Askey scheme (Figure \ref{fig:Askey}).
See \cite{KLS} for more information, 
some of which will be recalled in \S \ref{ss:Racah}.

Our proof of Theorem \ref{thm:intro:Racah} is just to rewrite
the right hand side of \eqref{eq:intro:Hahn} in Theorem \ref{thm:intro:Hahn}
using the hypergeometric presentation \eqref{eq:intro:omega=3F2} of $\omega_{(n-x,x)}(i)$.
However, it is a non-trivial calculation, and yields 
the following identity of hypergeometric series:

\begin{cor}[{Equivalent to Theorem \ref{thm:intro:Racah}}]\label{cor:intro:p:hsf}
Assume $m \le n-m$ and $x \le m$.
Then, we have 
\begin{align}\label{eq:intro:pi:hsf}
\begin{split}
 \sum_{i=0}^{M \wedge N} \binom{M}{i} \binom{N}{i} \, 
 &\HG{3}{2}{-i,-x,x-n-1}{-m,m-n}{1} \\
= \binom{n-k}{m-l} \, &\HG{4}{3}{-x,x-n-1,-M,-N}{-m,m-n,-M-N}{1}
\end{split}
\end{align}
with $M \ceq m-l$ and $N \ceq n-m-k+l$.
\end{cor}

The identity \eqref{eq:intro:pi:hsf} seems to be 
a new kind of hypergeometric summation formula,
on which we give some comments in the next paragraph.

Racah presentation of $p(x)$ in Theorem \ref{thm:intro:Racah} is more suitable 
for analysis than Hahn presentation in Theorem \ref{thm:intro:Hahn} from 
several points of view.
One of them is the simpleness as already mentioned.
Another is that we may use various formulas on Racah polynomials as orthogonal polynomials 
such as orthogonal relations, generating functions, and recurrence relations.
We have one more expectation:
Note that Racah polynomials are balanced hypergeometric series, i.e.,
the parameters satisfy $(-x)+(x-n-1)+(-M)+(-N)+1 = (-m)+(m-n)+(-M-N)$.
There often happens that a summation of such balanced series is equal to another 
hypergeometric series, and the resulting formula is called 
a (hypergeometric) summation formula.
See \cite[Chap.~2,3]{AAR} and \cite[\S 2.3]{S} for the detail, 
and also \cite[Chap.~2,3]{GR} for explanation in the spirit of 
basic ($q$-analogue of) hypergeometric series.
A prototypical example of summation formula is the Chu-Vandermonde formula
of the Gauss hypergeometric series (see \cite[(1.2.9)]{GR} for example):
\begin{align}\label{eq:intro:CV1}
 \HG{2}{1}{-a,b}{c}{1} = \frac{(c-b)_a}{(c)_a} \quad (a=0,1,\dotsc)
\end{align}
for any $b \in \bbC$ and $c \in \bbC \setminus \{0,-1,-2,\dotsc\}$.
Rewrite $A \ceq a-1+c$ and $B \ceq -b$, and 
assume $A,B$ are non-negative integers.
Then, \eqref{eq:intro:CV1} is equivalent to the following classical equality:
\begin{align}\label{eq:intro:CV2}
 \sum_{r \ge 0} \binom{A}{a-r} \binom{B}{r} = \binom{A+B}{a}.
\end{align}
The Chu-Vandermonde formulas \eqref{eq:intro:CV1} and \eqref{eq:intro:CV2} 
will be used repeatedly in the main text.

We may expect that such summation formulas are useful for the analysis of $p(x)$.
As an illustration, we will show in Theorem \ref{thm:sum=1} the identity 
$\sum_{x = 0}^m p(x)=1$ by direct calculation using Racah presentation of $p(x)$.
Here we note that $p(x)=0$ for $x>m$ by \eqref{eq:intro:van}. 
It is indeed non-trivial from the hypergeometric point of view.
More generally, we will show in the same theorem that the cdf $\sum_{u=0}^x p(u)$ 
can be expressed by a terminating $\hg{4}{3}$-hypergeometric series.
The statements are as follows.

\begin{thm}[{Theorem \ref{thm:sum=1}}]\label{thm:intro:cdf}
Assume $m \le n-m$, and denote $M \ceq m-l$ and $N \ceq n-k-m+l$.
Then, for $x=0,1,\dotsc,m$, the cdf is given by 
\begin{align}\label{eq:intro:cdf}
 P_{n,m,k,l}[X \le x] = 
 \binom{n-k}{m-l} \frac{\binom{n}{x}}{\binom{n}{m}}
 \HG{4}{3}{-x,x-n,-M,-N}{-m,m-n,-M-N}{1}.
\end{align}
\end{thm}

See Figure \ref{fig:intro:cdf} for some plots of the cdf 
computed by the above formula.
By Theorem \ref{thm:intro:Racah}, the identity \eqref{eq:intro:cdf} 
is equivalent to the following statement.

\begin{cor}[{Equivalent to Theorem \ref{thm:intro:cdf}}]\label{cor:intro:hsf}
Assume $m \le n-m$ and $x \le m$. Then, we have 
\begin{align}\label{eq:intro:hsf}
\begin{split}
 \sum_{u=0}^x \binom{n}{u} \frac{n-2u+1}{n-u+1}
&\HG{4}{3}{-u,u-n-1,-M,-N}{-m,m-n,-M-N}{1} \\
 = \binom{n}{x} &\HG{4}{3}{-x,x-n,-M,-N}{-m,m-n,-M-N}{1}.
\end{split}
\end{align}
\end{cor}

The identity \eqref{eq:intro:hsf} seems to be another new kind of 
hypergeometric summation formula (recall \eqref{eq:intro:pi:hsf}).
Specializing the parameters, we can obtain some interesting formulas.
For example, setting $M=0$ or $N=0$ in \eqref{eq:intro:hsf}, we obtain
\begin{align}\label{eq:intro:MN=0}
 \sum_{u=0}^x \binom{n}{u} \frac{n-2u+1}{n-u+1} = \binom{n}{x},
\end{align}
which is checked directly by 
$\binom{n}{u} \frac{n-2u+1}{n-u+1}=\binom{n}{u}-\binom{n}{u-1}$.
See also \S \ref{sss:sp:KL=0}, Remark \ref{rmk:k=l}, 
where the case $l=0$ or $k-l=0$ is treated.

In \S\S \ref{ss:sum=1}--\ref{ss:sp}, we derive various results on the properties 
of $p(x)$ which will be used in the study of asymptotic behavior 
under the limit $n \to \infty$ in \S \ref{s:aa}.
For the proof of those properties, we make use of known formulas of 
hypergeometric orthogonal polynomials.
Here let us display only one of them,
which turns out to be a key lemma in the asymptotic analysis of $p(x)$.

\begin{lem}[{Three-term recurrence relation of $p(x)$, Lemma \ref{lem:rec}}]
\label{lem:intro:rec}
In the case $m \le n-m$ and $0 \le x \le m$, we have
\begin{align}
\begin{split}
 a_x \frac{n-2x-1}{n-x} \binom{n}{x+1}^{-1} p(x+1) 
&-(a_x+c_x-M N) \frac{n-2x+1}{n-x+1} \binom{n}{x}^{-1} p(x) \\
&+c_x \frac{n-2x+3}{n-x+2} \binom{n}{x-1}^{-1} p(x-1) = 0
\end{split}
\end{align}
with $M \ceq m-l$ and $N \ceq n-m-k+l$.
The coefficients $a_x$ and $c_x$ are given by 
\begin{align}
 a_x &\ceq \frac{(m-x)(n-m-x)(n-k-x)(n-x+1)}{(n-2x)(n-2x+1)}, \\
 c_x &\ceq \frac{x(x-k-1)(m-x+1)(n-m-x+1)}{(n-2x+1)(n-2x+2)}.
\end{align}
\end{lem}

This recursion can be immediately obtained from Racah presentation 
(Theorem \ref{thm:intro:Racah}) and 
the well-known three-term recurrence relation of Racah polynomials \cite{AAR, KLS}.
As an application of the recursion, we derive in Corollary \ref{cor:van}
the following support condition of our distribution $P_{n,m,k,l}$:
\begin{align}\label{eq:intro:van}
 p(x \midd n,m,k,l) = 0 \ \text{ for } \ x > m \wedge (n-m) \wedge k.
\end{align}
\fi

\section{Asymptotics analysis on distribution $P_{n,m,k,l}$}
\label{S2-2}
We study the asymptotics of our distribution $P_{n,m,k,l}$
under the limit $n \to \infty$. 
We study the following two cases.
\begin{enumerate}
\item
 The limit $n \to \infty$ with $k,l$ and $m/n$ fixed. 
 That is, the mixed state $\rho_{mix,k,l}$ is fixed, and the size of the Dicke state $ \ket{\Xi_{n-k,m-l}}$ increases.
\item
 The limit $n \to \infty$ with $m/n$, $k/n$ and $l/n$ fixed. 
That is, the sizes of the mixed state $\rho_{mix,k,l}$ and the Dicke state $ \ket{\Xi_{n-k,m-l}}$ increase.
\end{enumerate}
We call them Type I and Type II limit, respectively.
The details are given in Subsection \ref{sss:intro:I} and 
Subsections \ref{sss:intro:II}. 
Hereafter we denote by $\Prb_n$ the probability for the distribution 
$P_{n,n \xi,n \kappa,n \alpha}$, and by $X$ a random variable distributed
subject to the distribution.

\subsection{Asymptotic analysis of Type I limit}\label{sss:intro:I}
To address Type I limit, we employ the binomial distribution $B_{\xi,j}$ with $j$ trials
with successful probability $0 \le \xi \le 1$ 
whose pmf is given by $x \mapsto \binom{j}{x} \xi^x (1-\xi)^{j-x}$.
Then, we introduce the distribution
$ P_{R|\xi;k,l} \ceq B_{\xi,k-l} * B_{1-\xi,l}$
where
$*$ denotes the convolution of probability distributions. 
Its pmf is denoted by $q(x \midd \xi;k,l)$
and written as 
\begin{align}\label{eq:I:q2}
 q(x \midd \xi;k,l) = \xi^{l-x} (1-\xi)^{k-l-x} \sum_{u=0 \vee (x-k+l)}^{x \wedge l} 
 \binom{k-l}{x-u} \binom{l}{u} \xi^{2(x-u)} (1-\xi)^{2u}.
\end{align}
Then, as shown in Section \ref{thm:intro:I}, we have the following theorem.

\begin{thm}
\label{thm:intro:I}
In the limit $n \to \infty$ with $k,l$ and $\xi=m/n$ fixed, 
the distribution $P_{n,m,k,l}$ is approximated to 
the above defined distribution
$ P_{R|\xi;k,l}$ 
up to $O(1/n)$.
In particular, the asymptotic expectation is given by
\begin{align}\label{eq:intro:Eq}
 \lim_{n \to \infty} \bbE[X] = (k-l)\xi+l(1-\xi).
\end{align}
\end{thm}

In fact, it is known as the law of small numbers 
 that the distribution $B_{c/n,n}$ converges to a Poisson distribution
 as $n$ goes to infinity with a fixed number $c$.
In Type I limit, $k$ and $l$ are fixed, and correspond to 
the fixed number $c$ of the law of small numbers. 
That is, we can consider In Type I limit as 
the law of small numbers in our setting.

\if0
\begin{rem}
Since the limit pmf $q(x \midd \xi,k,l)$ is a convolution of two binomial distributions,
it is a polynomial of $\xi$,
and this property was used in the paper \cite{HY}.
\end{rem}
\fi



\subsection{Asymptotic analysis of type II limit}\label{sss:intro:II}
We discuss Type II limit.
Using the parametrization $\xi=m/n$ in the previous Theorem \ref{thm:intro:I}, 
we can label the fixed ratios as
\begin{align}\label{eq:intro:xkl}
    \xi = \frac{m}{n}, \quad 
 \kappa = \frac{k}{n}, \quad 
 \tau = \frac{l}{n}-  \xi \kappa.
\end{align}
Let us also introduce another parametrization of the fixed ratios:
\begin{align}\label{eq:intro:abcd}
 \alpha = \frac{l}{n},   \quad 
  \beta = \frac{m-l}{n}, \quad 
 \gamma = \frac{k-l}{n}, \quad
 \delta = \frac{n-m-k+l}{n}.
\end{align}
For example, the condition $(n,m,k,l) \in \clN$ in \eqref{eq:intro:clN} is equivalent to 
\begin{align}\label{eq:intro:II:pasmp}
 \alpha, \beta, \gamma,\delta \ge 0. 
\end{align}
These parameters have the following meaning.
$ \xi$ shows the ratio of $\ket{1}$.
$ \kappa$ shows the ratio of qubits in the mixed state $\rho_{mix,k,l}$.
To grasp their meaning of other parameters, we virtually 
introduce two binary random variable $Y',Z'$
such that 
\begin{align}
P(Y'=1,Z'=1)&=\alpha,~
P(Y'=1,Z'=0)=\beta, \\
P(Y'=0,Z'=1)&=\gamma,~
P(Y'=0,Z'=0)=\delta.
\end{align}
In this case, their marginal distributions are given as 
\begin{align}
P(Y'=1)=\xi,~ P(Z'=1)=\kappa.
\end{align}
$\tau$ expresses the correlation $Y'$ and $Z'$ as
\begin{align}
\tau= P(Y'=1,Z'=1)- P(Y'=1)P(Z'=1).
\end{align}
When $Y'$ and $Z'$ are independent, $\tau=0$.
The cases $\beta=0$ and $\delta=0$
correspond to \eqref{BG1} and \eqref{BG2}, and is reduced to the first setting.
The case $\kappa=0$ corresponds to the case when the random variable $X$
takes $0$ with probability $1$.
The parameters in \eqref{eq:intro:abcd} can be written by using the parameters
in \eqref{eq:intro:xkl} as
\begin{align}
\alpha=\xi \kappa +\tau ,~
\beta=\xi (1-\kappa) -\tau, ~
\gamma=(1-\xi) \kappa -\tau ,~
\delta=(1-\xi)(1- \kappa) +\tau .
\end{align}

However, in the following, we employ the set of three parameters $\beta,\delta,\xi \ge 0$, which are also free parameters
under the conditions $ \xi \ge \beta $ and $1-\xi \ge \delta$.
We introduce the fundamental quantities
\begin{align}\label{eq:intro:msD}
      D &\ceq 
4 \beta \delta+ (2\xi-1)^2
=      1-4(\alpha\gamma+\alpha\delta+\beta\gamma)
\\
&=4 (  (1-\xi)\xi(1- \kappa)^2-\tau (1-2\xi)(1- \kappa)-\tau^2)
+ (2\xi-1)^2, \\
    \mu &\ceq \frac{1-\sqrt{D}}{2}.
\end{align}

Then, we have the law of large numbers as follows.

\begin{thm}
\label{thm:intro:E}
Consider the limit $n \to \infty$ with fixed ratios $\alpha=\frac{l}{n}$, 
$\beta =\frac{m-l}{n}$, $\gamma=\frac{k-l}{n}$ and $\delta=\frac{n-m-k+l}{n}$.
Then, for any $\ve \in \bbR_{>0}$, we have 
\begin{align}
 \lim_{n \to \infty} \Prb_n\Bigl[\abs{\tfrac{X}{n}-\mu}>\ve\Bigr]=0.
\end{align}
In particular, the expectation $\bbE[X]$ behaves as 
\begin{align}
 \bbE[X] = n \mu + o(n).
\end{align}
\end{thm}

Figure \ref{fig:intro:cdf} displays the law of large numbers.

\begin{figure}[htbp]
\centering
\includegraphics[width=.96\linewidth]{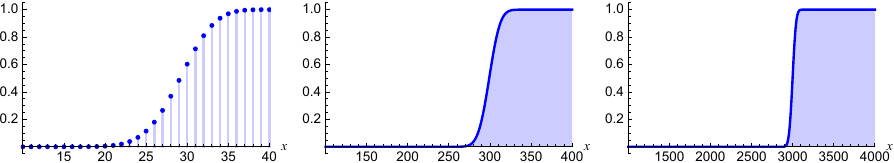}
\caption{Plots of the cdf $P_{n,m,k,l}[X \le x]$ with $(m/n,k/n,l/n)=(0.4,0.6,0.3)$ %
 fixed and $n=100$ (left), $1000$ (middle), $10000$ (right).}
\label{fig:intro:cdf}
\end{figure}


Figure \ref{fig:intro:Edivn} displays the behavior of $\bbE[X]/n$
when $\tau=0$ and $\xi$ is fixed.

\begin{figure}[htbp]
\centering
\includegraphics[width=.96\linewidth]{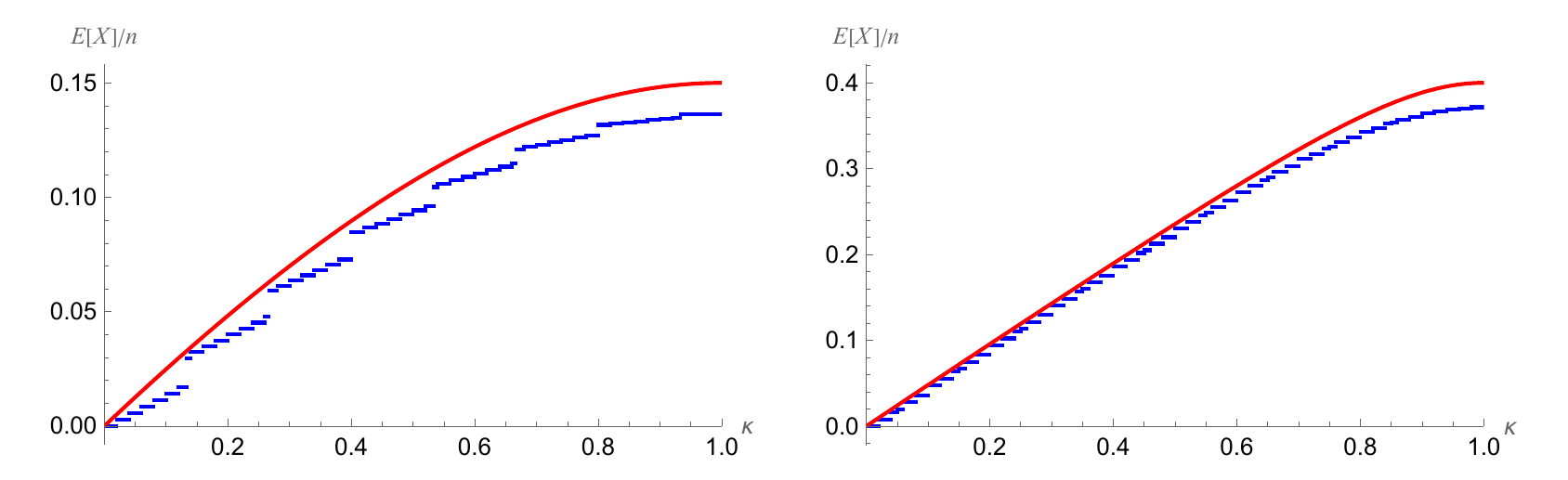}
\caption{Plots of $\bbE[X]/n$ as functions of $\kappa$ with $n=50$, $\tau=0$, 
and $\xi=0.15$ (left), $\xi=0.40$ (right).
The red lines show the $n \to \infty$ limit expectation $\mu$.}
\label{fig:intro:Edivn}
\end{figure}

\if0
Also, we have the following lemma in Type II limit.
\begin{lem}\label{LeH1}
\begin{align}
\min_x p(x \midd n,m,k,l) =\Omega(\frac{1}{n}).
\end{align}
\end{lem}
\red{This type of lemma is also needed for Section \ref{S2-4}.}
\fi

For further analysis, we introduce the variance $\sigma\ge 0$ as
\begin{align}\label{eq:intro:msD2}
 \sigma^2 \ceq 
 \frac{(1-\beta -\delta)\beta \delta}{4 \beta \delta+ (2\xi-1)^2}
={\frac{(\alpha+\gamma)\beta\delta}{D}}
\end{align}
unless
$\xi=\frac{1}{2}$ and
$ \beta \delta=0$.
Also, the variance $\sigma^2$ is zero
if an only if
$\beta +\delta=1$, i.e., $\alpha=\gamma=0$, or
$\xi\neq \frac{1}{2}$ and $ \beta \delta=0$.
Hence, when the relation $\alpha=\gamma=0$ nor $\beta \delta=0$
does not hold, i.e., 
\begin{align}\label{eq:intro:II:asmp}
 \alpha+\gamma=1-\beta- \delta, \beta, \delta > 0,
\end{align}
the variance $\sigma^2$ is well defined and is strictly positive.

In fact, $ \sigma^2$ has a pathological behavior in the neighborhood
$\xi=\frac{1}{2}$ and
$ \beta \delta=0$.
To see this behavior, we consider $ \sigma^{-2}$ as
\begin{align}\label{eq:intro:msD3}
 \sigma^{-2} 
 =\frac{(2\xi-1)^2}{(1-\beta -\delta)\beta \delta}+\frac{4}{1-\beta -\delta}.
\end{align}
When $\xi$ and $ \beta \delta$
approach to  $\frac{1}{2}$ and $0$, respectively,
$ \sigma^{-2} $ does not take one definite value.
That is, we have
\begin{align}
\Big\{
\Big(\frac{(2\xi-1)^2}{(1-\beta -\delta)\beta \delta}+\frac{4}{1-\beta -\delta}\Big)^{-1}
\Big|\xi\to \frac{1}{2}, \beta \delta \to 0\Big\}
= 
\Big[0, \frac{1}{4}\Big].
 \end{align}
In the neighborhood
$\xi=\frac{1}{2}$ and
$ \beta \delta=0$,
$ \sigma^2$ takes all possible values in the interval $[0,1/4]$.

To see the detail of this behavior, we consider two parameterized subsets with two free parameters.
First, we fix $\xi=1/2$. In this case, $\sigma^2$ is simplified as
\begin{align}\label{eq:intro:msD4}
 \sigma^2 
 =\frac{(1-\beta -\delta)\beta \delta}{4 \beta \delta}
 =\frac{(1-\beta -\delta)}{4},
\end{align}
which takes all possible values in $[1/8,1/4]$ under the limit $ \beta \delta\to 0$.

Next, we consider the case when $Y'$ and $Z'$ are independent so that
we have two free parameters $\xi$ and $\kappa$ with the conditions
$0 \le \xi \le 1$ and $0 \le \kappa \le 1$ .
$\sigma^2$ is simplified as
\begin{align}\label{eq:intro:msD5}
 \sigma^2 
 =\frac{(1-\kappa)\kappa^2 \xi(1-\xi)}{ (2\xi-1)^2+4\kappa^2 \xi(1-\xi)}.
\end{align}
Fig. \ref{fig:sigma} plots \eqref{eq:intro:msD5}, which shows that
$ \sigma^2 $ does not take a definite value
when $\xi\to \frac{1}{2}$ and $\kappa\to 0$.

\begin{figure}[htbp]
\centering
\includegraphics[width=.96\linewidth]{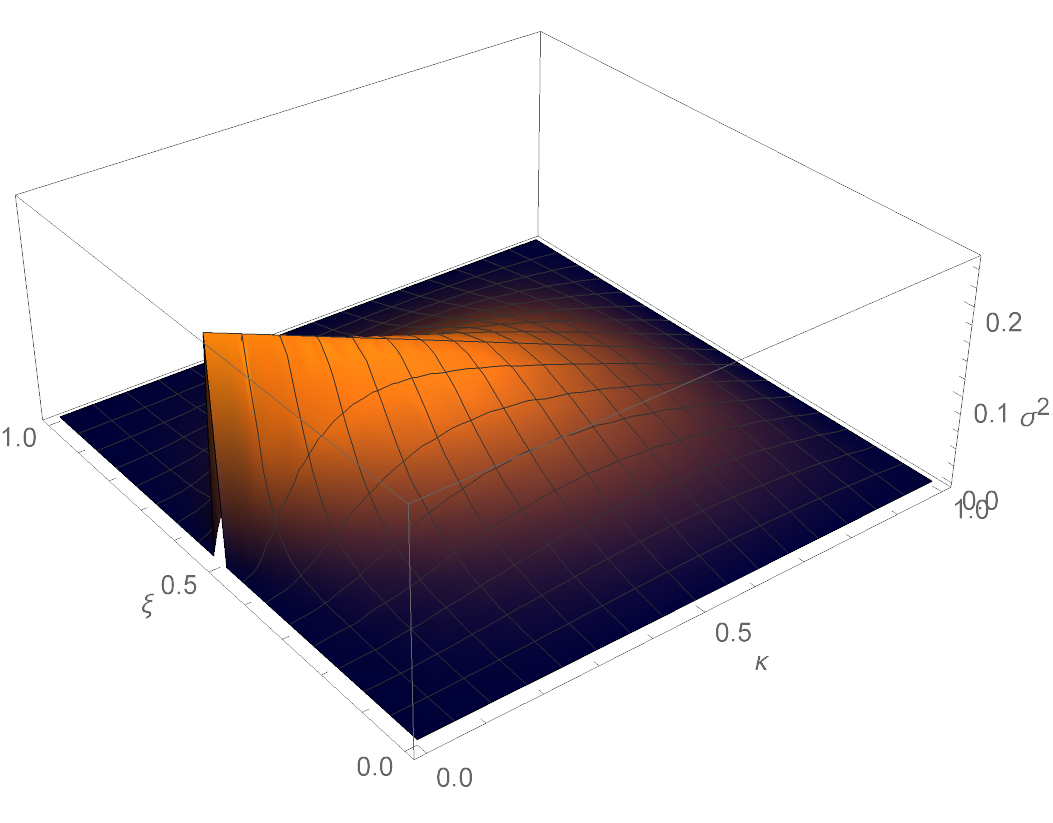}
\caption{Plots of $ \sigma^2$ as functions of $\kappa$ and $\xi$
when $\tau=0$.}
\label{fig:sigma}
\end{figure}



One of the main results is the following form of central limit theorem: 

\begin{thm}\label{thm:intro:CLT}
Assume the condition \eqref{eq:intro:II:asmp}.
Then, we have for any real $t<u$ that 
\begin{align}
 \lim_{n \to \infty} 
 \Prb_n\Bigl[t \le \frac{X-n \mu}{\sqrt{n} \sigma} \le u\Bigr]
 = \frac{1}{\sqrt{2\pi}} \int_t^u e^{-s^2/2} \, d s.
\end{align}
In particular, the pmf $p(x) \ceq p(x \midd n,m,k,l)$ behaves as 
\begin{align}\label{eq:intro:II}
&    p(x) = \Psi(x) \cdot \bigl(1+o(n^0)\bigr) \quad (n \to \infty), \quad
 \Psi(x) \ceq \frac{1}{\sqrt{2\pi n}\sigma}
 \exp\biggl(-\frac{1}{2}\Bigl(\frac{x-n \mu}{\sqrt{n} \sigma}\Bigr)^2\biggr).
\end{align}
\end{thm}

%
At this moment we do not have good understanding or interpretation
of the parameters $\mu$ and $\sigma$.
Another remark is that the quantities $\mu$ and $\sigma$ are invariant 
under the system symmetry \eqref{eq:intro:sym}.
Indeed, in terms of the fixed ratio parameters, the symmetry is written as 
$(\alpha,\beta,\gamma,\delta) \lrto (\gamma,\delta,\alpha,\beta)$,
and we see from \eqref{eq:intro:msD} that $D$ and $\sigma$, 
and hence $\mu$ are invariant under the switch.


Beyond the central limit Theorem \ref{thm:intro:CLT} in Type II limit,
we obtain the following result.

\begin{thm}
\label{thm:EV-BB}
Assume the condition \eqref{eq:intro:II:asmp}.
Then, the expectation and the variance asymptotically  behave as 
\begin{align}
 \bbE[X] &= n \mu + \phi + o(n^{0}), \quad 
 \phi \ceq \frac{\sigma^2-\mu}{1-2\mu}, \\
 \lim_{n \to \infty} \bbV \Bigl[\frac{X}{\sqrt{n}}\Bigr] &= 
 \lim_{n \to \infty} \bbE \Bigl[\Big(\frac{X-n\mu}{\sqrt{n}}\Big)^2\Bigr] 
 = \sigma^2 \ceq \frac{(\alpha+\gamma)\beta\delta}{D} \\
 \lim_{n \to \infty} \bbE \Bigl[\Big(\frac{X-n\mu}{\sqrt{n}}\Big)^j\Bigr] 
&<\infty \label{NVT}
\end{align}
where we used $\mu=(1-\sqrt{D})/2$ as before, and 
$j=3,4, \ldots.$
\end{thm}

In the special case $\alpha \gamma=0$, as shown in Section \ref{ss:II:sp},
we can further discuss the tight exponential evaluation of the tail probability as follows.
\begin{prp}\label{prp:5.6.2}
In Type II limit with $\alpha \beta \delta >0$, $\gamma=0$ 
and $0< \xi = \alpha+\beta \le 1/2$, 
the probability $\Prb_n \bigl[| \tfrac{X}{n} - \mu |\ge \epsilon \bigr]$
goes to zero exponentially for any $\epsilon >0$.
\end{prp}

Using this fact,
we can show Theorem \ref{thm:EV} in a more direct way than its proof,
i.e., without using the recurrence relation for the pmf $p(x)$.
We omit the detail, and leave it to the reader.

\subsection{Outline of proofs of the above statements
under Type II limit}
\label{sss:intro:II-3}
Let us sketch the outline of proofs of the above statements
under Type II limit.

In Section \ref{ss:AE-prf}, 
for Theorem \ref{thm:intro:E}, 
we consider the random variable $Z \ceq \frac{X-n \mu}{n}$.
Let $G_n(z)$ be the cdf for $Z$.
The key tool is the recurrence relation of the pmf $p(x)$ 
in Lemma \ref{lem:intro:rec}, and the main idea is to convert it to 
the recurrence relation of the cdf $\sum_{u=0}^x p(x)$.
Since the support of $Z$ is compact in the limit $n \to \infty$, 
we have the limit distribution $G \ceq \lim_{n \to \infty} G_n$.
Using the recursion of $p(x)$, we can prove
that the limit distribution $G$ has positive probability only at $Z=0$
(Theorem \ref{thm:LLN}), which is nothing but the law of large numbers.

For the central limit theorem, 
first, Section \ref{thm:intro:CLT} gives
an ad-hoc derivation of the normal distribution $\Psi(x)$, 
or the parameters $\mu$ and $\sigma$.
We start with the recurrence relation obtained in the paper \cite{HHY1}, which 
is stated in Lemma \ref{lem:intro:rec}, 
regarding it as a difference equation characterizing the pmf $p(x)$.
Assuming that there exists a limit distribution of $p(x)$ for $n \to \infty$ 
which is differentiable, we derive a differential equation of the limit distribution 
from the difference equation of $p(x)$ by taking $n \to \infty$. 
The obtained differential equation turns out to be the one characterizing $\Psi(x)$.

Section \ref{ss:CLT-prf} gives its rigorous proof.
Again, 
the main idea is to convert the recurrence relation of the pmf $p(x)$ to 
the recurrence relation of the cdf $\sum_{u=0}^x p(x)$.
We consider the random variable $Y \ceq \frac{X-n \mu}{\sqrt{n}}$,
whose support is non-compact in the limit $n \to \infty$. 
We denote by $F_n$ the cdf for $Y$.
Using Helly lemma (Fact \ref{fct:Helly}), we have a limit function $F$ of $F_n$, 
which might not be a probability distribution a priori.
In order to show that $F$ is actually a probability distribution,
we give a preliminary estimation of the tails (Proposition \ref{prp:tail}).
Then, we can make a detailed analysis on the the differential equation of $F$
obtained from the  recurrence relation of the pmf $p(x)$,
culminating in Lemma \ref{lem:CLT:lem4}.
The obtained differential equation turns out to be the one characterizing
the standard normal distribution, 
and we have the central limit Theorem \ref{thm:intro:CLT}.

\section{Type I Limit} 
\label{ss:I}

%

The aim of this section is the proof of Theorem \ref{thm:intro:I}.
Consider Type I limit.
Let us denote by $B_{\xi,j}(x)$ the binomial distribution $\{0,1,\dotsc,j\} \to \bbR$
with parameters $0 \le \xi \le 1$ and $j \in \bbN$
whose pmf is given by $x \mapsto \binom{j}{x} \xi^x (1-\xi)^{j-x}$, 
and by $*$ the convolution of probability distributions.

\begin{thm}\label{thm:I}
In the limit $n \to \infty$ with $k,l$ and $\xi=m/n$ fixed, we have 
\begin{align}\label{eq:I:B*B}
 P_{n,n \xi,k,l} \sim P_{R|\xi;k,l} \ceq B_{\xi,k-l} * B_{1-\xi,l}.
\end{align}
The limit distribution $P_{R|\xi,k,l}$ is a discrete probability distribution 
on $\{0,1,\dotsc,k\}$, and its expectation and variance are given by 
$(k-l)\xi+l(1-\xi)$ and $k \xi(1-\xi)$, respectively.
\end{thm}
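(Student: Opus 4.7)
The strategy is to derive the theorem from Hahn presentation (Theorem \ref{thm:intro:Hahn}), per the hint that the key tool is Proposition \ref{prp:I:i-vanish}. After substituting the explicit alternating-sum expression of $\omega_{(n-x,x)}(i)$ into Hahn presentation and interchanging the orders of summation in $i$ and $r$, one rewrites
\[
 p(x \mid n, n\xi, k, l) = \frac{\binom{n}{x}}{\binom{n}{m}} \frac{n-2x+1}{n-x+1} \sum_{r=0}^{x}(-1)^r\binom{x}{r}\, T_r(n),
\]
with $M := m - l$, $N := n - m - k + l$, $m := n\xi$, and
\[
 T_r(n) := \sum_{j \ge 0} \frac{\binom{M}{r+j}\binom{N}{r+j}\binom{m-x}{j}\binom{n-m-x}{j}}{\binom{m}{r+j}\binom{n-m}{r+j}}
\]
after reindexing $i = r+j$. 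This isolates a finite $r$-sum of length at most $x+1$ and collects all $n$-dependence into the inner sums $T_r(n)$.

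The technical core of the proof is the asymptotic evaluation of $T_r(n)$ in the Type I limit; this is where Proposition \ref{prp:I:i-vanish} enters. I would attack this by first performing a closed-form hypergeometric summation in $j$, most likely by a Chu-Vandermonde identity or one of Thomae's $\hg{3}{2}$-transformations, since the $j$-series is balanced once the binomial ratios are expressed via falling factorials. The elementary identity
\[
 \frac{\binom{m-l}{r+j}}{\binom{m}{r+j}} = \frac{\binom{m-r-j}{l}}{\binom{m}{l}}
\]
is convenient for recasting $T_r(n)$ into a manifestly hypergeometric object in which the $n$-dependence becomes explicit. Combined with the classical limit $\binom{n-k}{m-l}/\binom{n}{m} \to \xi^l(1-\xi)^{k-l}$ (the hypergeometric-to-binomial convergence) and the Stirling estimate $\binom{n}{x} \sim n^x/x!$ for bounded $x$, this yields an explicit formula for the limit pmf $q(x \mid \xi; k, l) := \lim_{n \to \infty} p(x \mid n, n\xi, k, l)$.

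The final step is to identify $q(x \mid \xi; k, l)$ with the pmf of $B_{\xi, k-l} * B_{1-\xi, l}$. Since both are supported on $\{0, 1, \dotsc, k\}$, this is most cleanly accomplished by matching probability generating functions: the generating function of $B_{\xi,j}$ is $(1-\xi+\xi z)^j$, so that of the convolution is $(1-\xi+\xi z)^{k-l}(\xi+(1-\xi)z)^l$, which should coincide with $\sum_x q(x\mid\xi;k,l) z^x$ after expanding the signed $r$-sum. The expectation $(k-l)\xi + l(1-\xi)$ and variance $k\xi(1-\xi)$ then follow immediately from additivity of the mean and from independence-based additivity of variance for the two component binomials. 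The main obstacle is the middle step: although each binomial ratio in $T_r(n)$ tends to $1$ pointwise in $j$, the length of the $j$-sum grows with $n$, so a naive dominated-convergence argument fails. A closed-form hypergeometric reduction of $T_r(n)$ before passing to the limit is therefore essential.
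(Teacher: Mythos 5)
Your overall architecture --- start from Hahn presentation, rearrange the double sum exactly, only then pass to the limit, and finally identify the limit pmf with the convolution $B_{\xi,k-l}*B_{1-\xi,l}$ via generating functions and read off the moments by additivity --- matches the paper's, and your endgame (the generating-function identification and the moment computation, which the paper does via cumulant generating functions) is correct. The reindexing $i=r+j$ and the identity $\binom{m-l}{r+j}/\binom{m}{r+j}=\binom{m-r-j}{l}/\binom{m}{l}$ are also both correct.

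The gap is in the middle step, and that step is the whole difficulty of the theorem. First, your inner sum $T_r(n)$ is a terminating $\hg{4}{3}$-series at argument $1$ whose parameters are balanced only when $2x+k=n$; generically it is not balanced, so neither Chu--Vandermonde, nor Thomae's $\hg{3}{2}$ transformations, nor Pfaff--Saalsch\"utz closes it, and no standard summation formula applies to it. Second, even granting leading-order asymptotics for $T_r(n)$, these are independent of $r$ at leading order: the kernel $\binom{m-x}{j}\binom{n-m-x}{j}$ concentrates near $j\approx\xi(1-\xi)n$, where the $r$-dependent ratios all tend to $\xi^{l}(1-\xi)^{k-l}$. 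Hence the alternating sum $\sum_r(-1)^r\binom{x}{r}T_r(n)$ cancels completely at leading order, and one needs $T_r(n)$ to relative precision $O(n^{-x})$ --- precisely the regime where, as you note yourself, the growing length of the $j$-sum defeats naive limit arguments. The paper's Proposition \ref{prp:I:i-vanish} resolves this by a different mechanism: it splits $\ff{i}{r}\binom{M}{i}\cdot\ff{(m-i)}{x-r}$ and its partner using the addition formula $\ff{(y+z)}{j}=\sum_{U}\ff{y}{U}\ff{z}{j-U}\binom{j}{U}$, which introduces new indices $U,V$ decoupling $l$ and $k-l$ from $m$ and $n-m$; it then applies Chu--Vandermonde to the resulting $i$-sum $\sum_i\binom{M-r-U}{i-r}\binom{N-r-V}{i-r}$, and evaluates the remaining signed $r$-sum combinatorially as a coefficient in $\bigl((1+A)(1+B)-1\bigr)^x$. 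Only the exact double sum over $(u,v)$ obtained this way, whose terms with $u+v=x$ dominate in the limit, exhibits the convolution structure. Without some substitute for this rearrangement, your plan does not go through.
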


We denote by $q(x|\xi;k,l)$ the pmf of the limit distribution $P_{R|\xi;k,l}$.
By definition of the convolution $*$, its explicit form is given by
\begin{align}\label{eq:I:q}
 q(x \midd \xi;k,l) = \xi^{l-x} (1-\xi)^{k-l-x} \sum_{u=0 \vee (x-k+l)}^{x \wedge l} 
 \binom{k-l}{x-u} \binom{l}{u} \xi^{2(x-u)} (1-\xi)^{2u}.
\end{align}
Note that the limit distribution $P_{R|\xi;k,l}$ 
inherits the system symmetry \eqref{eq:intro:sym} of the distribution $P_{n,m,k,l}$, 
and is invariant under the switch $(\xi,k-l,l) \lrto (1-\xi,l,k-l)$, 
which can be checked by the definition \eqref{eq:I:B*B}
and also by the expression \eqref{eq:I:q} of limit pmf $q(x \midd \xi;k,l)$.

Now, we show Theorem \ref{thm:I}. The proof consists of two steps.

\begin{proof}[{The first step of the proof of Theorem \ref{thm:I}}]
By the symmetry remarked after \eqref{eq:I:q},
it is enough to consider the case $m \le n-m$.
Thus, we can use Hahn presentation of the pmf $p(x)=p(x \midd n,m,k,l)$ 
in Theorem \ref{thm:Hahn}. Let us show it again: For $m \le n-m$, we have 
\begin{align}\label{eq:I:Hahn}
 p(x) = \frac{\binom{n}{x}}{\binom{n}{m}} \frac{n-2x+1}{n-x+1}
 \sum_{i \ge 0} \frac{\binom{M}{i}\binom{N}{i}}{\binom{m}{i}\binom{n-m}{i}} 
 \sum_{r = 0}^i (-1)^r \binom{x}{r} \binom{m-x}{i-r} \binom{n-m-x}{i-r}
\end{align}
with $M \ceq m-l$ and $N \ceq n-m-k+l$.
In this first step of the proof, 
we transform the formula \eqref{eq:I:Hahn} without taking limit. The result is:

\begin{prp}\label{prp:I:i-vanish}
The formula \eqref{eq:I:Hahn} is equivalent to 
\begin{align}
&p(x) = \frac{\binom{n-k}{m-l}}{\binom{n}{m}} 
 \frac{n-2x+1}{n-x+1} \frac{\ff{n}{x}}{\ff{m}{x} \ff{(n-m)}{x}} \frac{1}{x!} w(x), \\
&w(x) \ceq \sum_{\substack{0 \le u,v \le x \\ u+v \ge x}} \ff{l}{u} \ff{(k-l)}{v}
 \frac{(\ff{M}{x-u})^2 (\ff{N}{x-v})^2}{\ff{(M+N)}{2x-u-v}}
 \frac{x!}{(u+v-x)!(x-u)!(x-v)!}.
\end{align}
\end{prp}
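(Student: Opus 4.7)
The plan is to transform the double sum in \eqref{eq:I:Hahn} directly into the claimed form for $w(x)$, without going through Racah presentation. Writing $S(x)$ for the inner double sum in \eqref{eq:I:Hahn} so that $p(x) = \frac{\binom{n}{x}}{\binom{n}{m}}\frac{n-2x+1}{n-x+1}\,S(x)$, the identity to prove is equivalent to
\begin{align*}
 \ff{m}{x}\,\ff{(n-m)}{x}\, S(x) = \binom{M+N}{M}\, w(x).
\end{align*}
The first step is to introduce the parameters $l$ and $k-l$ into the binomial coefficients via the dualization identities
\begin{align*}
 \frac{\binom{M}{i}}{\binom{m}{i}} = \frac{\binom{m-i}{l}}{\binom{m}{l}}, \qquad
 \frac{\binom{N}{i}}{\binom{n-m}{i}} = \frac{\binom{n-m-i}{k-l}}{\binom{n-m}{k-l}},
\end{align*}
and then to replace $i$ by $s := i-r$, so that
\begin{align*}
 \binom{m}{l}\binom{n-m}{k-l}\,S(x) = \sum_{r,s \ge 0}(-1)^r \binom{x}{r}\binom{m-x}{s}\binom{n-m-x}{s}\binom{m-r-s}{l}\binom{n-m-r-s}{k-l}.
\end{align*}

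The second step is to decouple the $(r,s)$-interaction in $\binom{m-r-s}{l}$ and $\binom{n-m-r-s}{k-l}$ by Chu-Vandermonde applied to the splittings $m-r-s = (m-x-s)+(x-r)$ and $n-m-r-s = (n-m-x-s)+(x-r)$, introducing auxiliary indices $a, b$. The inner $s$-sum then collapses via the telescope $\binom{m-x}{s}\binom{m-x-s}{l-a} = \binom{m-x}{l-a}\binom{M-x+a}{s}$ (and its $n-m$ analogue) together with the Vandermonde identity $\sum_s\binom{A}{s}\binom{B}{s} = \binom{A+B}{A}$, leaving the factor $\binom{m-x}{l-a}\binom{n-m-x}{k-l-b}\binom{M+N-2x+a+b}{M-x+a}$. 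What remains is the alternating $r$-sum
\begin{align*}
 \sum_{r \ge 0}(-1)^r \binom{x}{r}\binom{x-r}{a}\binom{x-r}{b},
\end{align*}
whose evaluation is the chief obstacle.

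To resolve this, the clean approach is generating-function extraction: since $\binom{x-r}{a}\binom{x-r}{b} = [y^a z^b]\bigl((1+y)(1+z)\bigr)^{x-r}$, the binomial theorem gives
\begin{align*}
 \sum_{r \ge 0}(-1)^r \binom{x}{r}\binom{x-r}{a}\binom{x-r}{b} = [y^a z^b]\bigl((1+y)(1+z)-1\bigr)^x = [y^a z^b](y+z+yz)^x,
\end{align*}
and the trinomial expansion of the right-hand side identifies this with $x!/[(x-a)!(x-b)!(a+b-x)!]$, interpreted as zero when $a+b < x$. This is precisely what produces the support condition $u+v \ge x$ in $w(x)$. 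Once this lemma is in hand, the final step is routine bookkeeping: set $(u,v) := (a,b)$, convert each remaining binomial coefficient into falling factorials, and verify that the accumulated prefactor $\ff{m}{x}\ff{(n-m)}{x}/[\binom{M+N}{M}\binom{m}{l}\binom{n-m}{k-l}]$ combines with the inner factorials to yield exactly $\ff{l}{u}\ff{(k-l)}{v}\,\ff{M}{x-u}^2\,\ff{N}{x-v}^2/\ff{(M+N)}{2x-u-v}$ together with the trinomial factor $x!/[(u+v-x)!(x-u)!(x-v)!]$.
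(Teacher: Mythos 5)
Your proposal is correct and follows essentially the same route as the paper's proof: both decouple the $l$- and $(k-l)$-dependence by a Chu--Vandermonde splitting that introduces the auxiliary indices $(u,v)$, both collapse the inner sum over $i$ (your $s$) by the Vandermonde identity, and both hinge on the identical key lemma that $\sum_{r}(-1)^r\binom{x}{r}\binom{x-r}{u}\binom{x-r}{v}$ is the trinomial coefficient $x!/[(u+v-x)!(x-u)!(x-v)!]$, proved by the same coefficient extraction from $\bigl((1+y)(1+z)-1\bigr)^x=(y+z+yz)^x$. The only difference is cosmetic — you manipulate ratios of binomial coefficients via dualization where the paper works with falling-factorial identities — and your final bookkeeping does check out.
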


\begin{proof}
Let us set $T \ceq \binom{n}{m}^{-1} \binom{n}{x} \frac{n-2x+1}{n-x+1}$ and 
\begin{align}
 W(x) \ceq \binom{n-k}{m-l}^{-1} \sum_{i=0}^{M \wedge N} 
 \frac{ \binom{M}{i} \binom{N}{i} }{\binom{m}{i} \binom{n-m}{i}} \, 
 \sum_{r=0}^{i \wedge x}
 (-1)^r \binom{x}{r} \binom{m-x}{i-r} \binom{n-m-x}{i-r}.
\end{align}
Thus, the formula \eqref{eq:I:Hahn} is $p(x) = T \cdot \binom{n-k}{m-l} \cdot W(x)$,
and the consequence will be obtained if we show 
\begin{align}\label{eq:i-van:rel}
 W(x) = \frac{1}{\ff{m}{x} \ff{(n-m)}{x}} \sum_{u,v=0}^x 
 \ff{l}{u} \ff{(k-l)}{v} \frac{(\ff{M}{x-u})^2 (\ff{N}{x-v})^2}{\ff{(M+N)}{2x-u-v}}
 \binom{x}{u+v-x}\binom{2x-u-v}{x-u}.
\end{align}

We start with 
\begin{align}
 \frac{\binom{m-x}{i-r}}{\binom{m}{i}} = 
 \frac{\ff{i}{r} \ff{(m-i)}{x-r}}{\ff{m}{x}},  \quad 
 \frac{\binom{n-m-x}{i-r}}{\binom{n-m}{i}} = 
 \frac{\ff{i}{r} \ff{(n-m-i)}{x-r}}{\ff{(n-m)}{x}}.
\end{align}
Thus, we have 
\begin{align}
&W(x) \ceq I(x)/\bigl(\ff{m}{x} \ff{(n-m)}{x}\bigr), \label{eq:I:W(x)} \\ 
&I(x) \ceq \binom{M+N}{N}^{-1} \sum_{i=0}^{M \wedge N} 
 \sum_{r=0}^{i \wedge x} \binom{M}{i}\binom{N}{i} 
 (-1)^r \binom{x}{r} \ff{i}{r} \ff{(m-i)}{x-r} \ff{i}{r} \ff{(n-m-i)}{x-r}.
\end{align}
Next, using the identity 
\begin{align}
 \ff{(y+z)}{j} = \sum_{U=0}^j \ff{y}{U} \ff{z}{j-U} \binom{j}{U}
\end{align}
for commuting letters $x,y$ and $j \in \bbN$, we have 
\begin{align}
\begin{split}
& \ff{i}{r} \binom{M}{i} \cdot \ff{(m-i)}{x-r} 
 =\frac{\ff{M}{i}}{(i-r)!} \cdot 
  \sum_{U=0}^{x-r} \ff{(M-i)}{U} \ff{l}{x-r-U} \binom{x-r}{U} \\
&=\sum_{U=0}^{x-r} \frac{\ff{M}{i+U}}{(i-r)!} \ff{l}{x-r-U} \binom{x-r}{U} 
 =\sum_{U=0}^{x-r} \ff{M}{r+U} \binom{M-r-U}{i-r} \ff{l}{x-r-U} \binom{x-r}{U}.
\end{split}
\end{align}
Similarly, we have 
\begin{align}
 \ff{i}{r} \binom{N}{i} \cdot \ff{(n-m-i)}{x-r} =
 \sum_{V=0}^{x-r} \ff{N}{r+V} \binom{N-r-V}{i-r} \ff{(k-l)}{x-r-V} \binom{x-r}{V}.
\end{align}
Then, changing the order of summations, we can rewrite $I(x)$ as 
\begin{align}
\begin{split}
 I(x) = &\sum_{U,V=0}^x \sum_{r=0}^x 
 (-1)^r \binom{x}{r} \ff{l}{a-r-U} \binom{x-r}{U} \ff{(k-l)}{x-r-V} \binom{x-r}{V}
 \ff{M}{r+U}  \ff{N}{r+V} \\
&\cdot \binom{M+N}{N}^{-1} \sum_{i=0}^{M \wedge N} \binom{M-r-U}{i-r} \binom{N-r-V}{i-r}.
\end{split}
\end{align}
Using the Chu-Vandermonde formula 
\begin{align}\label{eq:intro:CV2}
 \sum_{r \ge 0} \binom{A}{a-r} \binom{B}{r} = \binom{A+B}{a},
\end{align}
we have
\begin{align}
\begin{split}
 \sum_{i=0}^{M \wedge N} \binom{M-r-U}{i-r} \binom{N-r-V}{i-r} 
&= \binom{M+N-2r-U-V}{N-r-V} \\
&= \binom{M+N}{N} \frac{\ff{M}{r+U} \ff{N}{r+V}}{\ff{(M+N)}{2r+U+V}}.
\end{split}
\end{align}
Replacing $u \ceq x-r-U$ and $v \ceq x-r-V$, we can further rewrite $I(x)$ as 
\begin{align}\label{eq:I:IJ}
\begin{split}
&I(x) = \sum_{u,v=0}^x \ff{l}{u} \ff{(k-l)}{v}
 \frac{(\ff{M}{x-u})^2 (\ff{N}{a-v})^2}{\ff{(M+N)}{2x-u-v}} J(u,v), \\
&J(u,v) \ceq \sum_{r=0}^x (-1)^r \binom{x}{r} \binom{x-r}{u} \binom{x-r}{v}.
\end{split}
\end{align}

By \eqref{eq:I:W(x)} and \eqref{eq:I:IJ}, 
in order to prove \eqref{eq:i-van:rel}, it remains to show
\begin{align}\label{eq:I:h(u,v)}
 J(u,v) = \begin{cases} 0        & (u+v  <  x) \\ 
 \frac{x!}{(u+v-x)!(x-u)!(x-v)!} & (u+v \ge x)\end{cases}.
\end{align}
Note that $J(u,v)$ is equal to the coefficient of $A^u B^v$ 
in $\bigl((1+A)(1+B)-1)^x$ for commuting letters $A$ and $B$.
Thus, it is also equal to the coefficient of $(A B)^{u+v-x} A^{x-v} B^{x-u}$ 
in $(A B + A + B)^x$, and we have the above equality \eqref{eq:I:h(u,v)}.
\end{proof}
This is the end of the first step.
\end{proof}

\begin{proof}[{The second step of the proof of Theorem \ref{thm:I}}]
Next, as the second step, 
we take Type I limit $n \to \infty$ of Proposition \ref{prp:I:i-vanish}.

\begin{prp}\label{prp:I:lim}
In the limit $n \to \infty$ with $k,l$ and $\xi=m/n$ fixed,
the function $h(x)$ in Proposition \ref{prp:I:i-vanish} behaves as 
\begin{align}
 \frac{\ff{n}{x}}{\ff{m}{x} \ff{(n-m)}{x}} \frac{1}{x!} h(x) \sim 
 \xi^{-x} (1-\xi)^{-x} \sum_{u=0}^x \binom{l}{u} \binom{k-l}{x-u} 
 \xi^{2(x-u)} (1-\xi)^{2u} \quad (n \to \infty).
\end{align}
\end{prp}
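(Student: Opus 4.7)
The plan is a direct asymptotic expansion of the sum defining $w(x)$ from Proposition \ref{prp:I:i-vanish} (the ``$h(x)$'' in the statement): expand every falling factorial in a generic $(u,v)$-summand as a leading power of $n$ with an $n$-independent prefactor, identify the dominant terms, and verify the limit.

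First I would record the relevant asymptotics. For $x, u, v$ bounded by $k$ and for $m = n\xi$, $M = m - l$, $N = n - m - k + l$, $M + N = n - k$, all of which are linear in $n$, the standard expansion $\ff{(an+b)}{r} = (an)^r\bigl(1 + O(1/n)\bigr)$ for fixed $r$ gives $\ff{M}{x-u} \sim (n\xi)^{x-u}$, $\ff{N}{x-v} \sim (n(1-\xi))^{x-v}$, $\ff{(M+N)}{2x-u-v} \sim n^{2x-u-v}$, and $\dfrac{\ff{n}{x}}{\ff{m}{x}\ff{(n-m)}{x}} \sim \bigl(n\xi(1-\xi)\bigr)^{-x}$, while $\ff{l}{u}$ and $\ff{(k-l)}{v}$ are $n$-independent.

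Substituting these into the generic $(u,v)$-summand of $\tfrac{1}{x!} \cdot \tfrac{\ff{n}{x}}{\ff{m}{x}\ff{(n-m)}{x}} \cdot w(x)$ and collecting powers of $n$, I get an overall scaling of $n^{x-u-v}$: the prefactor contributes $n^{-x}$, the squared numerators $(\ff{M}{x-u})^2(\ff{N}{x-v})^2$ contribute $n^{2(x-u)+2(x-v)}$, and $\ff{(M+N)}{2x-u-v}$ in the denominator contributes $n^{-(2x-u-v)}$. Since the summation in $w(x)$ is restricted to $u+v \geq x$, this exponent is non-positive, and only the boundary $u+v = x$ survives in the limit, while all other terms contribute $O(1/n)$.

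Setting $v = x - u$ on the surviving terms, the factor $\tfrac{x!}{(u+v-x)!(x-u)!(x-v)!}$ collapses to $\tfrac{x!}{u!(x-u)!}$; combined with $\ff{l}{u}/u! = \binom{l}{u}$ and $\ff{(k-l)}{x-u}/(x-u)! = \binom{k-l}{x-u}$ and the residual powers $\xi^{2(x-u)-x}(1-\xi)^{2u-x}$, the surviving sum collapses to $\xi^{-x}(1-\xi)^{-x} \sum_{u=0}^x \binom{l}{u}\binom{k-l}{x-u}\,\xi^{2(x-u)}(1-\xi)^{2u}$, which is the claim. The only real obstacle is careful bookkeeping of the powers of $n$ and noticing that the domain restriction $u+v \geq x$ together with the scaling $n^{x-u-v}$ is precisely what isolates the surviving terms; there is no delicate cancellation and no tail estimate to perform, so after this observation the argument is purely algebraic.
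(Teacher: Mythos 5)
Your proposal is correct and follows essentially the same route as the paper's own proof: the paper likewise observes that the power of $n$ in the $(u,v)$-summand is $n^{x-u-v}$, so that under the constraint $u+v\ge x$ only the boundary $u+v=x$ survives, and then replaces each falling factorial by its leading power to obtain the stated sum. Your version merely makes the bookkeeping of exponents and the $O(1/n)$ error explicit, which the paper leaves implicit.
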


\begin{proof}
We find from the expression of $h(x)$ that 
the terms with $u+v=x$ are dominant in the limit.
Then, we can estimate the left hand side of the statement as
\begin{align}
\begin{split}
 \frac{\ff{n}{x}}{\ff{m}{x} \ff{(n-m)}{x}} \frac{1}{x!} h(x) \sim 
&\frac{n^x}{m^x (n-m)^x} \frac{1}{x!} 
 \sum_{\substack{0 \le u,v \le x \\ u+v = x}} \ff{l}{u} \ff{(k-l)}{v} 
 \frac{m^{2(x-u)} (n-m)^{2(x-v)}}{n^{2x-u-v}} \binom{x}{u} \\
&= \xi^{-x} (1-\xi)^{-x} 
    \sum_{u=0}^x \binom{l}{u} \binom{k-l}{u} \xi^{2(x-u)} (1-\xi)^{2u}.
\end{split}
\end{align}
\end{proof}

By Proposition \ref{prp:I:lim} and the estimation
$\frac{\ff{m}{l} \ff{(n-m)}{k-l}}{\ff{n}{k}} \frac{n-2x+1}{n-x+1} 
 \sim \xi^l (1-\xi)^{k-l}$,
we have 
\begin{align}
 p(x) \sim
 \xi^{l-x} (1-\xi)^{k-l-x} \sum_{u=0}^x 
 \binom{k-l}{x-u} \binom{l}{u} \xi^{2(x-u)} (1-\xi)^{2u}, 
\end{align}
which is equivalent to the expression \eqref{eq:I:q} of $q(x \midd \xi;k,l)$.
Thus, we obtain the first half \eqref{eq:I:B*B} of Theorem \ref{thm:I}.

Next we show the latter half of Theorem \ref{thm:I}, i.e, 
calculate the expectation and the variance of the limit distribution $P_{R|\xi,k,l}$.
For that, we consider the cumulant generating function for $P_{R|\xi;k,l}$:
\begin{align}
 K(s) \ceq \log \sum_{x=0}^k e^{s x} q(x \midd \xi;k,l).
\end{align}
Recall that, denoting by $K_P$ the cumulant generating function 
of a probability distribution $P$, we have $K_{P_1 * P_2} = K_{P_1} + K_{P_2}$. 
Then, since $K_{B_{\xi,j}}(s) = j \log(1-\xi+\xi e^s)$ 
for the binomial distribution $B_{\xi,j}$, we have
\begin{align}
 K(s) = (k-l) \log \bigl(1-\xi+\xi e^s \bigr) + l \log \bigl(\xi+(1-\xi)e^s \bigr).
\end{align} 
Hence, the expectation and the variance of $P_{R|\xi;k,l}$ are given by 
\begin{align}
 \frac{d}{d s}    K(0) = (k-l)\xi+l(1-\xi), \quad 
 \frac{d^2}{d^2 s}K(0) = k \xi(1-\xi).
\end{align}
Thus, we finished the proof of Theorem \ref{thm:I}
\end{proof}

\if0
\section{Parametrization in Type II Limit}
\label{ss:II}


Next, we consider Type II limit, i.e., 
the limit $n \to \infty$ with the ratios $m/n$, $k/n$ and $l/n$ fixed.
As summarized in Table \ref{tab:intro:IIprm} of Section \ref{s:intro},
we will use the parametrization \eqref{eq:intro:abcd} of the fixed ratios:
\begin{align}
 \alpha = \frac{l}{n},   \quad 
  \beta = \frac{m-l}{n}, \quad 
 \gamma = \frac{k-l}{n}, \quad
 \delta = \frac{n-m-k+l}{n}.
\end{align}

The condition $(n,m,k,l) \in \clN$ in \eqref{eq:intro:clN} 
is equivalent to  \eqref{eq:intro:II:pasmp}:
\begin{align}\label{eq:II:abcd}
 \alpha, \beta, \gamma,\delta \ge 0, \quad \alpha+\beta+\gamma+\delta=1.
\end{align}
We also use the parametrization \eqref{eq:intro:xkl}:
\begin{align}\label{eq:II:xk}
 \xi = \frac{m}{n} = \alpha+\beta, \quad 
 \kappa = \frac{k}{n} = \alpha+\gamma.
\end{align}

Before starting the main argument, we introduce some fundamental quantities and 
discuss their property in the next Lemma \ref{lem:D}.
Lemma \ref{dfn:intro:II:gen} follows from Lemma \ref{lem:D}.
As will be explained later, the quantity $\mu$ gives the asymptotic expectation,
and $\sigma$ describes the asymptotic variance.
Since Lemma \ref{lem:D} is technical, we recommend the reader only to skim 
\eqref{eq:lemD:eD} and \eqref{eq:lemD:mns}. 

\begin{lem}\label{lem:D}
Let $\alpha,\beta,\gamma,\delta$ be real numbers 
satisfying the condition \eqref{eq:II:abcd}.
Set $\xi \ceq \alpha+\beta$ and $\kappa \ceq \alpha+\gamma$ as in \eqref{eq:II:xk}, and 
\begin{gather}
 \eta \ceq \alpha\gamma+\alpha\delta+\beta\gamma, \quad 
 D \ceq 1-4\eta, 
 \label{eq:lemD:eD} \\
 \mu \ceq \frac{1-\sqrt{D}}{2}, \quad \nu \ceq \frac{1+\sqrt{D}}{2}, \quad 
 \sigma  \ceq \sqrt{\frac{(\alpha+\gamma)\beta\delta}{D}}.
 \label{eq:lemD:mns}
\end{gather}
Then, the following statements hold.
\begin{enumerate}
\item \label{i:lem:D:1}
$0 \le D \le 1$, i.e., $0 \le \eta \le \frac{1}{4}$.
The equality $D=0$ holds if and only if
$(\alpha,\beta)=(\frac{1}{2},0)$ or $(\gamma,\delta)=(\frac{1}{2},0)$.

\item \label{i:lem:D:2}
$0 \le \mu \le \xi \wedge \frac{1}{2}$, 
and the equality $\mu=\xi$ holds if and only if $\beta \delta=0$.

\item \label{i:lem:D:3}
Assume $\beta \delta=0$. If moreover $\xi<\frac{1}{2}$ 
(resp.\ $\xi=\frac{1}{2}$, $\xi>\frac{1}{2}$), 
then $\mu=\xi$ (resp.\ $\mu=\frac{1}{2}$, $\mu=1-\xi$).

\item
$\mu \le \kappa$, and the equality $\mu=\kappa$ holds if and only if 
$\kappa \le \frac{1}{2}$ and ($\alpha=\gamma=0$ or $\beta=\gamma=0$ or $\delta=0$).

\item 
$\mu \le 1-\kappa$, and the equality $\mu=1-\kappa$ holds if and only if 
$\alpha=\delta=0$ or $\beta=\gamma=0$.

\item \label{i:lem:D:6}
$\sigma$ is well-defined and positive if and only if 
$\alpha+\gamma,\beta,\delta>0$.

\item \label{i:lem:D:7}
$\sigma$ is well-defined and $\sigma=0$ if and only if $\alpha=\gamma=0$ or 
($\beta \delta=0$ and $\xi \neq \frac{1}{2}$).

\item \label{i:lem:D:8}
$\sigma$ is not defined if and only if $\beta \delta=0$ and $\xi=\frac{1}{2}$.
\end{enumerate}
\end{lem}

\begin{proof}
\begin{enumerate}
\item 
The expression \eqref{eq:lemD:eD} shows $D=1-4\eta \le 1$.
Since $\gamma+\delta=1-q$ and $\gamma \le 1-q$, we have 
\begin{align}\label{eq:lem:D}
 \eta = \alpha (1-\xi) + (\xi-\alpha) \gamma \le \alpha(1-\xi)+(\xi-\alpha)(1-\xi) 
 = \xi(1-\xi) \le 1/4.
\end{align}
Hence, we have $D \ge 0$, and the equality $D=0$ is equivalent to 
$(\xi,\xi-\alpha)=(0,0)$ or $(\xi,\gamma)=(1/2,1-\xi)$, which correspond to 
$(\alpha,\beta)=(1/2,0)$ or $(\gamma,\delta)=(1/2,0)$, respectively.

\item
The inequality $0 \le D \le 1$ in (1) yields $0 \le \mu \le 1/2$.
The remained inequality $\mu \le \xi$ is equivalent to $(1-2\xi)^2 \le D$,
which is equivalent to $\eta \le \xi-\xi^2$,
which has been already shown in \eqref{eq:lem:D}.
We also find that the equality $\mu=\xi$ holds 
if and only if $\gamma=1-\xi$ or $\xi=\alpha$,
which is equivalent to $\delta=0$ or $\beta=0$.

\item
The case $\beta \delta=0$ and $\xi \le 1/2$ 
follows from the previous item \eqref{i:lem:D:2}.
In the case $\beta \delta=0$ and $\xi>1/2$, 
the proof of \eqref{i:lem:D:2} shows $\sqrt{D}=2\xi-1$, which gives $\mu=1-\xi$.

\item
If $\kappa>1/2$, then $\mu \le 1/2<\kappa$, so we may assume $\kappa \le 1/2$.
Then, 
\begin{align}
\nonumber
 \mu \le \kappa &\iff 1-2\kappa \le \sqrt{D} \iff (1-D)/4 \le \kappa-\kappa^2 \iff
 \eta \le (\alpha+\gamma)(\beta+\delta) \\
&\iff \Delta(\gamma) \ceq \alpha \beta-\alpha \gamma+\gamma \delta \ge 0.
\label{eq:D:Delta}
\end{align}
Since $\kappa=\alpha+\gamma \le 1/2$ and 
$\gamma =1-\alpha-\beta-\delta \le 1-\alpha-\beta$, it is enough to show 
\begin{align}
 \Delta(\gamma) \ge 0  \quad \text{for} \quad
 0 \le \gamma \le (1-\alpha-\beta) \wedge \left(\thf-\alpha\right).
\end{align}
Since $\Delta(\gamma)=-\gamma^2+(1-2\alpha-\beta)\gamma+\alpha \beta$
and $1-\alpha-\beta \le (1/2)-\alpha$ $\iff$ $1/2 \le \beta$,
it is enough to check the following three inequalities.
\begin{align}
 \Delta(0) \ge 0, \quad   
 \Delta(1-\alpha-\beta) \ge 0 \ \text{ for } \ 1/2 \le \beta, \quad 
 \Delta(\thf-\alpha) \ge 0 \ \text{ for } \ 1/2 \ge \beta.
\end{align}
The first inequality follows from $\Delta(0)=\alpha \beta \ge 0$, and the second one
follows from $\Delta(1-\alpha-\beta) = \alpha^2 \ge 0$.
For the last, we have $\Delta(\thf-\alpha) = (\thf-\alpha)^2+(2\alpha-\thf)\beta$.
If $\alpha \ge 1/4$, then $\Delta(\thf-\alpha) \ge (\thf-\alpha)^2 \ge 0$.
If $\alpha \le 1/4$, then 
$\Delta(\thf-\alpha) \ge (\thf-\alpha)^2+(2\alpha-\thf)/2=\alpha^2 \ge 0$.
Hence, the inequality $\mu \le \kappa$ is proved.

The argument so far shows that the equality $\mu=\kappa$ holds if and only if 
$\kappa \le 1/2$ and one of the following three equalities holds:
\begin{align}
 \Delta(0)=0, \quad  
 \Delta(1-\alpha-\beta) = 0 \ \text{ with } \ 1/2 \le \beta, \quad 
 \Delta(\thf-\alpha) = 0 \  \text{ with } \ 1/2 \ge \beta.
\end{align}
In the case $\Delta(0)=\alpha \beta =0$, we have $\gamma=0$ and 
($\alpha=0$ or $\beta=0$). 
In the case $\Delta(1-\alpha-\beta)=\alpha^2=0$ with $1/2 \le \beta$, 
we have $\alpha=0$ and $\gamma=1-\beta$. 
In the case $\Delta(\thf-\alpha)=0$ with $1/2 \ge \beta$, 
if $\alpha \ge 1/4$ then $\alpha=1/2$ and $\beta=\gamma=0$. 
If $\alpha \le 1/4$ then $\alpha=0$ and $\beta=\gamma=1/2$. 
The last two cases are included in the former ones, and we have the consequence.

\item
Since $0 \le \mu \le 1/2$, we may assume $\kappa \ge 1/2$.
Then, we have 
\begin{align}
 \mu \le 1-\kappa \iff 2\kappa-1 \le \sqrt{D} \iff (1-D)/4 \le \kappa-\kappa^2 \iff
 \Delta(\gamma) \ge 0,
\end{align}
where $\Delta(\gamma)$ is given in \eqref{eq:D:Delta}.
Since $\kappa=\alpha+\gamma \ge 1/2$, it is enough to show
$\Delta(\gamma)\ge 0$ in the range $(\thf-\alpha) \vee 0 \le \gamma \le 1-\alpha-\beta$.
As in the proof of (3), we have 
\begin{align}
\begin{split}
&\Delta(\gamma)=-\gamma^2+(1-2\alpha-\beta)\gamma+\alpha \beta, \quad 
 \Delta(1-\alpha-\beta)=\alpha^2 \ge 0, \quad 
 \Delta(0)=\alpha \beta \ge 0, \\
&\Delta(\thf-\alpha)=(\thf-\alpha)^2+(2\alpha-\thf)\beta.
\end{split}
\end{align}
Since $\beta \le 1-(\alpha+\gamma) = 1-p \le 1/2$, 
we also have $\Delta(\thf-\alpha) \ge (\thf-\alpha)^2 \ge 0$ for $\alpha \ge 1/4$, and
$\Delta(\thf-\alpha) \ge (\thf-\alpha)^2 + (2\alpha-\thf)/2 =\alpha^2\ge 0$ 
for $\alpha \le 1/4$.
Hence, $\Delta(\gamma) \ge 0$ is proved.
The condition for $\mu=1-\kappa$ corresponds to the solutions of $\Delta(\gamma)=0$.
By the above argument, the solutions are 
\begin{align}
 (\gamma,\alpha)=(1-\beta,0) \ \text{ or } \ 
 (\gamma,\beta)=(0,0) \ \text{ or }\ 
 (\gamma,\alpha,\beta)=(0,1/2,0),(1/2,0,1/2).
\end{align} 
The last two cases are included in the former ones, and we have the consequence.

\item
$\sigma$ is well-defined if and only if $D>0$, which is equivalent to 
$(\alpha,\beta) \neq (1/2,0)$ and $(\gamma,\delta) \neq (1/2,0)$ by \eqref{i:lem:D:1}.
If $\sigma$ is well-defined, then $\sigma>0$ 
if and only if $(\alpha+\gamma)\beta\delta>0$. 
The latter condition implies the former one, and we have the consequence.

\item
Similarly as in the previous \eqref{i:lem:D:6},
$\sigma$ is well-defined if and only if 
$(\alpha,\beta) \neq (1/2,0)$ and $(\gamma,\delta) \neq (1/2,0)$.
In such a case, $\sigma = 0$ if and only if 
$\alpha=\gamma=0$ or $\beta\delta=0$.
Combining these two statements, we have that $\sigma$ is well-defined and $\sigma=0$
if and only if $\alpha=\gamma=0$ or 
($\alpha+\gamma>0$, $\beta\delta=0$,
 $(\alpha,\beta) \neq (1/2,0)$ and $(\gamma,\delta) \neq (1/2,0)$).
The latter condition is equivalent to $\beta \delta=0$ and $\alpha+\beta \neq 1/2$,
and we have the consequence.

\item
This is a corollary of \eqref{i:lem:D:6} and \eqref{i:lem:D:7}.
\end{enumerate}
\end{proof}
\fi

\section{Ad hoc derivation of central limit theorem Theorem \ref{thm:intro:CLT}}
\label{sss:II:gen}

Now, we assume the condition \eqref{eq:intro:II:asmp}
and consider the Type II limit,
i.e., $n \to \infty$ with fixed ratios $\alpha,\beta,\gamma,\delta$
satisfying the condition \eqref{eq:intro:II:asmp}.
Let us write it again: 
Recall that 
the quantity $\sigma$ is well-defined and positive.
As we will see below, it describes the variance of the limit distribution.

Let us define the normal distribution $\Psi(x)$ by
\begin{align}\label{eq:II:Psi}
 \Psi(x) \ceq \frac{1}{\sqrt{2\pi n}\sigma}
  \exp\biggl(-\frac{1}{2}\Bigl(\frac{x-n \mu}{\sqrt{n} \sigma}\Bigr)^2\biggr).
\end{align}
Remember that the parameters $\mu$ and $\sigma$ 
have the following forms
\begin{align}\label{eq:II:msD}
    \mu  = \frac{1-\sqrt{D}}{2}, \quad
 \sigma  = \sqrt{\frac{(\alpha+\gamma)\beta\delta}{D}}, \quad 
      D  = 1-4(\alpha\gamma+\alpha\delta+\beta\gamma). 
\end{align}
Also, we additionally employ the quantity 
\begin{align}\label{eq:II:msD2}
\nu \ceq \frac{1+\sqrt{D}}{2}.
\end{align}
Note that the assumption \eqref{eq:intro:II:asmp} guarantees 
\begin{align}
 D>0, \quad \sigma>0,\quad  0<\mu<\xi \wedge \kappa \wedge (1-\kappa)
\end{align}
with $\xi \ceq \alpha+\beta$ and $\kappa \ceq \alpha+\gamma$, and 
the normal distribution $\Psi(x)$ in \eqref{eq:II:Psi} makes sense.

Below we explain an ad-hoc argument to guess that 
$p(x)$ behaves in $n \to \infty$ as 
\begin{align}\label{eq:II:P}
  p(x) \dapp \Psi(x). 
\end{align}
The outline is as follows:
Assuming that there exists a limit distribution of $p(x)$ in $n \to \infty$,
we will derive a differential equation for the limit distribution in an ad-hoc way,
by taking $n \to \infty$ in the recurrence formula of $p(x)$ of Lemma \ref{lem:intro:rec}.
The obtained differential equation turns out to be the one determining 
the normal distribution $\Psi(x)$.

\begin{proof}[Ad-hoc derivation of \eqref{eq:II:P}] 
Let us assume $m \le n-m$, i.e., $\xi \le 1/2$, until \eqref{eq:II:ah-Phi}.
We then have the recurrence formula of $p(x)$ in Lemma \ref{lem:intro:rec},
which can rewritten  by a direct computation as
\begin{align}\label{eq:rec:0}
\begin{split}
& \frac{n-2x-1}{n-2x+1} a_x 
  \left(\frac{x+1}{n-x} \, p(x+1)-\frac{x}{n-x+1} \, p(x) \right) \\
&-\frac{n-2x+3}{n-2x+1} \frac{n-x}{x-1} c_x 
  \left(p(x)-\frac{n-x+1}{n-x+2} \, p(x-1) \right) \\
& = \frac{n-x}{n-x+1} 
  \left(\left(1-\frac{n-2x-1}{n-2x+1}\frac{  x}{n-x}\right)a_x + 
        \left(1-\frac{n-2x+3}{n-2x+1}\frac{n-x}{x-1}\right)c_x - M N \right) p(x)
\end{split}
\end{align}
with $M \ceq m-l$ and $N \ceq n-m-k+l$.
Let us consider Type II limit $n \to \infty$ of this recursion, i.e., 
fixing $\xi=m/n$, $\kappa=k/n$ and $\alpha=l/n$.
Denoting $t \ceq x/n$, we deduce from the above \eqref{eq:rec:0} that 
\begin{align}\label{eq:II:rec}
\begin{split}
   \frac{t}{1-t} a^{(\infty)}_t \bigl(p(x+1)-p(x)\bigr) 
&- \frac{1-t}{t} c^{(\infty)}_t \bigl(p(x)-p(x-1)\bigr) \\
&\sim
 \Bigl( a^{(\infty)}_t \frac{1-2t}{1-t} + c^{(\infty)}_t \frac{2t-1}{t} -
        \beta \delta \Bigr) p(x),
\end{split}
\end{align}
where we set $a^{(\infty)}_t \ceq \lim_{n \to \infty} a_x/n^2$ and 
$c^{(\infty)}_t \ceq \lim_{n \to \infty} c_x/n^2$.
We also used the symbol $F \sim G$ to mean $F-G=O(1/n)$, 
as in the previous \S \ref{ss:I}.
The explicit forms of the coefficients are given by
\begin{align}
 a^{(\infty)}_t = \frac{(\xi-t)(1-\xi-t)}{(1-2t)^2} (1-t)(1-\kappa-t), \quad 
 c^{(\infty)}_t = \frac{(\xi-t)(1-\xi-t)}{(1-2t)^2} t(t-\kappa).
\end{align}
As for the right hand side of \eqref{eq:II:rec}, we have 
\begin{align}
 \frac{1-2t}{1-t} a^{(\infty)}_t + \frac{2t-1}{t} c^{(\infty)}_t
  = (\xi-t)(1-\xi-t),
\end{align}
which yields 
\begin{align}
\begin{split}
(\text{RHS of \eqref{eq:II:rec}}) = (t-\mu)(t-\nu) p(x),.
\end{split}
\label{eq:II:quad}
\end{align}
Note that $\mu$ and $D$ are the same as those in \eqref{eq:II:msD}.
Now, we consider $p(x)$ as a function of $t=x/n$ and denote it by 
$\varpi(t) \ceq p(n t)=p(x)$.
Assume that we can approximate 
\begin{align}\label{eq:varpi'}
\varpi'(t) \approx  n \bigl(p(x+1)-p(x)\bigr) \approx n \bigl(p(x)-p(x-1)\bigr).
\end{align} 
Here and hereafter, the symbol $\approx$ indicates this assumption.
Then, we have from \eqref{eq:II:rec} that   
\begin{align}
 \frac{\varpi'(t)}{\varpi(t)} \approx
 \frac{n (t-\mu)(t-\nu)}{a^{(\infty)}_t \frac{t}{1-t}
 - c^{(\infty)}_t \frac{1-t}{t}}
 = \frac{n (t-\mu)(t-\nu) (1-2t)}{\kappa (\xi-t)(1-\xi-t)}.
\end{align}
Let us rewrite this approximate equation as 
\begin{align}\label{eq:II:varpi}
 \frac{d}{d t} \log \varpi(t) \approx -n(t-\mu) \cdot R(t), \quad 
 R(t) \ceq \frac{(\nu-t) (1-2t)}{\kappa (\xi-t)(1-\xi-t)}.
\end{align}

Now, using the quantity $\sigma$ in \eqref{eq:II:msD},
we consider the variable 
\begin{align}
 y \ceq \frac{x- n \mu}{\sqrt{n} \sigma} = \frac{\sqrt{n}(t-\mu)}{\sigma}.
\end{align}
Then, in the limit $n \to \infty$, 
the approximate equation \eqref{eq:II:varpi} has the form
\begin{align}\label{eq:II:R}
 \frac{d}{d y} \log \varpi(t) \approx -y \sigma^2 R(\mu) = 
 -y \sigma^2 \frac{(\nu-\mu)(1-2\mu)}{\kappa (\xi-\mu)(1-\xi-\mu)} = -y.
\end{align}
In the last equality, we used $\nu-\mu=1-2\mu=\sqrt{D}$,
$\xi=\alpha+\beta$, $\kappa=\alpha+\gamma$ and 
$(\xi-\mu)(1-\xi-\mu)=\xi(1-\xi)-(1-D)/4=\beta \delta$ to obtain
\begin{align}\label{eq:II:sigma^2}
 \frac{(\nu-\mu)(1-2\mu)}{\kappa(\xi-\mu)(1-\xi-\mu)}
 = \frac{D}{\kappa \beta \delta} = \sigma^{-2}.
\end{align}
Thus we obtained from \eqref{eq:II:R} the differential equation
\begin{align}\label{eq:II:ode}
 \frac{d}{d y} \log P(y) =  -y
\end{align}
for a function $P(y)$ of the variable $y$,
which characterizes the normal distribution 
\begin{align}\label{eq:II:ah-Phi}
 P(y) = \frac{1}{\sqrt{2\pi n}\sigma} \exp\biggl(-\frac{y^2}{2}\biggr)
 = \frac{1}{\sqrt{2\pi n}\sigma} \exp\biggl(
  -\frac{1}{2}\Bigl(\frac{x-n \mu}{\sqrt{n} \sigma}\Bigr)^2\biggr) =: \Psi(x).
\end{align}
Hence, we can guess $p(x) \approx \Psi(x)$ in the limit $n \to \infty$.

Finally, since the quantities $\mu$ and $\sigma$ are invariant 
under the system symmetry \eqref{eq:intro:sym} (see Remark \ref{rmk:msD:sym} below),
we can remove the assumption $\xi \le 1/2$ put in the beginning.
\end{proof}

\begin{rmk}\label{rmk:msD:sym}
The quantities $\mu$ and $\sigma$ are invariant 
under the system symmetry \eqref{eq:intro:sym}.
Indeed, in terms of the fixed ratio parameters $\alpha,\beta,\gamma,\delta$, 
the symmetry is written as 
\begin{align}
 (\alpha,\beta,\gamma,\delta) \llrto (\gamma,\delta,\alpha,\beta),
\end{align}
and we see from \eqref{eq:II:msD} that $D$ and $\sigma$, 
and hence $\mu$ are invariant under the switch.
\end{rmk}

Although \eqref{eq:II:P} is an ad-hoc observation,
we can actually prove the following form of the central limit theorem for $p(x)$, 
Theorem \ref{thm:intro:CLT}. 

\if0

\begin{thm}[Central limit theorem of $P_{n,m,k,l}$ in Type II limit]\label{thm:CLT}
Let $X$ be a random variable distributed by the pmf $p(x)=p(x \midd n,m,k,l)$.
Consider the limit $n \to \infty$ with fixed 
$\alpha=\frac{l}{n}$, $\beta =\frac{m-l}{n}$, 
$\gamma=\frac{k-l}{n}$, $\delta=\frac{n-m-k+l}{n}$
satisfying the condition \eqref{eq:II:cond}:
\begin{align}
 \alpha+\gamma,\beta,\delta>0.
\end{align}
Then, for any real $t<u$ we have 
\begin{align}
 \lim_{n \to \infty} \Prb_n\Bigl[t \le \frac{X-n\mu}{\sqrt{n}\sigma} \le u\Bigr]
=\frac{1}{\sqrt{2\pi}} \int_t^u e^{-s^2/2} \, d s, 
\end{align}
where $\Prb_n$ denotes the probability for $P_{n,m,k,l}$,
and $\mu$ and $\sigma$ are given in \eqref{eq:II:msD}.
As a corollary, the pmf $p(x)$ behaves as 
\begin{align}
 p(x) = \frac{1}{\sqrt{2\pi n}\sigma}
 \exp\biggl(-\frac{1}{2}\left(\frac{x-n \mu}{\sqrt{n} \sigma}\right)^2\biggr)
 \cdot \bigl(1+o(n^0)\bigr) \qquad (n \to \infty).
\end{align}
\end{thm}

\begin{rmk}\label{rmk:Catalan}
Let us give a few comments on the quantity $\mu$ and the computation so far.
\begin{enumerate}
\item 
Denoting $\eta \ceq \alpha\gamma+\alpha\delta+\beta\gamma$, 
we have $D=1-4\eta$ and 
\begin{align}
 \mu = \sum_{r \ge 0} C(r) \eta^{r+1}, \quad 
 C(r) \ceq \frac{1}{r+1}\binom{2r}{r} = (\text{the $r$-th Catalan number}).
\end{align}
Thus, Theorem \ref{thm:intro:Racah} and Theorem \ref{thm:E} 
yield a non-trivial identity
\begin{align}
\begin{split}
 \lim_{n \to \infty} \sum_{x=0}^m \frac{x}{n} 
 \binom{n-k}{m-l} \frac{\binom{n}{x}}{\binom{n}{m}} \frac{n-2x+1}{n-x+1} 
&\HG{4}{3}{-x,x-n-1,-M,-N}{-m,m-n,-M-N}{1} \\
&= \sum_{r \ge 0} C(r) \eta^{r+1}.
\end{split}
\end{align}
At this moment we don't have any clue of a direct proof 
for general ratios $\alpha,\beta,\gamma,\delta$.
One of the obstructions for a direct proof is that we don't have 
an effective formula for the expectation $\bbE[X]$.
Neither do we have effective formulas for higher moments or cumulants.
In \ref{ss:II:Jac}, we show this identity under the condition
$(n,m,k,l)=(2m,m,2l,l)$ by a direct calculation.

\item
Let us recall a well-known fact on Catalan numbers.
In the decomposition \eqref{eq:intro:SW} of $(\bbC^2)^{\otimes n}$, the dimension 
$\dim \clV_{(n-x,x)}=\frac{n-2x+1}{n-x+1}\binom{n}{x}=\binom{n}{x}-\binom{n}{x-1}$
(see \eqref{eq:hook}) gives the multiplicity of 
the $\SU(2)$-irrep $\clU_{(n-x,x)}$.
The integer $K(n,x) \ceq \binom{n}{x}-\binom{n}{x-1}$ is the Kostka number,
and the Catalan number is its special case: $C(r)=K(2r,r)$.
Thus, $C(r)$ is the multiplicity of $\clU_{(r,r)}$ 
in the $\SU(2)$-irreducible decomposition of $(\bbC^2)^{\otimes 2r}$.
\end{enumerate}
\end{rmk}
\fi
\section{Proof of law of large numbers, Theorem \ref{thm:intro:E}}\label{ss:AE-prf}


In this subsection, we give a proof of the law of large numbers for $X/n$
(Theorem \ref{thm:intro:E}).
Hereafter the word ``limit" means Type II limit $n \to \infty$
with the fixed ratios $\xi=\frac{m}{n}$, $\kappa=\frac{k}{n}$, $\alpha=\frac{l}{n}$, 
$\beta=\frac{m-l}{n}$, $\gamma=\frac{k-l}{n}$ and $\delta=\frac{n-m-k+l}{n}$.
We will also use the symbol $p_n(x) = p(x \midd n,m,k,l)$.

\subsection{Recursion formula}\label{sss:AE:rec}

Assume $\xi = \frac{m}{n} \le \frac{1}{2}$.
Then, the range of the random variable $X$ is $[0,x_{\tmx}]$ with 
$x_{\tmx} \ceq m \wedge k$, and we have the recurrence relation of the pmf 
$p_n(x)=p(x \midd n,m,k,l)$ in Lemma \ref{lem:intro:rec}.
The recursion can be rewritten as 
\begin{align}\label{eq:AE:NAP}
 A_x^{(n)} p_n(x-1) + B_x^{(n)} p_n(x) = C_x^{(n)} p_n(x+1) \quad (0<x<m)
\end{align}
with coefficients given by 
\begin{align}
\begin{split}\label{eq:AE:Ax} 
  A_x^{(n)} \ceq 
  &-\frac{c_x}{n^2} \frac{n-2x+3}{n-x+2} \frac{\binom{n}{x}}{\binom{n}{x-1}}  \\
 =& \frac{(\frac{m}{n}-\frac{x}{n}+\frac{1}{n})(1-\frac{m}{n}-\frac{x}{n}+\frac{1}{n})
          (\frac{k}{n}+\frac{1}{n}-\frac{x}{n})
          (1-\frac{x}{n}+\frac{1}{n})(1-\frac{2x}{n}+\frac{3}{n})}
         {(1-\frac{x}{n}+\frac{2}{n})
          (1-\frac{2x}{n}+\frac{1}{n})(1-\frac{2x}{n}+\frac{2}{n})}, 
\end{split} 
\\
\begin{split} \label{eq:AE:Bx} 
 B_x^{(n)} \ceq & \, \frac{a_x+c_x-M N}{n^2} \frac{n-2x+1}{n-x+1} \\
 =&     \frac{1-\frac{2x}{n}+\frac{1}{n}}{1-\frac{x}{n}+\frac{1}{n}} 
  \biggl(\frac{(  \frac{m}{n}-\frac{x}{n})(1-\frac{m}{n}-\frac{x}{n})
               (1-\frac{k}{n}-\frac{x}{n})(1-\frac{x}{n}+\frac{1}{n})}
              {(1-\frac{2x}{n})(1-\frac{2x}{n}+\frac{1}{n})} \\
 &\hspace{5em}
        -\frac{(  \frac{m}{n}-\frac{x}{n}+\frac{1}{n})
               (1-\frac{m}{n}-\frac{x}{n}+\frac{1}{n})
               (  \frac{k}{n}+\frac{1}{n}-\frac{x}{n})\frac{x}{n}}
              {(1-\frac{2x}{n}+\frac{1}{n})(1-\frac{2x}{n}+\frac{2}{n})}
        -\frac{M N}{n^2} \biggr), 
\end{split}
\\
\begin{split}\label{eq:AE:Cx}
 C_x^{(n)} \ceq & \frac{a_x}{n^2} \frac{n-2x-1}{n-x} \frac{\binom{n}{x}}{\binom{n}{x+1}} \\
 =& \frac{(  \frac{ m}{n}-\frac{x}{n})(1-\frac{m}{n}-\frac{x}{n})
            (1-\frac{ k}{n}-\frac{x}{n})(  \frac{x}{n}+\frac{1}{n})
            (1-\frac{x}{n}+\frac{1}{n})(1-\frac{2x}{n}-\frac{1}{n})}
           {(1-\frac{2x}{n})(1-\frac{2x}{n}+\frac{1}{n})(1-\frac{x}{n})^2}.
\end{split}
\end{align}
In the expression of $B_x^{(n)}$,  we used $M \ceq m-l$ and $N \ceq n-m-k+l$.
Let us record:

\begin{lem}\label{lem:AE:AxCx}
$A_x^{(n)}$ is positive for any $x \in [0,x_{\tmx}]$.
$C_x^{(n)}$ is positive, zero and negative 
according to $x<n-k$, $x=n-k$ and $x>n-k$, respectively.
\end{lem}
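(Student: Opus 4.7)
The plan is to rewrite $A_x^{(n)}$ and $C_x^{(n)}$ as explicit rational functions with integer-valued linear factors, and then verify each sign claim factor-by-factor. Substituting $a_x$ and $c_x$ from Lemma \ref{lem:rec} into \eqref{eq:AE:Ax}--\eqref{eq:AE:Cx} and using $\binom{n}{x}/\binom{n}{x-1} = (n-x+1)/x$ and $\binom{n}{x}/\binom{n}{x+1} = (x+1)/(n-x)$, I would obtain
\begin{align*}
 A_x^{(n)} &= \frac{(k+1-x)(m-x+1)(n-m-x+1)(n-x+1)(n-2x+3)}
                   {n^2\,(n-x+2)(n-2x+1)(n-2x+2)}, \\
 C_x^{(n)} &= \frac{(m-x)(n-m-x)(n-k-x)(x+1)(n-x+1)(n-2x-1)}
                   {n^2\,(n-x)^2(n-2x)(n-2x+1)},
\end{align*}
where the factor $x$ from $c_x$ cancels the $x$ in $(n-x+1)/x$, and analogously for $C_x^{(n)}$. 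Note that the sign flip in $A_x^{(n)}$ (recall $A_x^{(n)} = -c_x/n^2 \cdot \dotsb$) absorbs the $(x-k-1)$ factor of $c_x$ into $(k+1-x)$.

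From this form the first assertion is immediate. Under the standing assumption $\xi = m/n \le 1/2$ (equivalently $n \ge 2m$) and $0 \le x \le x_{\tmx} = m \wedge k$, every factor in the formula for $A_x^{(n)}$ is at least $1$: $k+1-x \ge 1$ since $x \le k$; $m-x+1 \ge 1$ since $x \le m$; $n-m-x+1 \ge n-2m+1 \ge 1$ since $x \le m \le n-m$; $n-x+1 \ge m+1$; $n-2x+3 \ge 3$; while the denominator factors $n-x+2, n-2x+1, n-2x+2$ are bounded below by $2,1,2$. Hence $A_x^{(n)} > 0$ throughout the claimed range.

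For the second assertion, the same bounds yield $m-x \ge 0$, $n-m-x \ge 0$, and $n-2x-1 \ge 0$ in the interior of the range, while $x+1,\, n-x+1,\, (n-x)^2,\, n-2x,\, n-2x+1$ are all strictly positive. The only factor whose sign is not fixed is $n-k-x$, which is positive, zero, or negative exactly as $x<n-k$, $x=n-k$, or $x>n-k$. Thus $C_x^{(n)}$ inherits the sign of $n-k-x$ as claimed. The main obstacle I anticipate is the degenerate corner $x=m$ (and when additionally $m=n/2$), where $m-x$ in the numerator and $n-2x$ in the denominator vanish simultaneously; but this is merely a bookkeeping issue, since the recursion \eqref{eq:AE:NAP} is only invoked in the range $0<x<m$ where no such cancellation occurs, and the sign analysis is completely clean there.
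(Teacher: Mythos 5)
Your proof is correct and follows exactly the route the paper intends: the factored forms you derive for $A_x^{(n)}$ and $C_x^{(n)}$ are precisely the expressions already displayed in \eqref{eq:AE:Ax} and \eqref{eq:AE:Cx} (after clearing the powers of $n$), and the paper records the lemma with no further argument because the sign of each linear factor is read off just as you do. Your remark about the degenerate corner $x=m$ is the right caveat, since the recursion \eqref{eq:AE:NAP} is only used for $0<x<m$, where every factor other than $n-k-x$ is strictly positive.
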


Let us use the variable $t=x/n$ and 
the fixed ratios $\xi=m/n$, $\kappa=k/n$ and $\alpha=l/n$.
Then, as we saw in \eqref{eq:II:rec}, 
the coefficients $A_x^{(n)}$, $B_x^{(n)}$ and $C_x^{(n)}$ converge to 
\begin{align}
\begin{split}
 \lim_{n \to \infty}A_{n t}^{(n)} &= \frac{(\xi-t)(1-\xi-t)(\kappa-t)}{1-2t}, \\
 \lim_{n \to \infty}C_{n t}^{(n)} &= 
 \frac{(\xi-t)(1-\xi-t)(1-\kappa-t)t}{(1-t)(1-2t)}, \\
 \lim_{n \to \infty}B_{n t}^{(n)} &= \frac{1-2t}{1-t}
 \biggl(\frac{(\xi-t)(1-\xi-t)(1-\kappa-t)(1-t)}{(1-2t)^2}  \\
&\phantom{1-2t T (} 
       -\frac{(\xi-t)(1-\xi-t)(\kappa-t)t}{(1-2t)^2} 
       -(\xi-\alpha)(1-\xi-\kappa+\alpha) \biggr), 
\end{split}
\end{align}
By \eqref{eq:II:quad}, we also have
\begin{align}\label{eq:AE:NM-lim}
 \lim_{n \to \infty}\left(A_{n t}^{(n)}+B_{n t}^{(n)}-C_{n t}^{(n)}\right)
 = \frac{1-2t}{1-t} (t-\mu)(t-\nu)
\end{align}
with $\mu$ and $\nu$ given by $\mu \ceq \frac{1-\sqrt{D}}{2}$,  
$\nu \ceq \frac{1+\sqrt{D}}{2}$ and $D \ceq 1-4(\alpha\gamma+\alpha\delta+\beta\gamma)$
as in \eqref{eq:II:msD}.

In the later discussion, we will also use 
the following rewritten form of \eqref{eq:AE:NAP}:
\begin{align}\label{eq:AE:AP}
 \bigl(p_n(x)-p_n(x+1)\bigr) -\eta_{1,x} \bigl(p_n(x-1)-p_n(x)\bigr) = 
-\eta_{2,x} p_n(x)
\end{align}
with coefficients given by 
\begin{align}
\begin{split}\label{eq:AE:eta1}
 \eta_{1,x} \ceq & -\frac{A_x^{(n)}}{C_x^{(n)}} \\
 =&-\frac{(1-\frac{x}{n})^2 (\frac{k}{n}-\frac{x}{n}+\frac{1}{n})
          (\frac{m}{n}-\frac{x}{n}+\frac{1}{n})(1-\frac{m}{n}-\frac{x}{n}+\frac{1}{n})}
         {(\frac{m}{n}-\frac{x}{n})(1-\frac{m}{n}-\frac{x}{n})(1-\frac{k}{n}-\frac{x}{n})
          (\frac{x}{n}+\frac{1}{n})(1-\frac{x}{n}+\frac{2}{n})} 
    \frac{(1-\frac{2x}{n})(1-\frac{2x}{n}+\frac{3}{n})}
         {(1-\frac{2x}{n}+\frac{1}{n})(1-\frac{2x}{n}+\frac{2}{n})}, 
\end{split}
\\
\label{eq:AE:eta2} 
 \eta_{2,x} 
\ceq&\frac{A_x^{(n)}+B_x^{(n)}-C_x^{(n)}}{C_x^{(n)}}.
\end{align}
Here we assumed $x \neq n-k$ so that $C_x^{(n)} \neq 0$ (see Lemma \ref{lem:AE:AxCx})
and $\eta_{1,x}, \eta_{2,x}$ are well-defined.
In the limit $n \to \infty$, we have 
\begin{align}\label{eq:AE:eta-lim}
 \lim_{n \to \infty} \eta_{1,n t}
 =\frac{(1- t)(t-\kappa)}{ t(1-\kappa-t)}, \quad 
 \lim_{n \to \infty} \eta_{2,n t}
 =\frac{(t-\mu)(t-\nu)(1-2t)^2}{(\xi-t)(1-\xi-t)t(1-\kappa-t)}.
\end{align}
We will also use the solutions $\zeta=\zeta_{1,x},\zeta_{2,x}$ of 
the characteristic equation
\begin{align}\label{eq:AE:ch}
 \zeta^2 - (1+\eta_{1,x}+\eta_{2,x}) \zeta + \eta_{1,x} = 0
\end{align}
for the recursion \eqref{eq:AE:AP}.
We can further rewrite \eqref{eq:AE:ch} as 
\begin{align}\label{eq:AE:AP2}
 p_n(x+1) - \zeta_{i,x} \, p_n(x) = 
 \zeta_{j,x} \bigl(p_n(x)- \zeta_{i,x} \, p_n(x-1)\bigr), \quad \{i,j\}=\{1,2\}.
\end{align}
Our choice of $\zeta_{1,x}$ and $\zeta_{2,x}$ is
\begin{align}\label{eq:AE:zeta}
 \zeta_{1,x} \ceq \frac{\theta_x-\sqrt{\theta_x^2-4\eta_{1,x}}}{2}, \quad 
 \zeta_{2,x} \ceq \frac{\theta_x+\sqrt{\theta_x^2-4\eta_{1,x}}}{2}, \quad 
 \theta_x    \ceq 1+\eta_{1,x}+\eta_{2,x}.
\end{align}

\subsection{Random variable $Z$ with bounded support}
As in \S \ref{sss:AE:rec}, we assume $\xi \le 1/2$, 
so that the range of $X/n$ is $[0,t_{\tmx}]$ with $t_{\tmx} \ceq \xi \wedge \kappa$.
We also continue to use $p_n(x) \ceq p(x \midd n,m,k,l)$.
We consider the random variable $Z \ceq \frac{X-n \mu}{n}$ which has a bounded support: 
\begin{align}
 p_n\bigl(n(z+\mu)\bigr)=0 \ \text{ unless } \ z \in [-\mu,t_{\tmx}-\mu].
\end{align}
In this part we show:

\begin{prp}\label{prp:AE:lim-p}
For any $z \in [-\mu,t_{\tmx}-\mu] \setminus \{0\}$ and any $i \in \bbZ$, we have 
\begin{align}
 \lim_{n \to \infty} p_n\bigl(n(z+\mu)-i\bigr) = 0.
\end{align}
\end{prp}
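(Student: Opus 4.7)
The plan is to establish Proposition \ref{prp:AE:lim-p} by using the three-term recurrence \eqref{eq:AE:NAP} to show that for any fixed $z \neq 0$ the pmf $p_n$ decays to $0$ on every macroscopic neighborhood of the integer $n(z+\mu)$. By the system symmetry \eqref{eq:sym}, which preserves $\mu$ by Remark \ref{rmk:msD:sym}, we may assume $\xi = m/n \le 1/2$, so that the recursion \eqref{eq:AE:NAP} is available. It also suffices to treat $i=0$: once $\lim_n p_n(n(z+\mu)) = 0$ is established for every fixed $z \neq 0$, the statement for an arbitrary fixed integer shift follows by \eqref{eq:AE:NAP} and Lemma \ref{lem:AE:AxCx}, since those express $p_n(x \pm 1)$ as a linear combination of $p_n(x)$ and $p_n(x \mp 1)$ whose coefficients are bounded uniformly in $n$ for $x/n$ in compact subsets of $(0, t_{\tmx})$.

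Setting $r_x^{(n)} := p_n(x+1)/p_n(x)$, the recursion gives $C_x^{(n)} r_x^{(n)} = B_x^{(n)} + A_x^{(n)}/r_{x-1}^{(n)}$, whose candidate asymptotic values are the roots $\zeta_{1,x}^{(n)}, \zeta_{2,x}^{(n)}$ of the characteristic quadratic $C_x^{(n)}\zeta^2 - B_x^{(n)}\zeta - A_x^{(n)} = 0$ appearing in \eqref{eq:AE:zeta}. Using the limit \eqref{eq:AE:NM-lim}, one checks that at $t = \mu$ the asymptotic discriminant equals $(1-\eta_{1,\mu})^2 > 0$ with roots $\{1, \eta_{1,\mu}\}$, $\eta_{1,\mu} \le 0$; and by continuity, for each compact $K \subset (0, t_{\tmx}) \setminus \{\mu\}$ the root $\zeta_{2,x}^{(n)}$ (the one close to $1$ at $t = \mu$) is bounded strictly below $1$ on $K \cap (\mu, t_{\tmx})$ and strictly above $1$ on $K \cap (0, \mu)$, uniformly for $n$ sufficiently large.

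Fix $z > 0$ (the case $z < 0$ is symmetric) and a small $\varepsilon > 0$ with $[\mu+\varepsilon, \mu+z] \subset (0, t_{\tmx})$. To turn the pointwise root bound $\zeta_{2,x}^{(n)} \le c_0 < 1$ into something that actually governs $p_n$, I plan to pass to the auxiliary sequence
\[
 u_x := p_n(x) - \zeta_{1,x-1}^{(n)} \, p_n(x-1),
\]
which, via the rewritten recursion \eqref{eq:AE:AP2}, satisfies the inhomogeneous linear recurrence
\[
 u_{x+1} = \zeta_{2,x}^{(n)} u_x + \zeta_{2,x}^{(n)} \bigl(\zeta_{1,x-1}^{(n)} - \zeta_{1,x}^{(n)}\bigr) \, p_n(x-1).
\]
The contraction factor $|\zeta_{2,x}^{(n)}| \le c_0$ is uniformly below $1$ on our interval, while the discrete increment $\zeta_{1,x-1}^{(n)} - \zeta_{1,x}^{(n)}$ is $O(1/n)$ since $\zeta_{1,x}^{(n)}$ is smooth in $x/n$. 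Iterating from $x_0 := \lceil n(\mu+\varepsilon) \rceil$ up to $x = n(\mu+z)$ and using $\sum_y p_n(y) = 1$ to dominate the inhomogeneity, which appears as a geometric convolution of $p_n$ with kernel $c_0^{x-y}$, yields $|u_x| \le c_0^{\,x-x_0} |u_{x_0}| + O(1/n)$. Since $|u_{x_0}| \le 1 + |\zeta_{1,x_0-1}^{(n)}|$ is bounded, the first term vanishes exponentially in $n$; then $p_n(x) = \zeta_{1,x-1}^{(n)} p_n(x-1) + u_x$, iterated with $|\zeta_{1,x}^{(n)}|$ uniformly bounded away from $1$, gives $p_n(n(\mu+z)) = O(1/n) \to 0$.

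The hard part will be the bookkeeping in this iteration --- in particular, verifying the uniform bound $|\zeta_{1,x}^{(n)}| < 1 - \delta$ and the $O(1/n)$ bound on its discrete derivative throughout the interval, and confirming that the asymptotic root selection in \eqref{eq:AE:zeta} is maintained for all sufficiently large $n$ (neither characteristic root should cross, nor should the discriminant change sign). The case $z < 0$ is dealt with analogously by running the recurrence in reverse: there $|\zeta_{2,x}^{(n)}| \ge 1/c_0 > 1$ for $t$ slightly less than $\mu$, so solving \eqref{eq:AE:AP2} for $u_x$ in terms of $u_{x+1}$ gives a backward recurrence that contracts by factor $1/|\zeta_{2,x}^{(n)}|$, and the argument above is parallel.
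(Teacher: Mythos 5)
Your overall strategy --- rewrite the recursion in the factored form \eqref{eq:AE:AP2} and exploit that away from $t=\mu$ one of the characteristic roots is separated from $1$ --- is the same as the paper's, which however argues qualitatively: subsequential limits $a_j$ of $p_n$ at fixed integer offsets satisfy a constant-coefficient recursion, and $0\le a_j\le 1$ forces the telescoped constant $a_0-\ol{\zeta}_{1,z}a_{-1}$ to vanish. Your quantitative version has a genuine gap on the side $z>0$. You assert that $\zeta_{2,x}^{(n)}$ is uniformly bounded strictly below $1$ on compact subsets of $(\mu,t_{\tmx})$ ``by continuity'' from its value $1$ at $t=\mu$. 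This fails whenever $k>n-m$, which is allowed in $\clN$ (only $k\le n-m+l$ is required): then $n-k$ lies strictly inside the support $[0,\,m\wedge k]$, $C_x^{(n)}$ vanishes at $x=n-k$ and is negative for $x>n-k$ (Lemma \ref{lem:AE:AxCx}), so $\eta_{1,x},\eta_{2,x}$ and hence $\zeta_{1,x},\zeta_{2,x}$ are undefined at $x=n-k$; and for $1-\kappa-\mu<z\le t_{\tmx}-\mu$ one has $0<\ol{\zeta}_{1,z}<1<\ol{\zeta}_{2,z}$ by Lemma \ref{lem:AE:lem1} (3) --- the opposite of what your forward contraction needs. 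There is no continuity across $t=1-\kappa$ because $\ol{\eta}_{1,t-\mu}$ and $\ol{\eta}_{2,t-\mu}$ blow up there. The paper handles this with three sub-cases ($0<z<1-\kappa-\mu$; $z=1-\kappa-\mu$, treated with the raw recursion \eqref{eq:AE:NAP} since $C^{(n)}_{n-k}=0$; and $z>1-\kappa-\mu$, where the growth/contraction roles of the two roots are exchanged). Your argument as written covers only the first of these.

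A second, smaller flaw: in the last step you iterate $p_n(x)=\zeta_{1,x-1}^{(n)}p_n(x-1)+u_x$ on the grounds that $|\zeta_{1,x}^{(n)}|$ is uniformly bounded away from $1$, but $\ol{\zeta}_{1,z}$ is close to $\ol{\eta}_{1,z}$ (the roots at $t=\mu$ are $1$ and $\ol{\eta}_{1,\mu}$), and $|\ol{\eta}_{1,\mu}|=\frac{(1-\mu)(\kappa-\mu)}{\mu(1-\kappa-\mu)}$ can be arbitrarily large (e.g.\ for small $\mu$), so this iteration does not contract. The repair is immediate wherever $\zeta_{1,x}^{(n)}<0$: there $u_x=p_n(x)+|\zeta_{1,x-1}^{(n)}|\,p_n(x-1)\ge p_n(x)\ge 0$, so your bound $|u_x|\le c_0^{\,x-x_0}|u_{x_0}|+O(1/n)$ already yields $p_n(x)=O(1/n)$ with no further iteration. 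Note, however, that this repair is unavailable precisely in the regime $z>1-\kappa-\mu$ where $\zeta_{1,x}^{(n)}>0$, so the first gap still requires a separate argument along the lines of the paper's case analysis.
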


In the proof, we use the recursion \eqref{eq:AE:AP} 
with coefficients $\eta_{1,x},\eta_{2,x}$, 
and use the solutions $\zeta_{1,x}, \zeta_{2,x}$ 
in \eqref{eq:AE:zeta} of the characteristic equation \eqref{eq:AE:ch}.
We denote the limits of coefficients and solutions by
\begin{align}
 \ol{ \eta}_{i,z} \ceq \lim_{n \to \infty}  \eta_{i, n(z+\mu)}, \quad 
 \ol{\zeta}_{i,z} \ceq \lim_{n \to \infty} \zeta_{i, n(z+\mu)}  \quad (i=0,1)
\end{align}
if they converge.
In such a case, we have similar relations as \eqref{eq:AE:zeta}:
\begin{align}\label{eq:AE:zeta-lim}
 \ol{\zeta}_{1,z} = 
 \frac{\ol{\theta}_z-\bigl(\ol{\theta}_z^2-4\ol{\eta}_{1,z}\bigr)^{1/2}}{2}, \quad 
 \ol{\zeta}_{2,z} = 
 \frac{\ol{\theta}_z+\bigl(\ol{\theta}_z^2-4\ol{\eta}_{1,z}\bigr)^{1/2}}{2}, \quad 
 \ol{\theta}_z \ceq 1+\ol{\eta}_{1,z}+\ol{\eta}_{2,z}.
\end{align}

\begin{lem}\label{lem:AE:lem1}
The limits $\ol{\zeta}_{1,z}$ and $\ol{\zeta}_{2,z}$ satisfy the followings.
\begin{enumerate}
\item
For $-\mu < z < 0$, we have $\ol{\zeta}_{1,z}<0$ and $1<\ol{\zeta}_{2,z}$.

\item 
For $z=0$, we have $\ol{\zeta}_{1,0}<0$ and $\ol{\zeta}_{2,0}=1$.

\item
For $0 < z \le t_{\tmx}-\mu$, we have 
\begin{align}
 \begin{cases}
    \ol{\zeta}_{1,z}<0<\ol{\zeta}_{2,z}<1 & (0 < z < 1-\kappa-\mu) \\
  0<\ol{\zeta}_{1,z}<1<\ol{\zeta}_{2,z}   & (1-\kappa-\mu < z \le t_{\tmx}-\mu)
 \end{cases}.
\end{align}

\end{enumerate}
\end{lem}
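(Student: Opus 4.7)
The plan is to reduce everything to sign analysis of the monic quadratic
$f(\zeta) := \zeta^2 - \ol{\theta}_z \zeta + \ol{\eta}_{1,z}$,
whose roots are $\ol{\zeta}_{1,z} \le \ol{\zeta}_{2,z}$. After substituting $t := z+\mu$ into \eqref{eq:AE:eta-lim}, I would record the two key evaluations $f(0) = \ol{\eta}_{1,z}$ and $f(1) = -\ol{\eta}_{2,z}$, together with Vieta's formulas $\ol{\zeta}_{1,z}\,\ol{\zeta}_{2,z} = \ol{\eta}_{1,z}$ and $\ol{\zeta}_{1,z} + \ol{\zeta}_{2,z} = \ol{\theta}_z$. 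Since $f$ opens upward, the sign of $f$ at a test point tells whether that point lies inside or outside the interval $[\ol{\zeta}_{1,z}, \ol{\zeta}_{2,z}]$, so the signs of $\ol{\eta}_{1,z}$ and $\ol{\eta}_{2,z}$ completely pin down the location of the roots relative to $\{0, 1\}$.

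Throughout the range $t \in (0, t_{\tmx}]$ of interest, the factors $(1-t)$, $(\xi-t)$, $(1-\xi-t)$, $t$ in the limit formulas are non-negative, and Lemma \ref{lem:D} gives $\nu \ge 1/2 \ge t_{\tmx} \ge t$, so $(t-\nu) \le 0$. The signs of $\ol{\eta}_{1,z}$ and $\ol{\eta}_{2,z}$ are therefore governed entirely by $(t-\mu)$, $(t-\kappa)$ and $(1-\kappa-t)$. For $0 < t < \mu$, the inequalities $\mu < \kappa$ and $\mu < 1-\kappa$ from Lemma \ref{lem:D} yield $\ol{\eta}_{1,z} < 0$ and $\ol{\eta}_{2,z} > 0$, so $f(0) < 0$ and $f(1) < 0$ place both $0$ and $1$ strictly between the two roots, giving case (1). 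For $t = \mu$ the factor $(t-\mu)$ kills $\ol{\eta}_{2,z}$, making $1$ a root of $f$; Vieta then yields the companion root $\ol{\eta}_{1,z}/1 < 0$ as the smaller one, giving case (2).

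For case (3), $t \in (\mu, t_{\tmx}]$, the support bound $t \le t_{\tmx} \le \kappa$ means that $\ol{\eta}_{1,z}$ and $\ol{\eta}_{2,z}$ switch signs only at the boundary $t = 1-\kappa$, which requires $\kappa > 1/2$ for sub-case (3b) to be non-empty. In sub-case $t < 1-\kappa$, positivity of $(1-\kappa-t)$ forces $\ol{\eta}_{2,z} < 0$, so $f(1) > 0$; combined with $\ol{\eta}_{1,z} \le 0$ (from $t \le \kappa$), the roots lie one in $(-\infty, 0]$ and one in $(0, 1)$, yielding $\ol{\zeta}_{1,z} < 0 < \ol{\zeta}_{2,z} < 1$. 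In sub-case $t > 1-\kappa$, now $\kappa > 1/2$ forces $t < 1/2 < \kappa$, so $\ol{\eta}_{1,z} > 0$; the sign flip in $(1-\kappa-t)$ makes $\ol{\eta}_{2,z} > 0$, so $f(1) < 0$ places $1$ strictly between the two roots, which are both positive by $\ol{\zeta}_{1,z}\,\ol{\zeta}_{2,z} = \ol{\eta}_{1,z} > 0$. This gives $0 < \ol{\zeta}_{1,z} < 1 < \ol{\zeta}_{2,z}$.

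The main obstacle is ensuring reality of the roots, i.e., non-negativity of the discriminant $\ol{\theta}_z^2 - 4\ol{\eta}_{1,z}$. Whenever $\ol{\eta}_{1,z} \le 0$ this is automatic, since $f(0) \le 0$ forces the upward parabola to cross the axis twice; in sub-case (3b) the combination $\ol{\eta}_{1,z} > 0$ together with $f(1) < 0$ again forces two real roots straddling $1$, bypassing any direct algebraic verification. A minor subtlety arises at the isolated point $t = \kappa$ inside sub-case (3a) (present only when $\kappa < 1/2$ and $\xi \ge \kappa$), where $\ol{\eta}_{1,z} = 0$ degenerates the strict inequality $\ol{\zeta}_{1,z} < 0$ to equality; this single point does not affect the subsequent tail estimates in which the lemma is applied.
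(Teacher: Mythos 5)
Your proof is correct and takes essentially the same route as the paper's: both reduce the lemma to sign analysis of the upward-opening quadratic $f_z(\zeta)=\zeta^2-\ol{\theta}_z\zeta+\ol{\eta}_{1,z}$ via the evaluations $f_z(0)=\ol{\eta}_{1,z}$, $f_z(1)=-\ol{\eta}_{2,z}$ and the signs of the limits in \eqref{eq:AE:eta-lim}. Your remark about the degenerate point $t=\kappa$ inside sub-case (3a), where $\ol{\eta}_{1,z}=0$ forces $\ol{\zeta}_{1,z}=0$ rather than $\ol{\zeta}_{1,z}<0$, is a genuine edge case that the paper's proof silently passes over (it asserts $\ol{\eta}_{1,z}<0$ there), and you are right that it is harmless for the later tail estimates.
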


\begin{proof}
Note that we always have $\mu \le 1-\kappa$. 
We set $f_z(\ol{\zeta}) \ceq \ol{\zeta}^2-\ol{\theta}_z \ol{\zeta} + \ol{\eta}_{1,z}$
with $\ol{\theta}_z \ceq 1+\ol{\eta}_{1,z}+\ol{\eta}_{2,z}$.
Note that $\ol{\zeta}_{1,z}, \ol{\zeta}_{2,z}$ are solutions of $f_z(\ol{\zeta})=0$,
and that $f_z(0)=\ol{\eta}_{1,z}$, $f_z(1)=-\ol{\eta}_{2,z}$.
\begin{enumerate}
\item
We find from \eqref{eq:AE:eta-lim} that $\ol{\eta}_{1,z}<0$ for $z<0 \le 1-\kappa-\mu$,
and that $\ol{\eta}_{2,z}>0$ for $-\mu<z<0$.
Hence, we have $f_z(0)<0$ and $f_z(1)<0$.
On the other hand, the inequality $\ol{\eta}_{1,z}<0$ and \eqref{eq:AE:zeta-lim} 
yield $\ol{\zeta}_{1,z}<0<\ol{\zeta}_{2,z}$.
Combining these inequalities and considering the graph of the quadratic function $f_z$,
we have the result. 


\item 
By \eqref{eq:AE:eta-lim}, we have $\ol{\eta}_{1,0}<0$ and $\ol{\eta}_{2,0}=0$.
Then, $f_0(\ol{\zeta})=(\ol{\zeta}-\ol{\eta}_{1,0})(\ol{\zeta}-1)$, and we have 
$\ol{\zeta}_{1,0}=\ol{\eta}_{1,0}<0$, $\ol{\zeta}_{2,0}=1$.

\item
If $0<z<1-\kappa-\mu$, then we have $\ol{\eta}_{1,z}, \ol{\eta}_{2,z}<0$ 
by \eqref{eq:AE:eta-lim}.
Thus, we have $f_z(0)<0<f_z(1)$, which implies the result.
Similarly, if $z>1-\kappa-\mu$, then we have $\ol{\eta}_{1,z}, \ol{\eta}_{2,z}>0$,
which means $f_z(1)<0<f_z(0)$, and we have the result.
\end{enumerate}
\end{proof}

Now, we show Proposition \ref{prp:AE:lim-p}.
\begin{proof}[{Proof of Proposition \ref{prp:AE:lim-p}}]
Let us take $z \in [-\mu,t_{\tmx}-\mu]\setminus\{0\}$ and $i \in \bbZ$ 
as in the statement.

First, we show the statement in the case $-\mu<z<0$.
We define $a_j \ceq \lim_{n \to \infty}p_n(n(z+\mu)-i+j)$ for $j=-1,0$ if the limits exist.
If not, then we take a subsequence $\{n_h\}_h$ such that the limits exist.
Note that such a subsequence does exist.
Then, for each $j \in \bbZ_{\ge 1}$, 
the limit $a_j \ceq \lim_{n \to \infty}p_n(n(z+\mu)-i+j)$ exists, 
and the recursion \eqref{eq:AE:AP2} yields
\begin{align}\label{eq:AE:lim-p:z<0}
 \ol{\zeta}_{2,z}^{-j} \bigl(a_{j+1}-\ol{\zeta}_{1,z} a_j\bigr)
 = a_0 - \ol{\zeta}_{1,z} a_{-1}
\end{align}
for any $j \in \bbN$.
By Lemma \ref{lem:AE:lem1}, we have $\ol{\zeta}_{1,z}<0<1<\ol{\zeta}_{2,z}$.
Then, the equality \eqref{eq:AE:lim-p:z<0} and $0 \le a_j \le 1$ yield 
\begin{align}\label{eq:AE:5329}
 \ol{\zeta}_{2,z}^{-j} \bigl(1+\abs{\ol{\zeta}_{1,z}} \bigr)
 \ge  a_0 + \abs{\ol{\zeta}_{1,z}} a_{-1}
\end{align}
for any $j \in \bbN$, which implies $a_{-1}=a_0=0$.
Thus, the desired claim $a_0= \lim_{n \to \infty} p_n(n(z+\mu)-i)=0$ is proved.

The second case $0<z<1-\kappa-\mu$ can be treated quite similarly.
We can define the limit $a_j \ceq \lim_{n \to \infty} p_n(n(z+\mu)-i-j)$ 
for $j \in \bbZ_{\ge -1}$, and the recursion \eqref{eq:AE:AP2} yields
$\ol{\zeta}_{2,z}^j \bigl(a_{j+1}-\ol{\zeta}_{1,z} a_j\bigr)
 = a_0 - \ol{\zeta}_{1,z} a_{-1}$ for $j \in \bbN$.
By Lemma \ref{lem:AE:lem1}, 
we have $\ol{\zeta}_{1,z}<0<\ol{\zeta}_{2,z}<1$, which yields
\begin{align}
 \ol{\zeta}_{2,z}^j \bigl(1+\abs{\ol{\zeta}_{1,z}} \bigr)
 \ge  a_0 + \abs{\ol{\zeta}_{1,z}} a_{-1}
\end{align}
for any $j \in \bbN$.
It implies $a_{-1}=a_0=0$, and we are done.

The next case is $1-\kappa-\mu<z$.
As before, we can define the limit 
$a_j \ceq \lim_{n \to \infty}p_n(n(z+\mu)-i+j)$ for $j \in \bbZ_{\ge -1}$, 
and the recursion \eqref{eq:AE:AP2} yields
the same equality as \eqref{eq:AE:lim-p:z<0}.
Then, $0 \le a_j \le 1$ yields the same inequality as \eqref{eq:AE:5329}.
By Lemma \ref{lem:AE:lem1}, we have $0<\ol{\zeta}_{1,z}<1<\ol{\zeta}_{2,z}$,
and the obtained inequality implies $a_{-1}=a_0=0$.

Next, we consider the case $z=1-\kappa-\mu$, i.e., the case $x=n-k$.
We can define the limit $a_j \ceq \lim_{n \to \infty}p_n(n(z+\mu)-i-j)$ for $j \in \bbN$.
Since \eqref{eq:AE:AP2} is not available, 
we use the original recurrence relation \eqref{eq:AE:NAP}.
By \eqref{eq:AE:Cx} and \eqref{eq:AE:NM-lim},
the coefficients are $A_{n-k}^{(n)}>0$, $B_{n-k}^{(n)}<0$ and $C_{n-k}^{(n)}=0$
satisfying $\ol{\zeta}_{2,z} \ceq -A_{n-k}^{(n)}/B_{n-k}^{(n)}<1$.
We then have $\ol{\zeta}_{2,z}^j a_j =a_0$, which yields the result $a_0=0$.

The remaining case $z=-\mu$, i.e., the case $x=0$, 
can be treated similarly as in the previous case $z=1-\kappa-\mu$.
In this case, we can also show $\lim_{n \to \infty}p_n(i)=0$ for any fixed $i \in \bbN$
directly from the explicit presentations.
We omit the detail.
\end{proof}

\subsection{Final step of our proof}\label{sss:E-prf}

We keep the notations, and consider the random variable $Z=\frac{X-n \mu}{n}$.
Let $G_n$ be the corresponding cdf, i.e., 
\begin{align}\label{eq:AE:G_n}
 G_n(z) \ceq \sum_{u=0}^{n(z+\mu)} p(u \midd n,m,k,l).
\end{align}
By the compact support property, there exists a subsequence 
$\{n_i \}_i \subset \bbZ_{>0}$ which has the limit distribution 
\begin{align}
 G \ceq \lim_{i \to \infty} G_{n_i}.
\end{align} 
Then, Theorem \ref{thm:intro:E} follows from the next statement:

\begin{thm}\label{thm:LLN}
The limit distribution $G$ has a positive probability only at $Z=0$.
\end{thm}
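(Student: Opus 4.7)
The strategy is to show that the cdf $G$, which lives on the compact interval $[-\mu, t_{\tmx}-\mu]$ with total mass $1$, assigns zero mass to any closed subinterval disjoint from $\{0\}$; all mass is then forced to $Z = 0$. The immediate obstacle is that Proposition \ref{prp:AE:lim-p} gives only the pointwise vanishing $\lim_n p_n(n(z+\mu)-i) = 0$ for each fixed $z \neq 0$ and $i \in \bbZ$, whereas $\Prb_n[Z \in [a,b]]$ is a sum of order $n(b-a)$ such probabilities. To bridge this gap, I would upgrade the argument behind Proposition \ref{prp:AE:lim-p} to a geometric estimate for $p_n$ that is uniform on closed sub-intervals of $[-\mu,t_{\tmx}-\mu]\setminus\{0\}$.

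First I would split the target interval according to the three regimes of Lemma \ref{lem:AE:lem1}, namely $(-\mu,0)$, $(0,1-\kappa-\mu)$, and $(1-\kappa-\mu, t_{\tmx}-\mu]$; the exceptional endpoints $z = -\mu$ (i.e., $x = 0$) and $z = 1-\kappa-\mu$ (i.e., $x = n-k$) contribute at most a bounded number of indices and can be handled separately exactly as in the proof of Proposition \ref{prp:AE:lim-p}. On a closed sub-interval $[a,b]$ of one of these three open regimes, the roots $\zeta_{1,x}, \zeta_{2,x}$ of \eqref{eq:AE:zeta} converge uniformly in $x/n \in [a+\mu,b+\mu]$ to the continuous functions $\ol{\zeta}_{1,z}, \ol{\zeta}_{2,z}$, and by Lemma \ref{lem:AE:lem1} one of $|\ol{\zeta}_{i,z}|$ stays strictly below some $\rho < 1$ while the other stays strictly above $\rho^{-1}$, uniformly for $z \in [a,b]$.

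Next I would iterate the factored recursion \eqref{eq:AE:AP2} from a suitable boundary index $x_{\ast} \in \{\lceil n(a+\mu)\rceil, \lfloor n(b+\mu)\rfloor\}$, choosing the direction of iteration so that the dominating factor is the root of modulus less than $\rho$. This yields, for sufficiently large $n$ and for all $x$ with $x/n - \mu \in [a,b]$, a bound of the form
\begin{align*}
 p_n(x) \le C_1 \rho^{|x - x_{\ast}|} p_n(x_{\ast}) + C_2 \rho^{|x - x_{\ast}|} p_n(x_{\ast} \pm 1),
\end{align*}
with constants $C_1,C_2$ independent of $n$ on the subsequence. Proposition \ref{prp:AE:lim-p} then ensures $p_n(x_{\ast}), p_n(x_{\ast}\pm 1) \to 0$, so summing the geometric series gives $\Prb_n[Z \in [a,b]] \to 0$. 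A finite cover of any closed $I \not\ni 0$ by such subintervals then yields $\int_I dG = 0$.

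The main obstacle will be the careful bookkeeping of signs and of boundary remainders when iterating the second-order recursion \eqref{eq:AE:AP2}: in particular, ensuring that the ``geometric'' decomposition can actually be turned into a one-sided inequality (using the nonnegativity $p_n \ge 0$) in each of the three regimes of Lemma \ref{lem:AE:lem1}, and handling the transition points $z = 0$ and $z = 1-\kappa-\mu$ where the characteristic roots cross unit modulus. Once this uniform geometric decay is established, the theorem follows at once: $G$ has no mass outside $\{0\}$, and since $G$ is a probability distribution, it must coincide with the point mass at $Z = 0$, whence $X/n \to \mu$ in probability and the law of large numbers (Theorem \ref{thm:E}) is proved.
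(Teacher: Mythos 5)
Your route is genuinely different from the paper's. The paper does not attempt any pointwise geometric bound on $p_n$ over a whole subinterval: instead it sums the three-term recursion \eqref{eq:AE:NAP} over the window $x\in[n(z+\mu),n(z+\mu+\ep)]$, replaces $A_x^{(n)},B_x^{(n)},C_x^{(n)}$ by their extrema on that window, and telescopes to get
$\bigl(\min_x A_x^{(n)}+\min_x B_x^{(n)}-\max_x C_x^{(n)}\bigr)\bigl(G_n(z+\ep)-G_n(z)\bigr)\le(\text{finitely many boundary terms }p_n(\cdot))$,
where the prefactor has a strictly positive limit for $z<0$ by \eqref{eq:AE:NM-lim} (and strictly negative for $z>0$, whence the reversed inequality), and the boundary terms vanish by Proposition \ref{prp:AE:lim-p}. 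This needs no control of the characteristic roots $\zeta_{i,x}$ at all and sidesteps the monotonicity bookkeeping entirely; the root analysis of Lemma \ref{lem:AE:lem1} is used only inside Proposition \ref{prp:AE:lim-p}. Your geometric-decay strategy is closer in spirit to what the paper does later for the CLT tail bound (Proposition \ref{prp:tail}), and if carried out it would give more (exponential smallness of $\Prb_n[Z\in[a,b]]$ rather than mere vanishing), at the price of importing that machinery here.

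There is, however, a concrete gap in your plan for the third regime $1-\kappa-\mu<z\le t_{\tmx}-\mu$ (nonempty when $\kappa>1/2$). There Lemma \ref{lem:AE:lem1} gives $0<\ol{\zeta}_{1,z}<1<\ol{\zeta}_{2,z}$, so $p_n(x+1)-\zeta_{1,x}\,p_n(x)$ is a \emph{difference} of nonnegative terms, and your stated mechanism --- turning \eqref{eq:AE:AP2} into a one-sided inequality via $p_n\ge 0$ and concluding $p_n(x+1)\le v_x$ for a geometrically decaying $v_x$ --- breaks down: smallness of $|p_n(x+1)-\zeta_{1,x}p_n(x)|$ does not control $p_n(x+1)$ because of possible cancellation. (Relatedly, your claim that one root has modulus below some $\rho<1$ and the other above $\rho^{-1}$ is not what Lemma \ref{lem:AE:lem1} asserts; in the regime $-\mu<z<0$ only $\ol{\zeta}_{1,z}<0$ and $\ol{\zeta}_{2,z}>1$ are guaranteed, though there the nonnegativity of $p_n(x+1)+\abs{\zeta_{1,x}}p_n(x)$ saves you and $\abs{\ol{\zeta}_{1,z}}<1$ is not actually needed.) To close the third regime you would need a different device, e.g.\ iterating the other factorization $p_n(x+1)-\zeta_{2,x}p_n(x)=\zeta_{1,x}\bigl(p_n(x)-\zeta_{2,x}p_n(x-1)\bigr)$ forward to show that any $p_n(x_0)\ge\delta>0$ forces $p_n(x_0+j)\gtrsim\zeta_2^{\,j}\delta$ to exceed $1$, or a two-sided iteration as in the corresponding case of Proposition \ref{prp:AE:lim-p}. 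With that repaired, and with the $x$-dependence of $\zeta_{1,x}$ handled by the increasing/decreasing case split (as in \eqref{eq:tail:AP8} and the paragraph after \eqref{eq:tail:dec-end}), your argument does yield the theorem.
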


\begin{proof}
Let us assume $m \le n-m$, i.e., $\xi \le 1/2$, for a while.
Then, we can use the recurrence relation \eqref{eq:AE:NAP}.
Hereafter until \eqref{eq:AE:39}, we assume $z<0$.
Then, by \eqref{eq:AE:NM-lim}, we have
\begin{align}\label{eq:AE:NM}
 \lim_{n \to \infty} \bigl(A_{n(z+\mu)}^{(n)}+B_{n(z+\mu)}^{(n)}\bigr) > 
 \lim_{n \to \infty} C_{n(z+\mu)}^{(n)}. 
\end{align}
Since the coefficients $A_x^{(n)}$, $B_x^{(n)}$ and $C_x^{(n)}$ are 
continuous functions of $x$, for any $\ep>0$ small enough, we have
\begin{align}
 \min_x A_x^{(n)} + \min_x B_x^{(n)}  > \max_x C_x^{(n)},
\end{align}
where $\max_x$ and $\min_x$ are taken in the range 
\begin{align} 
 x \in [n(z+\mu-\ep), n(z+\mu+\ep)].
\end{align}
Hereafter until \eqref{eq:AE:39}, we use the same range for $\max_x$ and $\min_x$.

The definition \eqref{eq:AE:G_n} yields
$G_n(z+\ep)-G_n(z)=\sum_{i=1}^{n \ep} p_n(n(z+\mu)+i)$.
Then, for $-\mu<z<0$, we have 
\begin{align}
\begin{split}
 & \left(\min_x A_x^{(n)}\right)\bigl(G_n(z+\ep)-G_n(z)\bigr) +
   \left(\min_x B_x^{(n)}\right)\bigl(G_n(z+\ep)-G_n(z)\bigr) \\
 &+\left(\min_x A_x^{(n)}\right)\bigl(p_n(n(z+\mu)) + p_n(n(z+\mu)-1) \\
 &\hspace{8em}                       -p_n(n(z+\mu+\ep)) - p_n(n(z+\mu+\ep)-1)\bigr) \\
 &+\left(\min_x B_x^{(n)}\right)\bigl(p_n(n(z+\mu))-p_n(n(z+\mu+\ep))\bigr) \\
=& \left(\min_x A_x^{(n)}\right)\bigl(G_n(z+\ep-\tfrac{2}{n})-G_n(z-\tfrac{2}{n})\bigr) +
   \left(\min_x B_x^{(n)}\right)\left(G_n(z+\ep-\tfrac{1}{n})-G_n(z-\tfrac{1}{n})\right) \\
=& \sum_{i=1}^{n \ep} \left(\min_x A_x^{(n)}\right) p_n(n(z+\mu)+i-2)
                     +\left(\min_x B_x^{(n)}\right) p_n(n(z+\mu)+i-1) \\
\le
 & \sum_{i=1}^{n \ep} A_{n(z+\mu)+i}^{(n)} \, p_n(n(z+\mu)+i-2) 
                     +B_{n(z+\mu)+i}^{(n)} \, p_n(n(z+\mu)+i-1) \\
=& \sum_{i=1}^{n \ep} C_{n(z+\mu)+i}^{(n)} \, p_n(n(z+\mu)+i)   
 \label{eq:AE:=rec} \\
\le
 & \sum_{i=1}^{n \ep} \left(\max_x C_x^{(n)}\right) p_n(n(z+\mu)+i) 
=  \left(\max_x C_x^{(n)} \right) \bigl(G_n(z+\ep)-G_n(z)\bigr).
\end{split}
\end{align}
Thus, we have
\begin{align}\label{eq:AE:39}
\begin{split}
&\left(\min_x A_x^{(n)}+\min_x B_x^{(n)}-\max_x C_x^{(n)}\right) 
 \left(G_n(z+\ep-\tfrac{1}{n})-G_n(z-\tfrac{1}{n})\right) \\
&\le -\left(\min_x A_x^{(n)}\right) \bigl(p_n(n(z+\mu)) + p_n(n(z+\mu)-1) \\
&\hspace{8em}
     -p_n(n(z+\mu+\ep)) - p_n(n(z+\mu+\ep)-1) \bigr) \\
&\phantom{M}
 -\left(\min_x B_x^{(n)}\right) \bigl(p_n(n(z+\mu))-p_n(n(z+\mu+\ep))\bigr). 
\end{split}
\end{align}
Proposition \ref{prp:AE:lim-p} guarantees that 
the right hand side goes to zero as $n \to \infty$.
Since the case $z=-\mu$ is not contained, the limit of 
$\left(\min_x A_x^{(n)}+\min_x B_x^{(n)}-\max_x C_x^{(n)}\right)$ is strictly positive.
Then, since $G_n(z+\ep)-G_n(z)\ge 0$, we have
\begin{align}
 G(z+\ep)-G(z) = \lim_{n \to \infty} \bigl(G_n(z+\ep)-G_n(z)\bigr) = 0.
\end{align}
Since $\lim_{n \to \infty}p(0)=0$ by Proposition \ref{prp:AE:lim-p} for example, 
the same claim holds for $-\mu \le z <0$.

In the case $0<z \le \mu$, the same argument works by reversing inequalities.
Therefore, we have proved that 
the limit distribution has a positive probability only at $Z=0$
in the case $\xi \le 1/2$.
By the system symmetry \eqref{eq:intro:sym}, we have the same statement
in the case $\xi \ge 1/2$.
\end{proof}

\section{Proof of central limit theorem, Theorem \ref{thm:intro:CLT}}\label{ss:CLT-prf}


We give a proof of the central limit Theorem \ref{thm:intro:CLT}.
Before starting the discussion, let us recall the ad-hoc derivation of 
the normal distribution $\Psi(x)$ in \eqref{eq:II:Psi}. 
We started from the recurrence relation of the pmf $p(x)$ in Lemma \ref{lem:intro:rec}, 
regarding it as the difference equation characterizing $p(x)$.
Then, assuming that there exists a differentiable limit distribution of $p(x)$ 
for $n \to \infty$, we derived a differential equation 
\eqref{eq:II:ode} or \eqref{eq:II:R} of the limit distribution 
from the difference equation of $p(x)$. 
The obtained differential equation was the one characterizing 
the normal distribution $\Psi(x)$.

Now, let us recall the statement.
We consider Type II limit with the assumption \eqref{eq:intro:II:asmp}, 
i.e., the limit $n \to \infty$ with fixed 
$\alpha=\frac{l}{n}$, $\beta =\frac{m-l}{n}$, $\gamma=\frac{k-l}{n}$ 
and $\delta=\frac{n-m-k+l}{n}$ satisfying \eqref{eq:intro:II:asmp}.
Then, for any real $t<u$, we have 
\begin{align}
 \lim_{n \to \infty} \Prb_n\Bigl[t \le \frac{X-n \mu}{\sqrt{n} \sigma} \le u\Bigr]
 = \frac{1}{\sqrt{2\pi}} \int_t^u e^{-s^2/2} \, d s, 
\end{align}
where $\Prb_n$ denotes the probability for $P_{n,n\xi,n\kappa,n\alpha}$,
$\xi=\alpha+\beta=\frac{m}{n}$, $\kappa=\alpha+\gamma=\frac{k}{n}$, 
and $\mu$, $\sigma$ have the following forms
\begin{align}
    \mu = \frac{1-\sqrt{D}}{2}, \quad
 \sigma = \sqrt{\frac{(\alpha+\gamma)\beta\delta}{D}}, \quad 
      D = 1-4(\alpha\gamma+\alpha\delta+\beta\gamma). 
\end{align}
%

As mentioned in Remark \ref{rmk:msD:sym}, 
the statement is preserved under the system symmetry \eqref{eq:intro:sym},
so we assume $m \le n-m$ in the following discussion.
Under this assumption, we can use the recurrence relation of the pmf $p(x)$, 
given in Lemma \ref{lem:intro:rec}, which will be the main tool of our proof.
In fact, the strategy is to make a sufficient refinement of the ad-hoc
derivation of the limit normal distribution $\Psi(x)$ recalled in the beginning.

Hereafter the word ``limit" means Type II limit, 
and we always assume the condition \eqref{eq:intro:II:asmp}.
We also use the fixed ratios $\xi=m/n$ and $\kappa=k/n$ as in Section \ref{ss:AE-prf}.
Then, by the condition \eqref{eq:intro:II:asmp}, 
we always have 
\begin{align}\label{eq:CLT:cond}
 \mu < \xi \wedge \kappa \wedge (1-\kappa), \quad \xi \le 1/2.
\end{align}

\subsection{Random variable $Y$ with non-compact support and estimation of tails}
\label{sss:CLT:Y}

In Section \ref{ss:AE-prf}, we used the random variable $Z=\frac{X-n \mu}{n}$ to 
show the law of large numbers for $X/n$ (Theorem \ref{thm:intro:E}).
For the central limit theorem, we consider instead the random variable
\begin{align}
 Y \ceq \frac{X - n \mu}{\sqrt{n} \sigma}, 
\end{align}
which has a non-compact support in the limit $n \to \infty$.
We denote the cdf for $Y$ by 
\begin{align}
 F_n(y) \ceq \sum_{u=0}^{n \mu+\sqrt{n} \sigma y} p(u \mid n,m,k,l).
\end{align}
Now, let us recall:

\begin{fct}[{Helly lemma \cite[2c.4 (i), p.117]{R}}]\label{fct:Helly}
There is a subsequence $\{n_i\}_i \in \bbN$ such that $F_{n_i}$ converges to 
a function $F$ at all continuity points of $F$. 
Moreover, $F$ is continuous from the left, bounded, and non-decreasing.
\end{fct}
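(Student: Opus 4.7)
The plan is to prove this classical Helly selection theorem by a standard diagonal-subsequence argument on a countable dense set, followed by a left-continuous extension to all of $\bbR$. The key input is just that each $F_n$ is a cumulative distribution function of a discrete random variable, hence non-decreasing with values in $[0,1]$; no control over the tails (i.e., over $\lim_{y \to \pm \infty} F_n(y)$ uniformly in $n$) is needed, and indeed the limit $F$ need not be a probability distribution, which is exactly why the paper spends additional effort (in Proposition \ref{prp:tail}) on tail estimates.

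First, I would fix an enumeration $\{q_1, q_2, \dotsc\}$ of $\bbQ$. Since $\{F_n(q_1)\}_n \subset [0,1]$, the Bolzano--Weierstrass theorem gives a convergent subsequence, whose indices I denote $\{n_k^{(1)}\}_k$. From $\{F_{n_k^{(1)}}(q_2)\}_k$ extract a further convergent subsequence with indices $\{n_k^{(2)}\}_k$, and inductively construct nested subsequences $\{n_k^{(j)}\}_k$ for every $j \ge 1$. Setting $n_i := n_i^{(i)}$, the resulting diagonal subsequence satisfies that $\tilde F(q) := \lim_{i \to \infty} F_{n_i}(q)$ exists for every $q \in \bbQ$. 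The monotonicity of each $F_{n_i}$ then descends to $\tilde F$, so $\tilde F \colon \bbQ \to [0,1]$ is non-decreasing.

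Next, I would extend $\tilde F$ to all of $\bbR$ by
\begin{align*}
 F(y) := \sup \bigl\{ \tilde F(q) \mid q \in \bbQ, \ q < y \bigr\}.
\end{align*}
By construction $F$ is non-decreasing, bounded in $[0,1]$, and left-continuous (a supremum over the open condition $q < y$), which already yields the second sentence of the statement. It remains to verify convergence of $F_{n_i}(y)$ to $F(y)$ at every continuity point $y$ of $F$. Given $\ve > 0$, by continuity of $F$ at $y$ and density of $\bbQ$, choose rationals $q_- < y < q_+$ with $F(q_+) - F(q_-) < \ve$. Monotonicity of each $F_{n_i}$ gives
\begin{align*}
 F_{n_i}(q_-) \le F_{n_i}(y) \le F_{n_i}(q_+),
\end{align*}
and passing $i \to \infty$ with $F_{n_i}(q_\pm) \to \tilde F(q_\pm)$, together with the sandwich $F(q_-) \le \tilde F(q_-) \le \tilde F(q_+) \le F(q_+)$ built into the extension formula, produces $F(y) - \ve \le \liminf_i F_{n_i}(y) \le \limsup_i F_{n_i}(y) \le F(y) + \ve$. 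Letting $\ve \to 0$ gives the claim.

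There is no genuine hard step; the one subtlety to keep in mind is that $F$ and $\tilde F$ need not agree on $\bbQ$ (only $F(q) \le \tilde F(q)$ holds in general, with equality precisely at continuity points of $F$). This does not obstruct the argument because in the final sandwich one bounds $F_{n_i}(y)$ using rational points $q_\pm \neq y$, so the comparison is always via $\tilde F(q_\pm)$ rather than $\tilde F(y)$, and the continuity of $F$ at $y$ is exactly what makes $F(q_\pm)$ an adequate proxy for $F(y)$.
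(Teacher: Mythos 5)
The paper does not prove this statement at all: it is imported as a known \textbf{Fact} from the cited reference \cite[2c.4 (i), p.117]{R} (Rao's book), so there is no in-paper argument to compare against. Your proof is the standard Helly selection argument --- diagonal extraction over an enumeration of $\bbQ$, left-continuous extension by $F(y):=\sup\{\tilde F(q) \mid q\in\bbQ,\ q<y\}$, and a monotone sandwich at continuity points --- and this is exactly the right way to supply the missing proof. The only external input is that each $F_n$ takes values in $[0,1]$ and is non-decreasing, which holds here, and you are right that no tail control is needed at this stage (that is deferred to Proposition \ref{prp:tail}).

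One small internal inconsistency to repair: in your displayed sandwich you use the chain $F(q_-)\le\tilde F(q_-)\le\tilde F(q_+)\le F(q_+)$, but the last link points the wrong way --- as you yourself observe in your closing paragraph, in general only $F(q)\le\tilde F(q)$ holds, since $F(q_+)$ is a supremum over rationals \emph{strictly below} $q_+$. The argument survives with a one-line fix: either pick a second rational $q'_+$ with $y<q_+<q'_+$ and $F(q'_+)<F(y)+\ve$ (possible by continuity of $F$ at $y$), so that $\tilde F(q_+)\le F(q'_+)\le F(y)+\ve$; or equivalently note that $\limsup_i F_{n_i}(y)\le\inf_{q_+>y}\tilde F(q_+)\le\inf_{y'>y}F(y')=F(y)$ at a continuity point. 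The lower bound $F(y)\le\liminf_i F_{n_i}(y)$ already follows directly from $F(y)=\sup_{q_-<y}\tilde F(q_-)$ without any continuity assumption. With that correction the proof is complete.
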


Note that $F$ might not be the cdf of a probability distribution a priori.

In order to concentrate the discussion 
on the nearby points around $\frac{X}{n}=\mu$, 
we need to estimate the tails for $Y$, which is the purpose of this part.
The result is:

\begin{prp}\label{prp:tail}
Denote by $\Prb_n$ the probability for $P_{n,n \xi,n \kappa,n \alpha}$.
Then, we have
\begin{align}
 \lim_{R \to \infty} \limsup_{n \to \infty} \Prb_n[Y> R] = 0, \quad 
 \lim_{R \to \infty} \limsup_{n \to \infty} \Prb_n[Y<-R] = 0.
\end{align}
\end{prp}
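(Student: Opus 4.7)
My plan is to bootstrap the law of large numbers (Theorem \ref{thm:E}) with a refined analysis of the recurrence \eqref{eq:AE:NAP}--\eqref{eq:AE:AP} to produce a Gaussian envelope for $p_n(x) := p(x \midd n,m,k,l)$, from which tightness follows. By the system symmetry \eqref{eq:sym} one may assume $m \le n-m$. Theorem \ref{thm:E} already gives $\Prb_n[|X/n-\mu|>\epsilon]\to 0$ for any fixed $\epsilon>0$, so for small $\epsilon>0$ it suffices to estimate the contribution of the strip $\{x : R\sqrt{n}\sigma < x-n\mu \le n\epsilon\}$ on the right (and symmetrically on the left).

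The key computation is a Taylor expansion of the coefficients $\eta_{1,x},\eta_{2,x}$ of \eqref{eq:AE:AP} around $x=n\mu$. Writing $x = n(\mu+s)$ with $|s|\le \epsilon$ small, \eqref{eq:AE:eta-lim} and direct differentiation give
\begin{align*}
 \eta_{1,x} = \ol\eta_{1,0} + O(s) + O(1/n), \quad
 \eta_{2,x} = -\frac{\sqrt{D}\,D\,s}{\beta\delta\,\mu(1-\kappa-\mu)} + O(s^2) + O(1/n).
\end{align*}
The algebraic identity $1 - \ol\eta_{1,0} = \kappa\sqrt{D}/[\mu(1-\kappa-\mu)]$ combined with $\sigma^2 = \kappa\beta\delta/D$ then yields the crucial simplification
\begin{align*}
 \frac{\eta_{2,x}}{1 - \ol\eta_{1,0}} = -\sigma^{-2} s + O(s^2) + O(1/n).
\end{align*}
Rewriting \eqref{eq:AE:AP} as $p_n(x+1)-p_n(x) = \ol\eta_{1,0}\bigl(p_n(x)-p_n(x-1)\bigr) + \eta_{2,x}p_n(x) + O(s)\bigl(p_n(x)-p_n(x-1)\bigr)$, and using that non-negativity of $p_n$ forces alignment with the characteristic root $\zeta_{2,x}<1$ for $x>n\mu$ in this window (Lemma \ref{lem:AE:lem1}), one obtains for the ratio $r_x := p_n(x+1)/p_n(x)$ the uniform upper bound $r_x \le 1 - \sigma^{-2}s + O(s^2+1/n)$. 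Iterating from $x_0 := \lceil n\mu + R\sqrt{n}\sigma\rceil$ telescopes this into the Gaussian envelope
\begin{align*}
 p_n(x) \le p_n(x_0)\exp\!\left(-\frac{(x-n\mu)^2 - (x_0-n\mu)^2}{2n\sigma^2}\,\bigl(1+o(1)\bigr)\right),
\end{align*}
valid for $x_0 \le x \le n(\mu+\epsilon)$.

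Combining the envelope with the normalisation $\sum_x p_n(x)=1$ forces $\max_x p_n(x) = O(1/\sqrt{n})$, so $p_n(x_0) \le C n^{-1/2} e^{-R^2/2}$, and summing the envelope over the strip gives $\Prb_n[R\sqrt{n}\sigma < X - n\mu \le n\epsilon] \le C'\int_R^\infty e^{-s^2/2}\,ds$, which tends to $0$ as $R\to\infty$. Together with Theorem \ref{thm:E} this gives $\limsup_n\Prb_n[Y>R] \to 0$. The left tail $\Prb_n[Y<-R]$ is handled identically using that $\zeta_{2,x}>1$ for $x<n\mu$ (Lemma \ref{lem:AE:lem1}) and working with $p_n(x-1)/p_n(x)$.

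The hard part will be justifying the ratio bound $r_x \le \zeta_{2,x}(1+o(1))$ uniformly along the $O(\sqrt{n})$-long iteration: the nonlinear recurrence $r_x = \theta_x - \eta_{1,x}/r_{x-1}$ has $\zeta_{2,x}$ as a \emph{repelling} fixed point whenever $|\ol\eta_{1,0}|>1$, so the alignment of $p_n$ with this mode is enforced by the non-negativity constraint $p_n(x)\ge 0$ rather than by attracting dynamics; quantifying this alignment with error $o(1)$ uniformly over the window is the technical heart of the proof. A subsidiary concern is the singular point $x=n-k$ where $C_x^{(n)}$ vanishes (Lemma \ref{lem:AE:AxCx}), but by $\mu<1-\kappa$ (Lemma \ref{lem:D}(4)) it is at distance $\Theta(n)$ from $n\mu$ and lies outside the window for $\epsilon$ small.
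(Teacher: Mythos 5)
Your overall strategy (use the law of large numbers to reduce to the strip $R\sqrt{n}\sigma < x-n\mu \le n\epsilon$, then extract geometric/Gaussian decay from the three-term recursion) is the same as the paper's, but the execution has a genuine gap exactly where you flag it, and you do not close it. You want the pointwise ratio bound $p_n(x+1)/p_n(x) \le \zeta_{2,x}\bigl(1+o(1)\bigr)$, but as you yourself observe, the ratio recursion $r_x = \theta_x - \eta_{1,x}/r_{x-1}$ has $\zeta_{2,x}$ as a fixed point whose stability you cannot control, and non-negativity of $p_n$ alone does not yield a one-sided bound on the individual ratio (the pmf could in principle oscillate while the relevant decaying mode still dominates the combination). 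The paper's proof avoids this entirely by working with the linear combination $u_x := p_n(x+1) - \zeta_{1,x}\,p_n(x) = p_n(x+1) + \abs{\zeta_{1,x}}\,p_n(x)$: the factorized recursion \eqref{eq:AE:AP2} gives $u_x = \zeta_{2,x}\bigl(p_n(x)-\zeta_{1,x}p_n(x-1)\bigr)$ exactly, so after handling the slow $x$-dependence of $\zeta_{1,x}$ by its monotonicity on the window one gets clean geometric decay $u_{x+i} \le \zeta_{2,\max}^{\,i}\, u_x$ of a non-negative quantity that termwise dominates $p_n(x+1+i)$ (since $\zeta_{1,x}<0$ there). That substitution is the missing idea; with it your Gaussian envelope becomes a telescoping product of the $\zeta_{2,x'}$, and no control of individual ratios is required.

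The second gap is the normalization step. From an upper envelope valid only on $[x_0,\infty)$ you cannot conclude $p_n(x_0) \le C n^{-1/2}e^{-R^2/2}$: the identity $\sum_x p_n(x)=1$ combined with an upper bound to the right of $x_0$ says nothing about the size of $p_n(x_0)$ itself; you would additionally need to show that the bulk of the distribution has width $\gtrsim \sqrt{n}$, a two-sided statement near the peak which is essentially the local central limit theorem you are ultimately trying to prove. The paper instead iterates the recursion over blocks of length $\sqrt{n}$, bounds the whole strip by $(1-\zeta_{2,\max}^{\sqrt{n}})^{-1}\sum_{j\le\sqrt{n}}\bigl(p_n(x_{1,n}+j)+\abs{\zeta_{1,x_{1,n}}}p_n(x_{1,n}-1+j)\bigr)$, and recognizes the boundary sum as (a constant times) the increment $F(R+\tfrac{1}{\sigma}+0)-F(R-0)$ of the Helly limit function $F$, which tends to $0$ as $R\to\infty$ merely because $F$ is bounded and non-decreasing; no pointwise bound on $p_n$ near $n\mu$ is ever needed. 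Your dispatch of the singular point $x=n-k$ via $\mu<1-\kappa$ is fine, as is the reduction to $m\le n-m$ by the system symmetry.
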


Our proof uses the rewritten form \eqref{eq:AE:AP2} of the recursion for 
$p_n(x) \ceq p(x \midd n,m,k,l)$:
\begin{align}\label{eq:CLT:rec}
 p_n(x+1) - \zeta_{1,x} \, p_n(x) = 
 \zeta_{2,x} \bigl(p_n(x)- \zeta_{1,x} \, p_n(x-1)\bigr).
\end{align}
The relevant quantities are given as follows (see also \eqref{eq:AE:zeta}). 
\begin{align}
\begin{split}
  \eta_{1,x} &\ceq  -\frac{A_x^{(n)}}{C_x^{(n)}} = \eqref{eq:AE:eta1}, \quad 
  \eta_{2,x}  \ceq   \frac{A_x^{(n)}+B_x^{(n)}-C_x^{(n)}}{C_x^{(n)}}, \\
 \zeta_{1,x} &\ceq   \frac{\theta_x-\sqrt{\theta_x^2-4\eta_{1,x}}}{2}, \quad 
 \zeta_{2,x}  \ceq   \frac{\theta_x+\sqrt{\theta_x^2-4\eta_{1,x}}}{2}, \quad 
 \theta_x     \ceq 1+\eta_{1,x}+\eta_{2,x}.
\end{split}
\end{align}
In the limit $n \to \infty$, we have \eqref{eq:AE:eta-lim}: 
\begin{align}\label{eq:CLT:eta-lim}
 \lim_{n \to \infty} \eta_{1,n t}
 =\frac{(1- t)(t-\kappa)}{ t(1-\kappa-t)}, \quad 
 \lim_{n \to \infty} \eta_{2,n t}
 =\frac{(t-\mu)(t-\nu)(1-2t)^2}{(\xi-t)(1-\xi-t)t(1-\kappa-t)}.
\end{align}

\begin{proof}[{Proof of Proposition \ref{prp:tail}}]
Let us discuss the former probability $\Prb_n[Y>R]$ in the statement.
Let $R>0$ and set $R_1 \ceq R \sigma$.
Using $\ep>0$, we divide the probability into two parts as 
\begin{align}\label{eq:tail:LPR}
\begin{split}
&\Prb_n[Y>R] = \Prb_n\bigl[ X > n \mu+\sqrt{n}R_1 \bigr] \\
&            = \Prb_n\bigl[ X > n(\mu+\ep)+\sqrt{n}R_1 \bigr]
             + \Prb_n\bigl[ n(\mu +\ep)+\sqrt{n}R_1 \ge X > n \mu+\sqrt{n}R_1\bigr].
\end{split}
\end{align}
By the law of large numbers for $X/n$ (Theorem \ref{thm:LLN}), we have 
$\lim_{n \to \infty} \Prb_n\bigl[ X > n(\mu+\ep)+\sqrt{n}R_1 \bigr] = 0$.
So it is enough to treat the remaining probability.

For brevity, we denote $x_{1,n} \ceq n \mu+\sqrt{n}R_1$ and 
$x_{2,n} \ceq n(\mu +\ep)+\sqrt{n}R_1$.
Recalling the recursion \eqref{eq:CLT:rec}, we set 
\begin{align}\label{eq:tail:zeta2max}
 \zeta_{2,\max} \ceq \max_{x \in [x_{1,n},x_{2,n}]} \zeta_{2,x} \in (0,1).
\end{align}
The inequalities $0<\zeta_{2,\max}<1$ follows from Lemma \ref{lem:AE:lem1} (3), 
since we have $0 < \frac{x}{n}-\mu < 1-\kappa-\mu$ in the range concerned
(recall \eqref{eq:CLT:cond}).
The same argument shows $\zeta_{1,x}<0$ for $x \in [x_{1,n},x_{2,n}]$.

If $\ep>0$ is small enough, then $\zeta_{1,x}$ is either 
monotonically increasing or decreasing in the range $x \in [x_{1,n},x_{2,n}]$.
Hereafter until \eqref{eq:tail:dec-end}, 
we assume that $\abs{\zeta_{1,x}}$ is decreasing. It yields
\begin{align}
 p_n(x+1+i)+\abs{\zeta_{1,x+i}} p_n(x+i) \le 
 p_n(x+1+i)+\abs{\zeta_{1,x+i-1}} p_n(x+i).
\end{align}
Using the recursion \eqref{eq:CLT:rec} iteratively, we have
\begin{align}
 p_n(x+1+i)+\abs{\zeta_{1,x+i}} p_n(x+i) \le 
 \zeta_{2,\max}^i \bigl(p_n(x)+\abs{\zeta_{1,x}} p_n(x-1)\bigr).
\end{align}
By summation, we have
\begin{align}\label{eq:tail:AP8}
\begin{split}
&\sum_{j=0}^{\sqrt{n}} \bigl(p_n(x+1+i+j)+\abs{\zeta_{1,x+i+j}} p_n(x+i+j)\bigr) \\
&\le \zeta_{2,\max}^i 
 \sum_{j=0}^{\sqrt{n}} \bigl(p_n(x+j)+\abs{\zeta_{1,x+j}} p_n(x-1+j)\bigr).
\end{split}
\end{align}
We will use this inequality to estimate the remaining part 
$\Prb_n\bigl[ x_{2,n} \ge X > x_{1,n} \bigr] = \sum_{x=x_{1,n}+1}^{x_{2,n}} p_n(x)$.
The idea is to consider the range $[x_{1,n},x_{2,n}]$ modulo $\sqrt{n}$.
Replacing $i$ in \eqref{eq:tail:AP8} by $\ep i'$, we have 
\begin{align}\label{eq:tail:AP9}
\begin{split}
&\Prb_n\bigl[ x_{2,n} \ge X > x_{1,n} \bigr] = \sum_{x=x_{1,n}+1}^{x_{2,n}} p_n(x) 
 \le \sum_{i'=0}^{\ep \sqrt{n}} \sum_{j=0}^{\sqrt{n}} p_n(x_{1,n}+1+\sqrt{n}i'+j) \\
&\le \sum_{i'=0}^{\ep \sqrt{n}} \sum_{j=0}^{\sqrt{n}} \bigl(p_n(x_{1,n}+1+\sqrt{n}i'+j)
    +\abs{\zeta_{1,x_{1,n}+\sqrt{n}i'+j}} p_n(x_{1,n}+\sqrt{n}i'+j)\bigr) \\
&\le \sum_{i'=0}^{\ep \sqrt{n}} \zeta_{2,\max}^{\sqrt{n}i'} 
     \sum_{j =0}^{\sqrt{n}} \bigl(p_n(x_{1,n}+j)
    +\abs{\zeta_{1,x_{1,n}+j}} p_n(x_{1,n}-1+j) \bigr) \\
&\le (1-\zeta_{2,\max}^{\sqrt{n}})^{-1}
     \sum_{j=0}^{\sqrt{n}}\bigl(p_n(x_{1,n}+j)
    +\abs{\zeta_{1,x_{1,n}}} p_n(x_{1,n}-1+j)\bigr). 
\end{split}
\end{align}
In the last inequality we used $0<\zeta_{2,\max}<1$ in \eqref{eq:tail:zeta2max}.

Now, we want to take the limit $n \to \infty$ of the inequality \eqref{eq:tail:AP9}.
Set $c_1 \ceq \lim_{n \to \infty} \abs{\zeta_{1,x_{1,n}}}$ which surely exists.
As for the limit of $\zeta_{2,\max}$, let us observe that, since $\frac{x}{n} \to \mu$, 
$\eta_{2,x}$ is small by the expression of $\ol{\eta}_2$ in \eqref{eq:CLT:eta-lim}.
Since $\eta_{1,x} \sim \eta_{1,\mu}<0$ in the limit, we have 
\begin{align}\label{eq:tail:zeta_2x}
 \zeta_{2,x} = \frac{1+\eta_{1,x}+\eta_{2,x}+ 
                     \sqrt{(1-\eta_{1,x})^2+\eta_{2,x}^2+2(1+\eta_{1,x})\eta_{2,x}}}{2}
 \sim 1+\frac{\eta_{2,x}}{1-\eta_{1,x}}. 
\end{align}
A direct calculation using \eqref{eq:tail:zeta_2x} yields that 
$\zeta_{2,\max}=\zeta_{2,x_{1,n}}$ and 
\begin{align}\label{eq:tail:c2}
 c_2 \ceq \lim_{n \to \infty} \sqrt{n}(1-\zeta_{2,\max})
      = \frac{R_1 (\nu-\mu)(1-2\mu)}{\kappa(\xi-\mu)(1-\xi-\mu)} > 0.
\end{align}
Then, in \eqref{eq:tail:AP9}, we have 
$\lim_{n \to \infty} \zeta_{2,\max}^{\sqrt{n}}
=\lim_{n \to \infty} (1-\frac{c_2}{\sqrt{n}})^{\sqrt{n}}=e^{-c_2}$ and 
\begin{align}\label{eq:CLT:5522}
\begin{split}
&\limsup_{n \to \infty} \sum_{j=0}^{\sqrt{n}} 
 \bigl(p_n(x_{1,n}+j)+\abs{\zeta_{1,x_{1,n}}} p_n(x_{1,n}-1+j)\bigr)  \\
&\le(1+c_1) \bigl(F(\tfrac{R_1+1}{\sigma}+0)-F(\tfrac{R_1}{\sigma}-0)\bigr) 
 =  (1+c_1) \bigl(F(R+\tfrac{1}{\sigma}+0)-F(R-0)\bigr).
\end{split}\end{align}
Thus, the limit of the inequality \eqref{eq:tail:AP9} yields
\begin{align}
 \limsup_{n \to \infty} \Prb_n \bigl[x_{2,n} \ge X > x_{1,n} \bigr]
 \le (1-e^{-c_0})^{-1} (1+ c_{0,1}) \bigl(F(R+\tfrac{1}{\sigma}+0)-F(R-0)\bigr).
\end{align}
Note that $c_2>0$ by \eqref{eq:tail:c2}, 
so that the term $(1-e^{-c_0})^{-1}$ makes sense.
Since $F$ is the limit function obtained by Helly lemma (Fact \ref{fct:Helly}), 
we have $\lim_{R \to \infty} \bigl(F(R+\tfrac{1}{\sigma}+0)-F(R-0)\bigr) = 0$.
Hence, we obtain
\begin{align}\label{eq:tail:dec-end}
 \limsup_{n \to \infty} \Prb_n \bigl[x_{2,n} \ge X > x_{1,n} \bigr] = 0.
\end{align}

We can similarly treat the case when $|\zeta_{1,x}|$ is increasing 
in the range $x \in [x_{1,n} \sqrt{n}, x_{2,n}]$.
In fact, dividing the recursion \eqref{eq:CLT:rec} by $\abs{\zeta_{1,x}}$, we have 
\begin{align}
 \abs{\zeta_{1,x}}^{-1}p_n(x+1)+ p_n(x)
=\zeta_{2,x} \bigl(\abs{\zeta_{1,x}}^{-1} p_n(x)+ p_n(x-1)\bigr).
\end{align}
Since $\abs{\zeta_{1,x}}^{-1}$ is decreasing in this case, 
a similar argument as before works, and yields \eqref{eq:tail:dec-end}.

Therefore, we have \eqref{eq:tail:dec-end} in either case,
and combining it with \eqref{eq:tail:LPR}, we obtain the desired equality 
$\lim_{R\to \infty} \limsup_{n \to \infty} \Pr_n[Y>R] = 0$.

Finally, we treat the latter $\Pr_n[Y<-R]$ in the statement.
For $x \in [n(\mu-\ep),n \mu-R_1 \sqrt{n}]$, 
we consider the rewritten form of the recursion \eqref{eq:CLT:rec}:
\begin{align}
 p_n(x)+ \abs{\zeta_{1,x}} p_n(x-1) = 
 \zeta_{2,x}^{-1} \bigl(p_n(x+1)+ \abs{\zeta_{1,x}} p_n(x)\bigr). 
\end{align}
{}From \eqref{eq:CLT:eta-lim} and $\frac{x}{n}-\mu<0$, 
we can find that $\eta_{2,x}$ is positive and of order $O(1/\sqrt{n})$.
We also find that $1-\max_{x} \zeta_{2,x}^{-1} = O(1/\sqrt{n})$,
a similar argument as before works, and it yields the desired consequence
$\lim_{R\to \infty} \limsup_{n \to \infty} \Pr_n[Y<-R] = 0$.
\end{proof}

\subsection{The limit distribution}

Going back to the beginning of \S \ref{sss:CLT:Y},
we consider the random variable $Y \ceq \frac{X - n \mu}{\sqrt{n}\sigma}$, 
the cdf $F_n(y)$ and the limit function $F$.
By Proposition \ref{prp:tail}, we can focus on the interval 
\begin{align}
 n \mu-\sqrt{n}\sigma R \le X \le n \mu+\sqrt{n}\sigma R \iff 
 -R \le Y \le R.
\end{align}
We continue to use the recursion of $p_n(x)=p(x \midd n,m,k,l)$ as the main tool.
Let us write it again: 
\begin{align}\label{eq:CLT:AP3}
 \bigl(p_n(x+1)-p_n(x)\bigr) - \eta_{1,x} \bigl(p_n(x)-p_n(x-1)\bigr) = 
 \eta_{2,x} p_n(x).
\end{align}

Let us study the behavior of $p_n(x)$ around $x=n \mu$. 

\begin{lem}\label{lem:CLT:lem3}
If $\{x_n\}_n$ is a sequence in $I \ceq [n \mu-\sqrt{n}\sigma R, n \mu+\sqrt{n}\sigma R]$
such that $\{\frac{x_n-n\mu}{\sqrt{n}}\}_n$ converges, then we have 
$\lim_{n \to \infty}p_n(x_n)=0$.
\end{lem}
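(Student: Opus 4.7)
The plan is to argue by contradiction: assume that along some subsequence (still denoted by $n$), $p_n(x_n) \geq c$ for some $c > 0$, with $(x_n - n\mu)/\sqrt{n}$ bounded so that $x_n$ remains within a window $[n\mu - K\sqrt{n}, n\mu + K\sqrt{n}]$ for some fixed $K > 0$. I shall derive a contradiction by showing that $p_n$ must then be uniformly bounded below by a positive constant over an interval of length comparable to $\sqrt{n}$ around $x_n$, forcing $\sum_x p_n(x)$ to exceed $1$.

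The central tool is the factored recursion \eqref{eq:CLT:rec} together with the auxiliary sequence $\psi_x := p_n(x+1) - \zeta_{1,x} p_n(x)$. A direct rearrangement yields
\begin{align*}
 \psi_x = \zeta_{2,x}\psi_{x-1} + r_x, \qquad
 r_x := \zeta_{2,x}\bigl(\zeta_{1,x-1} - \zeta_{1,x}\bigr) p_n(x-1).
\end{align*}
Since $\zeta_{1,x}$ is a smooth function of $x/n$, one has $|\zeta_{1,x-1} - \zeta_{1,x}| = O(1/n)$, so the total perturbation across any window satisfies $\sum |r_{x'}| \leq (C/n) \sum p_n(x'-1) \leq C/n$, which will be absorbed harmlessly in the iteration.

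Next I will invoke the limit behavior of the coefficients. From \eqref{eq:CLT:eta-lim} and the calculation in \eqref{eq:tail:zeta_2x}, within the scaling window $|x - n\mu| \leq K\sqrt{n}$ I obtain the expansions
\begin{align*}
 \zeta_{1,x} = \zeta_1^* + O(1/\sqrt{n}), \qquad
 \zeta_{2,x} = 1 + O(1/\sqrt{n}),
\end{align*}
where $\zeta_1^* = \ol{\eta}_{1,0} = \frac{(1-\mu)(\mu-\kappa)}{\mu(1-\kappa-\mu)}$ is a \emph{strictly} negative real number. Its strict negativity and boundedness away from $0$ and $\infty$ follow from the strict inequalities $0 < \mu < \kappa < 1-\mu$ supplied by the generic condition \eqref{eq:CLT:abcd} via Lemma \ref{lem:D}. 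Consequently the product $\prod_{i=0}^{J} \zeta_{2,x_n + i}$ is bounded above and bounded below by positive constants, uniformly for $|J| \leq K\sqrt{n}$.

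Combining the above ingredients: starting from the trivial lower bound $\psi_{x_n - 1} = p_n(x_n) + |\zeta_{1,x_n-1}|\, p_n(x_n - 1) \geq p_n(x_n) \geq c$ (using $\zeta_{1,\cdot} < 0$), iterating the recursion forward and using the bounded products together with the $O(1/n)$ cumulative error gives $\psi_{x_n + j} \geq c/2$ uniformly for $0 \leq j \leq K'\sqrt{n}$ with some $K' > 0$; the analogous argument carries over in the backward direction. Since $\psi_x = p_n(x+1) + |\zeta_{1,x}| p_n(x)$, this forces $\max\bigl(p_n(x+1), p_n(x)\bigr) \geq \psi_x/(1 + |\zeta_1^*| + o(1)) \geq c'$ for some $c' > 0$, hence $\sum_{|j| \leq K'\sqrt{n}} p_n(x_n + j) \geq c'' \sqrt{n} \to \infty$, contradicting $\sum_x p_n(x) = 1$. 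The main delicate point will be verifying rigorously that the cumulative perturbation $\sum r_{x'}$ is indeed uniformly negligible: a priori we only have the trivial bound $p_n \leq 1$ (the stronger bound $p_n = O(1/\sqrt{n})$ being essentially what we are trying to prove), so the argument must hinge on combining the pointwise smallness $|\zeta_{1,x'-1} - \zeta_{1,x'}| = O(1/n)$ with the global mass constraint $\sum p_n \leq 1$, rather than any uniform pointwise estimate on $p_n$.
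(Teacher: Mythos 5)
Your proposal is correct, but it takes a genuinely different route from the paper. The paper's proof is a ``soft'' limiting argument: after passing to a subsequence so that $a_i:=\lim_{n\to\infty}p_n(x_n+i)$ exists for every fixed offset $i$, the recursion \eqref{eq:CLT:AP3} degenerates (because $\eta_{2,x}\to 0$ near $x=n\mu$) into the exact constant-coefficient relation $a_{i+1}-a_i=-\lambda(a_i-a_{i-1})$ with $\lambda=-\ol{\eta}_{1,0}>0$, and a short case analysis on $\lambda\gtrless 1$ shows that $(a_i)_i$ cannot be summable unless $a_0=0$, contradicting $\sum_i a_i\le\liminf_n\sum_{x\in I}p_n(x)\le 1$. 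You instead run a ``hard'' finite-$n$ propagation: you factor the recursion through $\psi_x=p_n(x+1)-\zeta_{1,x}p_n(x)$, control the coefficient drift $r_x$ by combining $|\zeta_{1,x-1}-\zeta_{1,x}|=O(1/n)$ with the global constraint $\sum_x p_n(x)=1$ (this is exactly the right fix for the issue you flag at the end --- no pointwise bound on $p_n$ is needed), and push a uniform lower bound $\psi_{x_n+j}\ge c/2$ across a window of $\sim\sqrt{n}$ sites where $\zeta_{2,x}=1+O(1/\sqrt{n})$, forcing total mass $\gtrsim\sqrt{n}$. Both proofs rest on the same structural inputs --- $\ol{\zeta}_{1,0}=\ol{\eta}_{1,0}<0$ and $\ol{\zeta}_{2,0}=1$ at $t=\mu$ (the strict inequalities $0<\mu<\kappa\wedge(1-\kappa)$ from Lemma \ref{lem:D} under the generic condition), plus normalization --- but the paper's version is shorter and needs only the limit values of the coefficients, whereas yours requires uniform estimates over the mesoscopic window; in exchange, your argument is quantitative in spirit and makes visible that $\sqrt{n}$ is the natural scale, the same scale exploited in Lemma \ref{lem:CLT:lem4}. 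One small economy: your forward window alone already produces divergent mass, so the backward direction you mention is not needed.
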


\begin{proof}
We assume that $p_n(x_n)$ converges to a positive value,
and prove the statement by contradiction.
We can also assume that $p_n(x_n-1)$ converges
(if not, then take a converging subsequence $\{n_j\}_j$).
Then, for every $i \in \bbZ_{\ge -1}$, we have the limit 
$a_i \ceq \lim_{n \to \infty} p_n(x_n+i)$.
In particular, we have $a_0>0$.
Since $\lim_{n \to \infty} \eta_{2,x}=0$ for $x$ in the range concerned,
the recursion \eqref{eq:CLT:AP3} yields 
\begin{align}\label{eq:CLT:lrec}
 a_{i+1}-a_i = -\lambda(a_i-a_{i-1}),
\end{align}
with $\lambda \ceq -\lim_{n \to \infty}\eta_{1,x}>0$.
On the other hand, we have 
$\sum_{i=0}^ N a_i \le \liminf_{n \to \infty} \sum_{x \in I} p_n(x)$ 
for any $N \in \bbN$, which yields 
\begin{align}
 \sum_{i=0}^{\infty} a_i \le \liminf_{n \to \infty} \sum_{x \in I} p_n(x).
\end{align}
If $a_0=a_{-1}$, then $a_i=a_0>0$ for any $i$, and 
the summation $\sum_{i=0}^{\infty} a_i$ diverges, which is a contradiction.
Thus, we can assume $a_0 \neq a_{-1}$.
If $\lambda \ge 1$, then \eqref{eq:CLT:lrec} implies that 
$\sum_{i=0}^{\infty} a_i$ diverges, which is a contradiction.
If $\lambda<1$, then $a_i$ converges to a point $a$
in the open interval $(a_0,a_{-1})$ or $(a_{-1},a_0)$.
Then, $a>0$, and $\sum_{i=0}^{\infty} a_i$ diverges, which is a contradiction.
\end{proof}

Hereafter we fix $y \in \bbR$ and set 
\begin{align}
 x_0 \ceq n \mu+\sqrt{n}\sigma y, \quad x_{\pm} \ceq x_0 \pm \sqrt{n}{\sigma} \ep.
\end{align}
We only treat the case $x \in [x_0,x_+]$,
when $\eta_{2,x}=O(1/\sqrt{n})$ and has the same sign as $y$.

For $x \in [x_0,x_+]$, the function $\eta_{1,x}$ of $x$ is 
either monotonically increasing or decreasing.
We assume it is increasing, and set 
\begin{align}\label{eq:CLT:eta1od}
 \eta_1^o \ceq \eta_{1,x_0}, \quad d(x) \ceq \eta_{1,x}-\eta_1^o.
\end{align}
We can see from \eqref{eq:CLT:eta-lim} that $d(x)=O(\ep/ \sqrt{n})$ and $d(x) \ge 0$.
The decreasing case can be treated similarly by setting $\eta_{1}^o \ceq \eta_{1,x_+}$,
so we omit the detail.

In the limit $n \to \infty$, the quantity $\eta_1^o$ converges to 
a constant independent of $y$ and $\ep$. We denote it by 
\begin{align}\label{eq:CLT:beta1o}
 \ol{\eta}_1^o \ceq \lim_{n \to \infty} \eta_1^o.
\end{align}
Using these quantities, we have:

\begin{lem}\label{lem:CLT:lem4}
Define $c_1^{\pm}$ and $c_2$ by 
\begin{align}
 c_1^+ \ceq \lim_{n \to \infty} \sqrt{n}\sigma \left(\max_{x'} \eta_{2,x'}\right), \ 
 c_1^- \ceq \lim_{n \to \infty} \sqrt{n}\sigma \left(\min_{x'} \eta_{2,x'}\right), \ 
 c_2   \ceq \lim_{n \to \infty} \sqrt{n}\sigma \left(\max_{x'} d(x')\right),
\end{align}
where $\min_{x'}, \max_{x'}$ are taken in the interval $[x_-, x_+]$.
Then, for any $0<\ve'<\ve$, we have 
\begin{align}\label{eq:CLT:lem4}
\begin{split}
&(c_1^- -c_2) \int_0^{\ep'} \bigl(F(y+\ep-t) -F(y-t)\bigr) \, d t \\
&\qquad \le 
 (1-\ol{\eta}_1^o) \bigl(F(y+\ep)- F(y+\ep-\ep')- F(y) +F(y-\ep') \bigr) \\
&\qquad \qquad \le 
 (c_1^+ +c_2) \int_0^{\ep'} \bigl(F(y+\ep-t) -F(y-t)\bigr) \, d t. 
\end{split}
\end{align}
\end{lem}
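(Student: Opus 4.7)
The plan is to realize the inequality as the $n\to\infty$ limit of a double-telescoping identity obtained by summing the recurrence \eqref{eq:CLT:AP3} twice. First, I would split $\eta_{1,x}=\eta_1^o+d(x)$ in accordance with \eqref{eq:CLT:eta1od} and rewrite the recurrence in the form
\begin{align*}
\Delta p_n(x)-\eta_1^o \Delta p_n(x-1)
 =\eta_{2,x} p_n(x)+d(x)\Delta p_n(x-1),
\end{align*}
with $\Delta p_n(x):=p_n(x+1)-p_n(x)$. Introducing $g_n(x):=p_n(x+1)-\eta_1^o p_n(x)$, a single summation over $x=a+1,\ldots,b$ telescopes to
\begin{align*}
g_n(b)-g_n(a)=\sum_{x=a+1}^{b}\bigl[\eta_{2,x}p_n(x)+d(x)\Delta p_n(x-1)\bigr].
\end{align*}

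Next, put $J:=\lfloor\sqrt{n}\sigma\varepsilon\rfloor$, $J':=\lfloor\sqrt{n}\sigma\varepsilon'\rfloor$, and apply the above identity for each sliding window $(a_k,b_k)=(x_0-k,x_0+J-k)$, $k=0,\ldots,J'-1$. Summing over $k$ converts the left-hand side into a difference of two sums of $g_n$ over windows of length $J'$:
\begin{align*}
\sum_{k=0}^{J'-1}\bigl[g_n(x_0+J-k)-g_n(x_0-k)\bigr]
 =(1-\eta_1^o)\Bigl(\sum_{b=x_0+J-J'+1}^{x_0+J}p_n(b)-\sum_{a=x_0-J'+1}^{x_0}p_n(a)\Bigr)+\mathcal{B}_n,
\end{align*}
where $\mathcal{B}_n$ collects boundary values $p_n(x_0),p_n(x_0+J),p_n(x_0-J'+1),p_n(x_0+J-J'+1)$; by Lemma \ref{lem:CLT:lem3} each of these tends to $0$, so $\mathcal{B}_n\to 0$. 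Choosing $\varepsilon,\varepsilon'$ to be continuity points of $F$ and using $F_n\to F$ through $\lim_n F_n(y)=F(y)$, the bracketed quantity converges to $F(y+\varepsilon)-F(y+\varepsilon-\varepsilon')-F(y)+F(y-\varepsilon')$, giving the middle expression in \eqref{eq:CLT:lem4} after multiplying by $(1-\bar\eta_1^o)$ (using $\eta_1^o\to\bar\eta_1^o$ from \eqref{eq:CLT:beta1o}).

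For the right-hand side, the $\eta_{2,x}p_n(x)$ contribution is sandwiched between $(\min_{x'}\eta_{2,x'})\,S_n$ and $(\max_{x'}\eta_{2,x'})\,S_n$, where $S_n:=\sum_{k=0}^{J'-1}\sum_{x=x_0-k+1}^{x_0+J-k}p_n(x)$. The inner sum equals $F_n\bigl(y+\varepsilon-\tfrac{k}{\sqrt{n}\sigma}\bigr)-F_n\bigl(y-\tfrac{k}{\sqrt{n}\sigma}\bigr)$ up to boundary terms, so $\frac{1}{\sqrt{n}\sigma}S_n$ is a Riemann sum converging to $\int_0^{\varepsilon'}[F(y+\varepsilon-t)-F(y-t)]\,dt$, and combined with $\sqrt{n}\sigma\max_{x'}\eta_{2,x'}\to c_1^+$ (respectively $c_1^-$), this yields the $c_1^\pm$ parts of the bounds. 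The remaining contribution $\sum_{k}\sum_{x}d(x)\Delta p_n(x-1)$ is the technical error: bounding $|d(x)|\leq \max_{x'}d(x')$ and $|\Delta p_n(x-1)|\leq p_n(x)+p_n(x-1)$ crudely gives the inequality with $c_2$ slack on each side (up to a factor), which is exactly what the statement tolerates.

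The main obstacle will be obtaining the $d(x)\Delta p_n(x-1)$ bound with the clean constant $c_2$ rather than a looser multiple. A naive use of the triangle inequality $|p_n(x)-p_n(x-1)|\le p_n(x)+p_n(x-1)$ only gives a $2c_2$ bound, so one should instead apply Abel summation
\begin{align*}
\sum_{x=a+1}^{b} d(x)\Delta p_n(x-1)
 =d(b)p_n(b)-d(a+1)p_n(a)-\sum_{x=a+1}^{b-1}\bigl(d(x+1)-d(x)\bigr)p_n(x),
\end{align*}
and exploit that $d(x+1)-d(x)=\eta_{1,x+1}-\eta_{1,x}=O(1/n)$ while $d$ itself has a definite sign on each side of $x_0$ thanks to monotonicity; the boundary values $p_n(a),p_n(b)$ vanish in the limit by Lemma \ref{lem:CLT:lem3}, and the remaining sum is absorbed into the $c_2$-term via the same Riemann-sum argument used for $\eta_{2,x}p_n(x)$. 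Finally, the decreasing case of $\eta_{1,x}$ is handled identically by reversing signs and taking $\eta_1^o:=\eta_{1,x_+}$.
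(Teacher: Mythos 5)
Your proposal is correct and its overall architecture coincides with the paper's proof: split $\eta_{1,x}=\eta_1^o+d(x)$, telescope the recurrence \eqref{eq:CLT:AP3} over a window of length $\sqrt{n}\sigma\ep$, sum over $\sqrt{n}\sigma\ep'$ sliding windows so that the left side becomes the second difference of $F_n$, and identify the right side as a Riemann sum converging to $\int_0^{\ep'}\bigl(F(y+\ep-t)-F(y-t)\bigr)\,dt$ (the paper's \eqref{eq:CLT:Fint}, proved there by nesting a coarser partition and using monotonicity of $F_n$). The one place you genuinely diverge is the remainder $\sum d(x)\bigl(p_n(x)-p_n(x-1)\bigr)$. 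The paper does not Abel-sum it: it uses the sign $d(x)\ge 0$ to bound the term pointwise, $-d(x)p_n(x-1)\le d(x)\bigl(p_n(x)-p_n(x-1)\bigr)\le d(x)p_n(x)$, and then absorbs $d$ into the same Riemann-sum machinery as $\eta_{2,x}$; this is precisely what produces the $\pm c_2$ slack appearing in the statement (see \eqref{eq:CLT:AP4}). Your Abel-summation route is also valid — the boundary terms are $O(\max|d|)=O(\ep/\sqrt{n})$ after summing $p_n$ over distinct points, and the $\sum(d(x+1)-d(x))p_n(x)$ piece is $O(1/n)$ per window, hence $O(n^{-1/2})$ in total — and it in fact shows the $d$-contribution vanishes, so you would obtain the inequality with $c_2$ effectively replaced by $0$, slightly stronger than claimed. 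The trade-off is that the paper's sign-based bound is shorter and delivers exactly the stated constants, while yours needs the extra bookkeeping of Abel summation; your worry that the naive triangle inequality only gives $2c_2$ is legitimate, but the paper sidesteps it without Abel summation. (One cosmetic point shared with the paper: since the shifted windows dip below $x_0$, where $d$ changes sign in the increasing case, it is cleaner to anchor $\eta_1^o$ at the left end of the full range $[x_-,x_+]$; this does not change $\ol{\eta}_1^o$ or $c_2$ in the limit.)
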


\begin{proof}
The recursion \eqref{eq:CLT:AP3} can be written as
\begin{align}
 \bigl(p_n(x+1)-p_n(x)\bigr)-\eta_{1}^o \bigl(p_n(x)-p_n(x-1)\bigr) 
=\eta_{2,x} p_n(x) + d(x) \bigl(p_n(x)-p_n(x-1)\bigr).
\end{align}
It yields the inequalities
\begin{align}\label{eq:CLT:AP4}
\begin{split}
&\left(\min_{x'} \eta_{2,x'}\right) p_n(x) - \left(\max_{x'} d(x)\right) p_n(x-1) \\
&\le \bigl(p_n(x+1)-p_n(x)\bigr) -\eta_{1}^o \bigl(p_n(x)-p_n(x-1)\bigr) 
 \le \max_{x'}\bigl(\eta_{2,x'}+d(x')\bigr) p_n(x),
\end{split}
\end{align}
where $\min_{x'}, \max_{x'}$ are taken in the interval $[x_-, x_+]$.
Applying the second inequality of \eqref{eq:CLT:AP4} on the summation 
$p_n(x_+)-p_n(x_0)=\sum_{i=1}^{\sqrt{n}\sigma\ep} \bigl(p_n(x_0+i)-p_n(x_0+i-1) \bigr)$,
we have 
\begin{align}
\begin{split}
&p_n(x_+)-p_n(x_0) -\eta_1^o \bigl(p_n(x_+ -1)-p_n(x_0 -1) \bigr) \\
&\le \max_{x'}\bigl(\eta_{2,x'}+d(x')\bigr) 
 \sum_{i=1}^{\sqrt{n}\sigma\ep} p_n(x_0+i-1) \\
&\quad 
 =\max_{x'}\bigl(\eta_{2,x'}+d(x')\bigr) 
  \bigl(F_n(y+\ep-\tfrac{1}{\sqrt{n}\sigma})-F_n(y) \bigr).
\end{split}
\end{align}
Take any $0<\ep' <\ep$ as in the statement.
Then, for $j \in [0, \sqrt{n}\sigma\ep']$, the same argument yields
\begin{align}
\begin{split}
 p_n(x_+ -j)-p_n(x_0-j) &- \eta_{1}^o \bigl(p_n(x_+ -j-1)-p_n(x_0-j-1) \bigr) \\
&\le \max_{x'} \bigl(\eta_{2,x'}+d(x')\bigr) 
 \bigl(F_n(y+\ep-\tfrac{1}{\sqrt{n}\sigma}-j)-F_n(y-j) \bigr).
\end{split}
\end{align}
Summation over $j$ yields 
\begin{align}\label{eq:CLT:Fineq}
\begin{split}
&F_n(y+\ep)-F_n(y+\ep-\ep')-F_n(y)+F_n(y-\ep) \\
&-\eta_1^o \Bigl(
 F_n(y+\ep-\tfrac{1}{\sqrt{n}\sigma})-F_n(x+\ep-\ep'-\tfrac{1}{\sqrt{n}\sigma}) 
-F_n(y    -\tfrac{1}{\sqrt{n}\sigma})+F_n(y    -\ep'-\tfrac{1}{\sqrt{n}\sigma}) \Bigr) \\
&\le \max_{x'} \bigl(\eta_{2,x'}+d(x')\bigr) \sum_{j=0}^{\sqrt{n}\sigma\ep'}
 \bigl(F_n(y+\ep-\tfrac{j+1}{\sqrt{n}\sigma})
      -F_n(y-\tfrac{j}{\sqrt{n}\sigma})\bigr).
\end{split}
\end{align}
Let us admit the following for a while:
\begin{align}\label{eq:CLT:Fint}
 \lim_{n \to \infty} \frac{1}{\sqrt{n}\sigma} \sum_{j=0}^{\sqrt{n} \sigma \ep}
 F_n(y-\tfrac{j}{\sqrt{n}\sigma}) = \int_0^{\ep} F(y-t) \, d t.
\end{align}
Then, taking $n \to \infty$ of the inequality \eqref{eq:CLT:Fineq}, we obtain
\begin{align}
\begin{split}
 (1-\ol{\eta}_1^o) &\bigl(F(y+\ep)-F(y+\ep-\ep')-F(y)+F(y-\ep) \bigr) \\
&\le (c_1^++c_2)\int_0^{\ep'} \bigl(F(y+\ep-t) -F(y-t)\bigr) \, d t, 
\end{split}
\end{align}
where $\ol{\eta}_1^o$ is given in \eqref{eq:CLT:beta1o},
and  $c_1^+,c_2$ are given in the statement.
This is the second of the desired inequalities \eqref{eq:CLT:lem4}.


Now, we show \eqref{eq:CLT:Fint}.
Divide the interval $[0,\ep]$ by the points $\{\tfrac{j}{\sqrt{n} \sigma}\}_j$.
We also consider another coarser division by $\{\tfrac{i}{N}\}_i$ 
which has the first division as a refinement.
Since $F_n(y-t)$ is a decreasing function of $t$, we have
\begin{align}
\begin{split}
 \frac{1}{N} \sum_{i=0}^{N \ep} F_n(y-\tfrac{i+1}{N}) 
&\le 
 \frac{1}{\sqrt{n}\sigma} \sum_{j=0}^{\sqrt{n}\sigma \ep} 
 F_n(y-\tfrac{j+1}{\sqrt{n} \sigma}) \\
&\le 
 \frac{1}{\sqrt{n}\sigma} \sum_{j=0}^{\sqrt{n}\sigma \ep} 
 F_n(y-\tfrac{j}{\sqrt{n} \sigma}) \le 
 \frac{1}{N} \sum_{i=0}^{N \ep} F_n(y-\tfrac{i}{N}).
\end{split}
\end{align}
Now, fixing $N$, we take $n \to \infty$. 
Since it has been already shown that $F$ is continuous, we have 
\begin{align}\label{eq:CLT:FFn}
\begin{split}
 \frac{1}{N} \sum_{i=0}^{N \ep} F(y-\tfrac{i+1}{N}) 
&\le \liminf_{n \to \infty} \frac{1}{\sqrt{n}\sigma} \sum_{j=0}^{\sqrt{n}\sigma \ep}
 F_n(y-\tfrac{j}{\sqrt{n} \sigma}) \\
&\le \limsup_{n \to \infty} \frac{1}{\sqrt{n}\sigma} \sum_{j=0}^{\sqrt{n}\sigma \ep}
 F_n(y-\tfrac{j}{\sqrt{n} \sigma}) 
 \le \frac{1}{N} \sum_{i=0}^{N \ep} F(y-\tfrac{i}{N}).
\end{split}
\end{align}
Next we take $N \to \infty$. 
Then, the leftmost and rightmost sides of \eqref{eq:CLT:FFn} converge to 
$\int_0^{\ep} F(y-t) \, d t$, 
so that both middle limits are equal to the same value,
which is the desired \eqref{eq:CLT:Fint}.
Therefore, we finished the proof of the second inequality in  \eqref{eq:CLT:lem4}.

As for the first inequality in \eqref{eq:CLT:lem4},
since $\ol{\eta_1}=\lim_{n \to \infty} \eta_{1,x_+}$,
a quite similar argument works using the first inequality in \eqref{eq:CLT:AP4}.
We omit the detail.
\end{proof}

Now, we can show the central limit theorem, Theorem \ref{thm:intro:CLT}.

\begin{proof}[{Proof of Theorem \ref{thm:intro:CLT}}]
Take any $0< \ep' <\ep$, and 
consider the the inequalities \eqref{eq:CLT:lem4} in Lemma \ref{lem:CLT:lem3}.
Dividing them by $\ep'$ and taking the limit $\ep' \to 0$,
we find that the the derivative $f$ of $F$ exists and 
\begin{align}\label{eq:CLT:FfF}
\begin{split}
     (c_1^- -c_2)  \bigl(F(y+\ep-t)-F(y-t)\bigr) 
&\le (1-\ol{\eta}_1^o) \bigl(f(y+\ep)-f(y)\bigr) \\
&\le (c_1^+ +c_2)  \bigl(F(y+\ep-t)-F(y-t)\bigr).
\end{split}
\end{align}
As noted right after \eqref{eq:CLT:eta1od}, 
we have $d(x)=\eta_{1,x}-\eta_1^o = O(\ep/\sqrt{n})$,
which implies $\lim_{\ep \to 0}c_2=0$.
A direct calculation using \eqref{eq:CLT:eta-lim} shows 
\begin{align}
 \lim_{\ep \to 0} c_1^+ = \lim_{\ep \to 0} c_1^- =  y d_1, \quad
 d_1 \ceq \frac{\sigma^2 (\mu-\nu)}{(\xi-\mu)(1-\xi-\mu)} 
        \frac{(1-2\mu)^2}{\mu(1-\kappa-\mu)}.
\end{align}
Thus, dividing \eqref{eq:CLT:FfF} by $\ep$ and taking $\ep \to 0$, 
we find that $f$ is differentiable and $d_1 y f(y) = (1-\ol{\eta}_1^o) f'(y)$, i.e.,
\begin{align}
 f'(y) = \frac{d_1 y}{1-\ol{\eta}_1^o} f(y).
\end{align}
Finally, recalling the calculation in \eqref{eq:II:sigma^2}, we have 
\begin{align}
 \frac{d_1}{1-\ol{\eta}_1^o} = -\sigma^2
 \frac{(\nu-\mu)(1-2\mu)}{\kappa (\xi-\mu)(1-\xi-\mu)} = -1,
\end{align}
which means that $f$ satisfies 
the same differential equation of the standard normal distribution.
Then, by Proposition \ref{prp:tail}, 
we find that $f$ is a probability distribution function, i.e., 
having the standard normalization factor $1/\sqrt{2\pi}$.
The proof is now finished.
\end{proof}

\section{Asymptotic analysis beyond central limit theorem} 
\label{ss:II:Cas}


In this subsection, we give an asymptotic analysis 
beyond the central limit theorem, Theorem \ref{thm:intro:CLT}.
We consider the Type II limit with the condition \eqref{eq:intro:II:asmp}.
Then, $\eta<1/4$, $D \ceq 1-4\eta>0$ and the following theorem holds.

\begin{thm}\label{thm:EV}
In Type II limit with the condition \eqref{eq:intro:II:asmp}, we have 
\begin{align}\label{eq:EV:HH3}
 \lim_{n \to \infty} \bbV \Bigl[\frac{X}{\sqrt{n}}\Bigr] = 
 \lim_{n \to \infty} \bbE \Bigl[\Big(\frac{X-n\mu}{\sqrt{n}}\Big)^2\Bigr] = 
 \sigma^2 \ceq \frac{(\alpha+\gamma)\beta \delta}{D}.
\end{align}
with $\mu\ceq(1-\sqrt{D})/2$. Moreover, we have 
\begin{align}\label{eq:EV:HH2} 
 \bbE[X] = n \mu + \phi + o(n^{0}), \quad 
 \phi \ceq \frac{\sigma^2-\mu}{1-2\mu}. 
\end{align}
\end{thm}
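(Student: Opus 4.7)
The plan is to combine the exact algebraic identity from Theorem \ref{thm:Cas} with the central limit Theorem \ref{thm:CLT}, using the latter to compute one limit and the former to transport it to the other. Set $a_n := \bbE[X]/n$, $v_n := \bbV[X]/n^2$, $\delta_n := a_n - \mu$, and recall $\eta = \alpha\gamma+\alpha\delta+\beta\gamma$, so that $\mu(1-\mu) = (1-D)/4 = \eta$. Theorem \ref{thm:Cas} reads $v_n = a_n(1-a_n) + a_n/n - \eta$, which upon expanding around $a_n = \mu$ becomes
\begin{align*}
 v_n = (1-2\mu)\delta_n - \delta_n^2 + \tfrac{\mu + \delta_n}{n}.
\end{align*}
Under \eqref{eq:II:EV:cond} one has $1-2\mu = \sqrt{D} > 0$, so this identity can be solved for $\delta_n$ once $v_n$ is known to sufficient accuracy, and vice versa.

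First I would establish part (1), $\bbV[X/\sqrt{n}] \to \sigma^2$. Let $Y_n := (X - n\mu)/\sqrt{n}$, so Theorem \ref{thm:CLT} says $Y_n$ converges in distribution to $\sigma Z$ with $Z \sim N(0,1)$. Granted uniform integrability of $\{Y_n^2\}_n$, standard arguments yield $\bbE[Y_n^2] \to \sigma^2 \bbE[Z^2] = \sigma^2$ and a fortiori $\bbE[Y_n] \to 0$. The latter gives $\delta_n = o(1/\sqrt{n})$, hence $n\delta_n^2 \to 0$, and combining this with the identity $\bbE[Y_n^2] = n v_n + n \delta_n^2$ yields $n v_n \to \sigma^2$, which is exactly part (1). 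To secure the uniform integrability I would sharpen Proposition \ref{prp:tail}: its proof rests on the estimate $\zeta_{2,\max}^{\sqrt{n}} \to e^{-c_2}$ with $c_2 = c_2(R)$ linear in $R$ by \eqref{eq:tail:c2}. Iterating the block argument of that proof across many consecutive windows of length $\sqrt{n}$ and composing the resulting decay factors should produce a bound of the form $\Prb_n[\,|Y_n| > R\,] \le C_1 e^{-C_2 R}$ uniformly in $n$ for $R$ large, which is far more than enough for $\{Y_n^p\}_n$ to be uniformly integrable for any $p > 0$.

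Given part (1), part (2) falls out of the Casimir identity by straightforward substitution. With $v_n = \sigma^2/n + o(1/n)$ and $\delta_n = o(1/\sqrt{n})$ already known, the terms $\delta_n^2$ and $\delta_n/n$ are both $o(1/n)$, so the identity collapses to
\begin{align*}
 (1-2\mu)\delta_n = \tfrac{\sigma^2 - \mu}{n} + o(1/n),
\end{align*}
whence $n \delta_n = \phi + o(1)$ with $\phi := (\sigma^2-\mu)/(1-2\mu)$, i.e.\ $\bbE[X] = n\mu + \phi + o(1)$.

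The main obstacle will be the uniform integrability step. Proposition \ref{prp:tail} as stated gives only tightness of $\{Y_n\}$; the quantitative $R$-dependence of $c_2(R)$ is present in its proof but is not extracted there. Translating this implicit dependence into a tail bound that is uniform in $n$ requires careful control of $\max_{x'}\eta_{2,x'}$ and of the accumulated error across a long chain of consecutive $\sqrt{n}$-blocks; the later Proposition \ref{prp:5.6.2} suggests such an exponential bound is available and indeed tight in the special case $\alpha\gamma = 0$, and the general case should follow by the same recursive scheme. Once this tail bound is in hand, the remainder of the proof is routine algebraic manipulation of the Casimir identity.
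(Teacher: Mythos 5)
Your proposal follows the same architecture as the paper's proof: both rest on the Casimir identity of Theorem \ref{thm:Cas} combined with the central limit Theorem \ref{thm:CLT}, both reduce everything to uniform integrability of $Y_n^2$ where $Y_n=(X-n\mu)/\sqrt{n}$ (the paper's condition \eqref{eq:EV:BH2} is exactly that statement), and your algebra solving $(1-2\mu)\delta_n-\delta_n^2+(\mu+\delta_n)/n=v_n$ for $\delta_n$ matches the paper's expansion of the quadratic for $\bbE[X/n]$. The one point of genuine divergence is how the uniform integrability is secured, and here the paper's device is worth noting because it is both weaker in what it proves and considerably easier to execute than what you propose. Rather than extracting a uniform-in-$n$ exponential tail bound $\Prb_n[|Y_n|>R]\le C_1e^{-C_2R}$ from an iterated block argument (which is plausible — the $R$-linearity of $c_2$ in \eqref{eq:tail:c2} does suggest sub-Gaussian decay — but would require controlling the accumulated prefactors $(1-\zeta_{2,\max}^{\sqrt{n}})^{-1}(1+c_1)$ across an unbounded chain of windows), the paper simply reweights the pmf, setting $\ol{p}_n(n\mu+i):=\tfrac{i^2}{n}\,p_n(n\mu+i)$. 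The crucial observation is that $\ol{p}_n$ satisfies a three-term recursion of the identical form with coefficients $\ol{A}^{(n)}_{n\mu+i}=A^{(n)}_{n\mu+i}\tfrac{i^2}{(i-1)^2}$, $\ol{C}^{(n)}_{n\mu+i}=C^{(n)}_{n\mu+i}\tfrac{i^2}{(i+1)^2}$, whose limits coincide with the originals since $i^2/(i\pm1)^2\to 1$; hence the law-of-large-numbers argument and the single-window tail estimate of Proposition \ref{prp:tail} apply verbatim to the weighted measure and yield $\lim_{R\to\infty}\limsup_n\sum_{|i|\ge\sqrt{n}R}\ol{p}_n(n\mu\pm i)=0$ directly, with no quantitative rate needed. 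If you pursue your route, the iterated exponential bound is the step that needs to be written out in full (it is the only non-routine part of your argument, as you acknowledge); the reweighting trick is the shortcut that makes it unnecessary.
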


The main tool of the proof is Theorem \ref{thm:Cas}, where we derived the relation of 
the expectation $\bbE[X]$ and the variance $\bbV[X]$:
\begin{align}\label{eq:EV:201}
 \frac{\bbV[X]}{n^2} = 
 \frac{\bbE[X]}{n}\left(1-\frac{\bbE[X]}{n}\right)+\frac{\bbE[X]}{n^2} - \eta, \quad 
 \eta \ceq \alpha \gamma+\alpha\delta+\beta\gamma.
\end{align}
Further, we have
\begin{align}\label{eq:EV:20T}
 \lim_{n \to \infty} \bbE \Bigl[\Big(\frac{X-n\mu}{\sqrt{n}}\Big)^j\Bigr] 
&<\infty 
\end{align}
for $j=3,4, \ldots.$
The proof consists of two steps.
In the first step, we show the first equality in \eqref{eq:EV:HH3} 
and the estimation \eqref{eq:EV:HH2} using \eqref{eq:EV:201} under a certain assumption.
In the second step, we show the assumption.

\begin{proof}[{Step 1 of Proof of Theorem \ref{thm:EV}}]
Assuming
\begin{align}\label{eq:EV:MK}
 \lim_{n \to \infty} \bbE \Bigl[\frac{X-n\mu}{\sqrt{n}}\Bigr] = 0, \quad 
 \lim_{n \to \infty} \bbE \Bigl[\Bigl(\frac{X-n\mu}{\sqrt{n}}\Bigr)^2\Bigr] = \sigma^2,
\end{align}
we show the desired statements.
First, note that the estimations in \eqref{eq:EV:MK} yield 
\begin{align}\label{eq:EV:MKT}
 \lim_{n \to \infty} \tfrac{1}{n}\bbV[X] = \sigma^2.
\end{align} 
Second, the relation \eqref{eq:EV:201} implies 
the quadratic equation for $\bbE[X/n]$:
\begin{align}
 \bbE[\tfrac{X}{n}]^2 - 
 \bigl(1+\tfrac{1}{n}\bigr) \bbE[\tfrac{X}{n}] + \tfrac{1}{n^2}\bbV[X] + \eta = 0.
\end{align}
Since $\bbE[X]/n \le 1/2$,
we have
\begin{align}
 \bbE[\tfrac{X}{n}] = \tfrac{1}{2}\Bigl(1+\tfrac{1}{n}
  -\sqrt{1-4\eta+\tfrac{2}{n}+ \tfrac{1-4\bbV[X]}{n^2}} \Bigr) .
\end{align}
Since $\eta< 1/4$, we can expand the right hand side using \eqref{eq:EV:MKT} as
\begin{align}
 \bbE[\tfrac{X}{n}] = 
 \tfrac{1-\sqrt{1-4\eta}}{2}+\tfrac{1}{n}\tfrac{1}{\sqrt{1-4\eta}}
 \Bigl(\sigma^2- \tfrac{1-\sqrt{1-4\eta}}{2}\Bigr) + o(\tfrac{1}{n}) = 
 \mu + \tfrac{\phi}{n} + o(\tfrac{1}{n}).
\end{align}
\end{proof}

In the next step, we show the assumption \eqref{eq:EV:MK}.
Let us sketch the outline and the strategy.
A short discussion claims that it is enough to show 
the tail estimation in \eqref{eq:EV:BH2}.
The idea of its proof is to modify appropriately the arguments in the proof of 
the central limit Theorem \ref{thm:intro:CLT}, where we mainly used 
the recursion \eqref{eq:AE:NAP} of the pmf $p_n(x)=p(x \midd n,m,k,l)$:
\begin{align}\label{eq:EV:NAP0}
 A_x^{(n)} p_n(x-1) + B_x^{(n)} p_n(x) = C_x^{(n)} p_n(x+1).
\end{align}
See the lines below \eqref{eq:AE:NAP} for the precise expressions of 
the coefficients $A_x^{(n)}$, $B_x^{(n)}$ and $C_x^{(n)}$.
In the proof of Theorem \ref{thm:intro:CLT}, 
we used the random variable $Y=\frac{X-n \mu}{\sqrt{n}}$
and the cdf $F_n(y) \ceq \sum_{u=0}^{\sqrt{n} \sigma y+n \mu} p_n(u)$.
For the present Theorem \ref{thm:EV}, we will use the same variable $Y$,
but use the modified functions \eqref{eq:EV:olp} and \eqref{eq:EV:GF} 
for the evaluation.
Correspondingly we will use a modified recurrence relation \eqref{eq:EV:NAP2}.

\begin{proof}[{Step 2 of Proof of Theorem \ref{thm:EV}}]
The central limit Theorem \ref{thm:intro:CLT} yields
\begin{align}
 \lim_{R \to \infty} \lim_{n \to \infty} 
 \sum_{i=- \sqrt{n}R}^{\sqrt{n}R} \frac{i}{\sqrt{n}} \, p_n(n \mu +i) = 0, \quad
 \lim_{R \to \infty} \lim_{n \to \infty} 
 \sum_{i=- \sqrt{n}R}^{\sqrt{n}R} \frac{i^2}{n} \, p_n(n \mu +i) = \sigma^2.
\end{align}
Therefore, if we show the following relations, 
then the above ones yield \eqref{eq:EV:MK}.
\begin{align}
 \lim_{R \to \infty}\lim_{n \to \infty} 
 \sum_{i= \sqrt{n} R}^n \frac{|i|}{\sqrt{n}} 
 \bigl(p_n(n \mu-i)+p_n(n \mu +i)\bigr) &= 0, \label{eq:EV:BH1} \\
 \lim_{R \to \infty}\lim_{n \to \infty} 
 \sum_{i= \sqrt{n} R}^n \frac{i^2}{n} \bigl(p_n(n \mu -i)+p_n(n \mu +i)\bigr) &= 0. 
 \label{eq:EV:BH2}
\end{align}
Since \eqref{eq:EV:BH2} implies \eqref{eq:EV:BH1}, 
it is enough to show \eqref{eq:EV:BH2}.

For this aim, as mentioned before this step 2 of the proof, we define 
$\ol{p}_n(x)$, $\ol{A}_x^{(n)}$, $\ol{B}_x^{(n)}$ and $\ol{C}_x^{(n)}$ as
\begin{gather}\label{eq:EV:olp}
 \ol{p}_n (n \mu +i) \ceq \frac{i^2}{n} \, p_n(n \mu +i), \\
 \ol{A}_{n \mu +i}^{(n)} \ceq A_{n \mu +i}^{(n)} \frac{i^2}{(i-1)^2}, \quad 
 \ol{B}_{n \mu +i}^{(n)} \ceq B_{n \mu +i}^{(n)}, \quad 
 \ol{C}_{n \mu +i}^{(n)} \ceq C_{n \mu +i}^{(n)} \frac{i^2}{(i+1)^2}.
\end{gather}
Then, the recurrence relation \eqref{eq:EV:NAP0} is rewritten as
\begin{align}\label{eq:EV:NAP2}
 \ol{A}_x^{(n)} \ol{p}_n(x-1) + \ol{B}_x^{(n)} \ol{p}_n(x) = 
 \ol{C}_x^{(n)} \ol{p}_n(x+1) \quad (0<x<m).
\end{align}
We also define 
\begin{align}\label{eq:EV:GF}
 \ol{G}_n(z) \ceq \sum_{u=0}^{n(z+\mu)} \ol{p}_n(u), \quad
 \ol{F}_n(y) \ceq \sum_{u=0}^{\sqrt{n}\sigma y+n \mu} \ol{p}_n(u).
\end{align}
To show \eqref{eq:EV:BH2}, we need some evaluation of $\ol{F}_n(y)$.
To concentrate such an evaluation around the point $y=0$, we evaluate 
$\ol{G}_n(z)$ first, as we did to show the central limit Theorem \ref{thm:intro:CLT}.

Now, recall the proof of the law of large numbers (Theorem \ref{thm:LLN}).
Similarly as there, we take and fix $z<0$ until \eqref{eq:EV:5525}.
Then, for any small $\ep>0$, we have 
\begin{align}\label{eq:EV:AB>C}
 \lim_{n\to \infty} \left(\min_x \ol{A}_x^{(n)}+\min_x \ol{B}_x^{(n)}
                         -\max_x \ol{C}_x^{(n)}\right) > 0,
\end{align}
where $\max_x$ and $\min_x$ are taken in the range 
\begin{align} 
 x \in [n(z+\mu-\ep), n(z+\mu+\ep)].
\end{align}
Now, the same argument as \eqref{eq:AE:39} shows 
\begin{align}
 &\left(\min_x \ol{A}_x^{(n)}+\min_x \ol{B}_x^{(n)}-\max_x \ol{C}_x^{(n)}\right) 
  \left(\ol{G}_n(z+\ep-\tfrac{1}{n})-\ol{G}_n(z-\tfrac{1}{n})\right) 
 \nonumber \\
 &\le -\left(\min_x \ol{A}_x^{(n)}\right) 
       \bigl(\ol{p}_n(n(z+\mu))     + \ol{p}_n(n(z+\mu)-1) 
 \nonumber \\
 &\hspace{8em}
            -\ol{p}_n(n(z+\mu+\ep)) - \ol{p}_n(n(z+\mu+\ep)-1)\bigr) 
 \nonumber \\
 &\phantom{M}
 -\left(\min_x \ol{B}_x^{(n)}\right) 
  \bigl(\ol{p}_n(n(z+\mu)) - \ol{p}_n(n(z+\mu+\ep))\bigr).
 \label{eq:AE:39B}
\end{align}
Then, by \eqref{eq:EV:AB>C}, we have
\begin{align}\label{eq:EV:5525}
 \lim_{n \to \infty} \left(\ol{G}_n(z+\ep)-\ol{G}_n (z)\right) = 0.
\end{align}
The same statement holds for $z>0$. 
Thus, for any small $\ep>0$, we have
\begin{align}\label{eq:EV:NAO}
 \lim_{n \to \infty} \ol{G}_n(-\ep)=
 \lim_{n \to \infty} \left(\ol{G}_n(1)-\ol{G}_n (\ep)\right) = 0.
\end{align}

Next, similarly as in the proof of the central limit Theorem \ref{thm:intro:CLT},
we define $\ol{F}$ as the limit function of $F_n(y)-F_n(0)$.
Recalling the proof of Proposition \ref{prp:tail} of the tail estimation,
we take $R>0$ and set $x_{1,n} \ceq n \mu+\sqrt{n}R \sigma$ and 
$x_{2,n} \ceq n(\mu +\ep)+\sqrt{n}R \sigma$.
Then, in the same way as \eqref{eq:CLT:5522}, there exist $c_0,c_{0,1}>0$ such that
\begin{align}\label{eq:EV:MMK2}
 \limsup_{n \to \infty} \sum_{x=x_{1,n}}^{x_{2,n}} \ol{p}_n(x) 
 \le (1-e^{-c_0})^{-1} (1+ c_{0,1}) 
 \bigl(\ol{F}(R+\tfrac{1}{\sigma}+0)-\ol{F}(R-0)\bigr).
\end{align}
By \eqref{eq:EV:NAO} and \eqref{eq:EV:MMK2}, we have
\begin{align}
 \limsup_{n \to \infty} \sum_{x=n \mu+\sqrt{n}R_1}^{\infty} \ol{p}_n(x) 
 \le (1-e^{-c_0})^{-1} (1+ c_{0,1}) \bigl(\ol{F}(R+\tfrac{1}{\sigma}+0)-\ol{F}(R-0)\bigr).
\end{align}
Hence, the property of the limit function $F$ implies
\begin{align}\label{eq:EV:AJO1}
 \lim_{R \to \infty} \limsup_{n \to \infty} 
 \sum_{x=n \mu+\sqrt{n}R_1}^{\infty} \ol{p}_n(x) = 0.
\end{align}
In the same way, we can show 
\begin{align}\label{eq:EV:AJO2}
 \lim_{R \to \infty} \limsup_{n \to \infty} 
 \sum_{x=0}^{n \mu-\sqrt{n}R_1} \ol{p}_n(x)  =0.
\end{align}
The combination of \eqref{eq:EV:AJO1} and \eqref{eq:EV:AJO2} implies \eqref{eq:EV:BH2}.

Further, extending the above method, 
we can show the following for any positive integer $j$:
\begin{align}\label{MKB}
 \lim_{n \to \infty} \bbE \Bigl[\abs{\tfrac{X-n\mu}{\sqrt{n}}}^j \Bigr] = 
 \int_{-\infty}^{\infty} |y|^j F(d y) <\infty.
\end{align}
\end{proof}

\begin{rmk}
{}From the relation \eqref{eq:EV:201}, we can also derive
\begin{align}
 \lim_{n \to \infty} \bbV[X/n] = 0 \iff \lim_{n \to \infty} \bbE[X/n] = \mu
\end{align}
using a version of the compactness theorem.
It claims that the law of large numbers for $X/n$ holds 
if and only if the asymptotic expectation is equal to $\mu$.
Although this equivalence is unnecessary in the logic of this note,
it helped us to make sure that the quantity $\mu$ gives 
the correct asymptotic expectation in the early stages of the study.
\end{rmk}

\section{Further analysis in special case $\alpha \gamma=0$}\label{ss:II:sp}
\subsection{Proof of Proposition \ref{prp:5.6.2}}
The aim of this section is to show Proposition \ref{prp:5.6.2}.
For this aim,
to address a more detailed analysis than Section \ref{S2-2},
we consider Type II limit in the case $\alpha \gamma=0$,
corresponding to the case $K L=0$. 
Although \ref{ss:AE-prf} and \ref{ss:CLT-prf} derived the law of large numbers
and the central limit theorem, they did not derive the tight exponential evaluation of
the tail probability and the asymptotic behavior of the variance.
In the special case $K L=0$, 
we can evaluate the asymptotic behavior of the cumulant generating function.
Hence, we can discuss the tight exponential evaluation of the tail probability as follows.

Using the binary entropy $ h(t) \ceq -t \log t -(1-t)\log (1-t)$,
we define functions $f(t)$ and $u(s)$ as
\begin{align}
f(t) &\ceq h(t)-h(\alpha+\beta)+\alpha \, h\Bigl(\frac{t}{\alpha}\Bigr)
        -(\alpha+\beta) \, h\Bigl(\frac{t}{\alpha+\beta}\Bigr) 
        +(1-\alpha-t) \, h\Bigl(\frac{\beta}{1-\alpha-t}\Bigr),\\
 u(s) &\ceq \max_{t}\bigl(s t+f(t)\bigr).\label{eq:ac=0:u}
\end{align}

Then, as shown below, we have the following proposition.

\begin{prp}\label{prp:5.6.2B}
In Type II limit with $\alpha \beta \delta >0$, $\gamma=0$ 
and $0< \xi = \alpha+\beta \le 1/2$, we have
\begin{align}
 \lim_{n \to \infty} \tfrac{-1}{n} \log \Prb_n \bigl[\tfrac{X}{n} \ge R \bigr]
 &= \max_{s \ge 0} \bigl(s R -u(s)\bigr), \label{NBCV0} \\
 \lim_{n \to \infty} \tfrac{-1}{n} \log \Prb_n \bigl[\tfrac{X}{n} \le R \bigr]
 &= \max_{s \le 0} \bigl(s R -u(s)\bigr).\label{NBCV1}
\end{align} 
\end{prp}

As for the function $u(s)$ in \eqref{eq:ac=0:u}, 
we can compute the derivatives $u'(0)$ and $u''(0)$ as follows.
\begin{prp}\label{NMO}
We have 
\begin{align}\label{eq:ac=0:u1u2}
 u' (0) = \mu, \quad 
 u''(0) = \frac{-1}{f''(\mu)} = \frac{\alpha \beta(1-\xi)}{1-4\alpha(1-\xi)}
        = \frac{(\alpha+\gamma) \beta \delta}{D} = \sigma^2.
\end{align}
\end{prp}

Proposition \ref{NMO} can be shown as follows. 
The function $u(s)$ of \eqref{eq:ac=0:u} is 
the Legendre transform of the smooth convex function $-f(t)$.
Hence, for $s$ and $t$ related by $s=-f'(t)$, we have $t=u'(s)$ and $u''(s) f''(t)=-1$.
Since $0=f'(\mu)$ by \eqref{eq:ac=0:sol}, we obtain 
\eqref{eq:ac=0:u1u2}.

The combination of the above two propositions implies that
the probability $\Prb_n \bigl[| \tfrac{X}{n} - \mu |\ge \epsilon \bigr]$
goes to zero exponentially for any $\epsilon >0$
under the same assumption as Proposition \ref{prp:5.6.2B}.
Hence, we obtain Proposition \ref{prp:5.6.2}.

\subsection{Proof of Proposition \ref{prp:5.6.2B}}

Before taking the limit $n \to \infty$, 
the parameters concerned are $(n,m,k,l)$ satisfying $K \ceq k-l=0$ or $L \ceq l=0$.
Let us focus on the former case $k=l$, and also assume $m \le n-m$.
Then, the pmf $p(x)$ is computed in Proposition \ref{prp:k=l}:
\begin{align}\label{eq:ac=0:p}
 p(x) = \frac{n-2x+1}{n-x+1} \frac{\binom{n}{x}}{\binom{n}{m}}
        \frac{\binom{l}{x}}{\binom{m}{x}} \binom{n-l-x}{m-l} 
\end{align}
for $x \le k=l$.
Below we assume $l>0$, $M \ceq m-l>0$ and $N \ceq n-m-k+l>0$
because of \eqref{eq:intro:II:asmp}.
In this section, we consider the Type II limit 
with $\gamma=0$, i.e., the limit $n \to \infty$ fixed ratios 
\begin{align}
 \gamma=\tfrac{k-l}{n}=0, \ \alpha=\tfrac{l}{n}>0, \ 
 \beta=\tfrac{M}{n}>0, \ \delta=\tfrac{N}{n}>0, \ 
 \xi=\alpha+\beta=\tfrac{m}{n} \le 1/2.
\end{align} 
Now, we employ the quantities defined before, which have the following expressions
\begin{align}\label{eq:ac=0:munu}
\begin{split}
&\eta = \alpha \delta = \alpha(1-\xi), \quad D = 1-4 \alpha(1-\xi), \\
&\mu = \frac{1-\sqrt{1-4\alpha(1-\xi)}}{2}, \quad 
 \nu = \frac{1+\sqrt{1-4\alpha(1-\xi)}}{2}, \quad 
 \sigma = \sqrt{\frac{\alpha \beta \delta}{D}}.
\end{split}
\end{align}
Note that the assumption $0<\xi \le 1/2$ and $0<\alpha<1/2$ yields $D, \sigma >0$ and 
\begin{align}\label{eq:ac=0:mhn}
 \mu < \tfrac{1}{2} < \nu.
\end{align}

\if0
Also, by Theorem \ref{thm:EV}, the expectation and the variance asymptotically behave as
\begin{gather}
 \bbE[X] = n \mu + \phi + o(n^{0}), \quad 
 \phi \ceq \frac{\sigma^2-\mu}{1-2\mu} = 
 \frac{1}{\sqrt{D}} (\frac{\alpha \beta \delta}{D}-\frac{1-\sqrt{D}}{2}), 
 \label{eq:ac=0:NJ1} \\
 \lim_{n \to \infty} \bbV \Bigl[\frac{X}{\sqrt{n}}\Bigr] = 
 \lim_{n \to \infty} \bbE \Bigl[\Big(\frac{X-n\mu}{\sqrt{n}}\Big)^2\Bigr] = 
 \sigma^2 = \frac{\alpha \beta \delta}{D}. 
 \label{eq:ac=0:NJ2}
\end{gather}
\fi

To prove Proposition \ref{prp:5.6.2},
below we discuss how to evaluate the tail probability in this case.
Using the binary entropy by
$ h(t) = -t \log t -(1-t)\log (1-t)$, 
we have $\binom{n}{m} \sim e^{n h(m/n)}$.
Hence, by using $t=x/n$, 
the binomial part of the pmf $p(x)$ in \eqref{eq:ac=0:p} can be approximated as
\begin{align}
\begin{split}
&\log \binom{n}{x} \binom{n}{m}^{-1} \binom{l}{x} \binom{m}{x}^{-1} \binom{n-l-x}{m-l} \\
&\sim  n \, h\Bigl(\frac{x}{n}\Bigr)-n \, h\Bigl(\frac{m}{n}\Bigr)
   +l \, h\Bigl(\frac{x}{l}\Bigr)-m \, h\Bigl(\frac{x}{m}\Bigr)
   +(n-l-x) \, h\Bigl(\frac{m-l}{n-l-x}\Bigr) = n f(t), 
\end{split}
\end{align}

Since $p(x)$ expresses a probability and 
the number of choices of $x$ is less than $n$, we have $f(t)\le 0$.
Also, since $f(t)$ is smooth, the equality $f(t)= 0$ implies $\frac{df}{dt}(t)=0$.
Then, using $\frac{dh}{dt}(t) = \log \frac{1-t}{t}$ and 
$\frac{d g}{dt}(t) = -\log (1-\frac{c}{t})$ for $g(t) \ceq t \, h(\frac{c}{t})$, we have 
\begin{align}
\begin{split}
\frac{df}{dt}(t)
=&\log \frac{1-t}{t}
 +\log \frac{1-\frac{t}{\alpha}}{\frac{t}{\alpha}}
 -\log \frac{1-\frac{t}{\alpha+\beta}}{\frac{t}{\alpha+\beta}}
 +\log\Bigl(1-\frac{\beta}{1-\alpha-t}\Bigr) \\
=&\log \frac{(1-t)(1-\alpha-\beta-t)(\alpha-t)}{t(1-\alpha-t)(\alpha+\beta-t)}.
\end{split}
\end{align}
Hence, $\frac{df}{dt}(t)=0$ if and only if
\begin{align}\label{eq:ac=0:AM}
 (1-t)(1-\alpha-\beta-t)(\alpha-t) = t(1-\alpha-t)(\alpha+\beta-t).
\end{align}
Using \eqref{eq:ac=0:munu}, we can find that 
the solutions of this cubic equation \eqref{eq:ac=0:AM} are
\begin{align}\label{eq:ac=0:sol}
 t = \tfrac{1}{2}, \mu, \nu.
\end{align}

Next we consider the cumulant generating function.
Using the variable $t=x/n$, we can compute it as 
\begin{align}
 e^{\phi_n(s)} = \bbE[e^{s X}] = \sum_{x \ge 0} e^{sx} p(x) \sim 
 \sum_{x \ge 0} \frac{1-2t+n^{-1}}{1-t+n^{-1}} \exp\Bigl(n \bigl(s t+f(t)\bigr)\Bigr).
\end{align}
Therefore, $\lim_{n\to \infty} \phi_n(s)/n$ is equal to 
$ u(s) = \max_{t}\bigl(s t+f(t)\bigr)$ defined in \eqref{eq:ac=0:u}.

Now, we turn to the evaluation of the tail probability.
We denote by $\Prb_n$ the probability for the distribution $P_{n,m,k,l}$.
The Markov inequality \cite[(2.168)]{Springer} implies that 
for any $R \in \bbR$ we have 
\begin{align}
 \Prb_n \bigl[ \tfrac{X}{n} \ge R \bigr]
 & \le \exp\Bigl(-n \cdot \max_{s \ge 0} (s R -\tfrac{\phi_n(s)}{n}) \Bigr),
 \label{eq:ac=0:MK3} \\
 \Prb_n \bigl[ \tfrac{X}{n} \le R \bigr] 
 & \le \exp\Bigl(-n \cdot \max_{s \le 0} (s R -\tfrac{\phi_n(s)}{n}) \Bigr). 
 \label{eq:ac=0:MK4}
\end{align} 
Since $u(s)$ is a smooth function, due to the G\"{a}rtner-Ellis theorem 
\cite[Theorem 2.3.6]{DZ}, \cite[Theorem 2.8]{Springer}, 
the opposite inequalities also hold with the limit.
Therefore, we obtain the relations \eqref{NBCV0} and \eqref{NBCV1}.


\if0

Finally, we show the following proposition to address the asymptotic expansion of 
$S(\av_f[\psi])$ by using the formulas \eqref{eq:ac=0:NJ1} and \eqref{eq:ac=0:NJ2}.

\begin{prp}\label{prp:ProPH}
In Type II limit with the condition $\alpha \beta \delta >0$, $\gamma=0$ 
and $0< \xi \le 1/2$, we have
\begin{align}\label{eq:ac=0:NT1}
 S(\av_f[\psi]) = n C_1+ C_2+C_3\phi+C_4\sigma^2+o(1),
\end{align}
where
\begin{align}
&C_1 \ceq h(\xi) + \xi h\Bigl(\frac{\mu}{\xi}\Bigr)
             -\alpha h\Bigl(\frac{\mu}{\alpha}\Bigr)
             -(1-\alpha-\mu) h\Bigl(\frac{\beta}{1-\alpha-\mu}\Bigr), \\
&C_2 \ceq \frac{1}{2}\log \frac{\alpha \beta}{\xi^2(1-\xi)}
          \Bigl(1-\frac{\beta}{1-\alpha-\mu}\Bigr), \\
&C_3 \ceq h'\Bigl(\frac{\mu}{\xi}\Bigr)
         -h'\Bigl(\frac{\mu}{\alpha}\Bigr)+h\Bigl(\frac{\beta}{1-\alpha-\mu}\Bigr)
         -\frac{\beta}{1-\alpha-\mu} h'\Bigl(\frac{\beta}{1-\alpha-\mu}\Bigr), \\
&C_4 \ceq \frac{1}{2\xi} h''\Bigl(\frac{\mu}{\xi}\Bigr)
         -\frac{1}{2\alpha} h''\Bigl(\frac{\mu}{\alpha}\Bigr)
         -\frac{\beta^2}{2(1-\alpha-\mu)^3} h''\Bigl(\frac{\beta}{1-\alpha-\mu}\Bigr).
\end{align} 
\end{prp}

In fact, the leading coefficient $C_1$ equals $h(\mu)$ as follows.
\begin{lem}
We have $C_1 = h(\mu)=-\mu \log \mu - (1-\mu) \log(1-\mu)$.
\end{lem}

\begin{proof}
A direct calculation (regarding $\mu$ as an indeterminate) yields
\begin{align}\label{eq:ac=0:C13}
\begin{split}
 C_1 -h(\mu) = 
&(1-\mu) \log(1-\mu)(1-\xi-\mu)(\alpha-\mu) + \mu \log \mu(\xi-\mu)(1-\alpha-\mu) \\
&-(1-\alpha) \log(1-\alpha-\mu)(\alpha-\mu) - \xi \log(1-\xi-\mu)(\alpha-\mu) \\
&-\alpha \log \alpha-(1-\xi)\log(1-\xi)+(\xi-\alpha)\log(\xi-\alpha).
\end{split}
\end{align}
Since $\mu$ is a solution of the cubic equation 
$(1-t)(1-\xi-t)(\alpha-t)= t(\xi-t)(1-\alpha-t)$ for $t$, 
the first line of \eqref{eq:ac=0:C13} can be computed as 
\begin{align}
\begin{split}
& (1-\mu) \log(1-\mu)(1-\xi-\mu)(\alpha-\mu)+\mu \log \mu(\xi-\mu)(1-\alpha-\mu) \\
&=\tfrac{1}{2} \log(1-\mu)(1-\xi-\mu)(\alpha-\mu) + 
  \tfrac{1}{2} \log \mu(\xi-\mu)(1-\alpha-\mu).
\end{split}
\end{align}
Then we can proceed as 
\begin{align}
\begin{split}
 C_1 -h(\mu) = 
&\tfrac{1}{2} \log \mu(1-\mu) + 
 (\tfrac{1}{2}-\xi) \log (\xi-\mu)(1-\xi-\mu) + 
 (\alpha-\tfrac{1}{2}) \log (\alpha-\mu)(1-\alpha-\mu) \\
&-\alpha \log \alpha-(1-\xi)\log(1-\xi)+(\xi-\alpha)\log(\xi-\alpha) .
\end{split}
\end{align}
Then, using $\mu-\mu^2=\alpha(1-\xi)$, $(\alpha-\mu)(1-\alpha-\mu)=\alpha(\xi-\alpha)$
and $(\xi-\mu)(1-\xi-\mu)=(1-\xi)(\xi-\alpha)$, we can check $C_1-h(\mu)=0$.
\end{proof}

To show Proposition \ref{prp:ProPH}, we compute 
\begin{align}
\log \dim \clV_{(n-x,x)}-\log p(x)
&=\log  \binom{n}{m} \binom{m}{x} \binom{l}{x}^{-1} \binom{n-l-x}{m-l}^{-1},
\end{align}
which follows from the combination of \eqref{eq:hook} and \eqref{eq:ac=0:p}.
If $x/n$ is fixed and $n \to \infty$, then 
\begin{align}
 \log \tbinom{n}{x} 
 =n h(\tfrac{x}{n})+\tfrac{1}{2}\log \tfrac{n}{2\pi x(n-x)}+o(n^0)
 =n h(\tfrac{x}{n})-\tfrac{1}{2}\log n
     -\tfrac{1}{2}\log \bigl(\tfrac{2\pi x}{n}(1-\tfrac{x}{n})\bigr)+o(n^0),
\end{align}
where $h(t) \ceq -t \log t - (1-t) \log(1-t)$ is the binary entropy.
Then, a direct computation yields 
\begin{align}\label{eq:ac=0:Sdfn}
\begin{split}
 \log \dim \clV_{(n-x,x)} &- \log p(x)
=n h(\tfrac{m}{n})+m h(\tfrac{x}{m})-l h(\tfrac{x}{l})-(n-l-x) h(\tfrac{m-l}{n-l-x}) 
 -\tfrac{1}{2}\log\tfrac{n m}{l(n-l-x)} \\
&\phantom{M}
 -\tfrac{1}{2}\log\tfrac{m}{n}(1-\tfrac{m}{n})\tfrac{x}{m}(1-\tfrac{x}{m}) 
 +\tfrac{1}{2}\log\tfrac{x}{l}(1-\tfrac{x}{l})\tfrac{m-l}{n-l-x}(1-\tfrac{m-l}{n-l-x})
 +o(n^0). 
\end{split}
\end{align}

Hereafter we use the variables and ratios 
\begin{align}
 z \ceq x-n \mu, \quad 
 \xi=\tfrac{m}{n}, \quad \alpha=\tfrac{l}{n}, \quad \beta=\tfrac{m-l}{n}.
\end{align} 
We can then rewrite \eqref{eq:ac=0:Sdfn} as
\begin{align}
&  \log \dim \clV_{(n-x,x)} - \log p(x) \\
\begin{split}
&= n h(\xi)+ n \xi h(\tfrac{\mu}{\xi}+\tfrac{z}{n \xi})
  -n \alpha h(\tfrac{\mu}{\alpha}+\tfrac{z}{\alpha n}) 
  -n(1-\alpha-\mu-\tfrac{z}{n}) h(\tfrac{\beta}{1-\alpha-\mu-\frac{z}{n}}) \\
& \phantom{M}
  -\tfrac{1}{2}\log\tfrac{\xi}{\alpha(1-\alpha-\mu-\frac{z}{n})} 
  -\tfrac{1}{2}\log\xi(1-\xi)(\tfrac{\mu}{\alpha}+\tfrac{z}{n \alpha})
                           (1-\tfrac{\mu}{\alpha}-\tfrac{z}{n \alpha}) \\
& \phantom{M}
  +\tfrac{1}{2}\log(\tfrac{\mu}{\alpha}+\tfrac{z}{n \alpha})
                 (1-\tfrac{\mu}{\alpha}-\tfrac{z}{n \alpha})
  \tfrac{\beta}{1-\alpha-\mu-\frac{z}{n}}(1-\tfrac{\beta}{1-\alpha-\mu-\frac{z}{n}})
\end{split}
\\
\begin{split}\label{eq:ac=0:vNS} 
&= n h(\xi)+n \xi h(\tfrac{\mu}{\xi}+\tfrac{z}{n \xi})
  -n \alpha h(\tfrac{\mu}{\alpha}+\tfrac{z}{n \alpha})
  -n(1-\alpha-\mu-\tfrac{z}{n}) h(\tfrac{\beta}{1-\alpha-\mu-\frac{z}{n}}) \\
&\phantom{M}
 +\tfrac{1}{2}\log\tfrac{\alpha \beta}{\xi^2(1-\xi)}
 +\tfrac{1}{2}\log(1-\tfrac{\beta}{1-\alpha-\mu-\frac{z}{n}}).
\end{split}
\end{align}

\begin{lem}\label{lem:ac=0:C16}
In the limit  $z \to 0$ and $n \to \infty$, we have 
\begin{align}\label{eq:ac=0:v(z)}
\begin{split}
&\log \dim \clV_{(n-x,x)}-\log p(x) = v(z) + O\Bigl(\frac{z^3}{n^2}\Bigr), \\
&v(z) \ceq C_1 n + C_2 + C_3 z + C_5 \frac{z}{n} + 
           C_4 \frac{z^2}{n}   + C_6 \frac{z^2}{n^2},
\end{split}
\end{align}
where the coefficients $C_1$ through $C_4$ are given 
as in Proposition \ref{prp:ProPH}, and the rest are given by 
\begin{align}\label{eq:ac=0:C1-6}
 C_5 \ceq-\frac{1}{2} \frac{\beta}{(1-\alpha-\mu)(1-\alpha-\mu-\beta)}, \quad
 C_6 \ceq \frac{1}{4} \Bigl(\frac{1}{(1-\alpha-\mu)^2}
                         -\frac{1}{(1-\alpha-\mu-\beta)^2}\Bigr).
\end{align}
The formula actually holds for any smooth function $h(t)$.
\end{lem}

\begin{proof}
By the Taylor expansion of the binary entropy $h$, we have
\begin{align}
&n \xi h(\tfrac{\mu}{\zeta}+\tfrac{z}{n \zeta}) 
 = \xi h(\tfrac{\mu}{\zeta})n + z h'(\tfrac{\mu}{\zeta}) + 
   \tfrac{1}{2\zeta}h''(\tfrac{\mu}{\zeta}) \tfrac{z^2}{n} + O(\tfrac{z^3}{n^2}), 
   \quad (\zeta = \xi,\alpha) \\
\begin{split}
&n(1-\alpha-\mu-\tfrac{z}{n}) h(\tfrac{\beta}{1-\alpha-\mu-\frac{z}{n}})
 =(1-\alpha-\mu-\tfrac{z}{n}) h(\tfrac{\beta}{1-\alpha-\mu}) n \\
&\hspace{4em} 
  - \bigl(h(\tfrac{\beta}{1-\alpha-\mu})
           -\tfrac{\beta}{1-\alpha-\mu}h'(\tfrac{\beta}{1-\alpha-\mu})\bigr)z 
 +\tfrac{1}{2}\tfrac{\beta^2}{(1-\alpha-\mu)^3} 
  h''(\tfrac{\beta}{1-\alpha-\mu})\tfrac{z^2}{n} + O(\tfrac{z^3}{n^2}).
\end{split}
\end{align}
We also have 
\begin{align}
  \log(1-\tfrac{\beta}{1-\alpha-\mu-\frac{z}{n}}) = 
& \log(1-\tfrac{\beta}{1-\alpha-\mu}) - 
  \tfrac{\beta}{(1-\alpha-\mu)(1-\alpha-\mu-\beta)}\tfrac{z}{n} \\
&+\tfrac{1}{2}\bigl(\tfrac{1}{(1-\alpha-\mu)^2}-\tfrac{1}{(1-\alpha-\mu-\beta)^2}\bigr)
  \tfrac{z^2}{n^2} + O(\tfrac{z^3}{n^3}).
\end{align}
Replacing the terms in \eqref{eq:ac=0:vNS} by these expressions,
we obtain \eqref{eq:ac=0:C1-6}.
\end{proof}

Now we can show Proposition \ref{prp:ProPH}.

\begin{proof}[{Proof of Proposition \ref{prp:ProPH}}]
Let us denote $p_n(x) \ceq p(x \midd n,m,k,l)$.
By Lemma \ref{lem:ac=0:C16}, for $\ep>0$,
there exists a constant $C_7$ such that
any $z \in \bigl( (\mu -\ep)n, (\mu+\ep)n \bigr)$ satisfies
\begin{align}\label{eq:ac=0:NN1}
 \abs{\bigl(\log \dim \clV_{(n-x,x)}-\log p_n(x)\bigr)-v(z)} 
 < C_7 \frac{\abs{z}^3}{n^2}.
\end{align}
Since $\max_x(\log \dim \clV_{(n-x,x)}-\log p_n(x))$ and $\max_x v(x-n\mu)$
behaves linearly for $n$, the exponential evaluations for the tail probabilities 
\eqref{eq:ac=0:MK3} and \eqref{eq:ac=0:MK4} imply
\begin{align}
&\lim_{n \to \infty} \Bigl( \sum_{x: \, \abs{x-n \mu} > \ep}
 \bigl( \log \dim \clV_{(n-x,x)}-\log p_n(x) \bigr) p_n(x)\Bigr) = 0, 
 \label{eq:ac=0:VB2} \\
&\lim_{n \to \infty} \Bigl(\sum_{x: \, \abs{x-n \mu} > \ep}
 v(x-n\mu) p_n(x) \Bigr) = 0.
 \label{eq:ac=0:VB1}
\end{align}
Since \eqref{eq:ac=0:NJ1}, \eqref{eq:ac=0:NJ2} and \eqref{eq:ac=0:v(z)} imply
\begin{align}
 \lim_{n\to \infty} \Bigl(\bbE[v(X-n\mu)]-(n C_1+ C_2+C_3\phi+C_4\sigma^2)\Bigr) = 0,
\end{align}
\eqref{eq:ac=0:VB1} implies 
\begin{align}\label{eq:ac=0:VB3}
 \lim_{n \to \infty} \Bigl( 
  \Bigl(\sum_{x: \, \abs{x-n \mu} \le \ep} v(x-n\mu) p_n(x)\Bigr)
  -(n C_1+ C_2+C_3\phi+C_4\sigma^2)\Bigr) = 0. 
\end{align}
Moreover, we have 
\begin{align}\label{eq:ac=0:VB4}
\begin{split}
&\sum_{x: \, \abs{x-n \mu} \le \ep}
 \Bigl( \bigl(\log \dim \clV_{(n-x,x)}-\log p_n(x) \bigr)- v(x-n\mu) \Bigr) p_n(x) \\
&\quad \stackrel{(a)}{\le} 
 \sum_{x: \, \abs{x-n \mu} \le \ep} 
 C_7 \, \tfrac{\abs{x-n \mu}^3}{n^2} \, p_n(x) 
 \le C_7 \, \bbE\Big[\abs{\tfrac{X-n\mu}{\sqrt{n}}}^3\Bigr] \, \frac{1}{\sqrt{n}}
 \xrightarrow[ \ n \to \infty \ ]{(b)} 0, 
\end{split}
\end{align}
where $(a)$ and $(b)$ follow from \eqref{eq:ac=0:NN1} and \eqref{MKB}, respectively.
The combination of \eqref{eq:ac=0:VB2}, \eqref{eq:ac=0:VB3}, 
and \eqref{eq:ac=0:VB4} implies \eqref{eq:ac=0:NT1}.
\end{proof}
\fi
\if0
\subsection{Further analysis in special case $\beta \delta=0$}\label{ss:II:sp2}
Here, we discuss the quantity $ S(\av_f[\psi]) $ 
when $\beta \delta=0$.
The combination of \eqref{eq:hook} and Lemma  \ref{lem:MN=0} yields
\begin{align}
\log \dim \clV_{(n-x,x)}-\log p(x)
&=\log  \binom{n}{m} .
\end{align}
If $m/n$ is fixed to be $\alpha+\beta$ and $n \to \infty$, then 
\begin{align}
 \log \tbinom{n}{m} 
& =n h(\tfrac{m}{n})+\tfrac{1}{2}\log \tfrac{n}{2\pi m(n-m)}+o(n^0) \\
 &=n h(\alpha+\beta)-\tfrac{1}{2}\log n
     -\tfrac{1}{2}\log \bigl( 2\pi (\alpha+\beta)(1-(\alpha+\beta))\bigr)+o(1).
\end{align}
Hence, under Type II limit with the condition $\beta \delta=0$, 
we have
\begin{align}\label{eq:ac=0:NT1}
 S(\av_f[\psi]) &=
 \log  \binom{n}{m} \\
 &=n h(\alpha+\beta)-\tfrac{1}{2}\log n
     -\tfrac{1}{2}\log \bigl( 2\pi (\alpha+\beta)(1-(\alpha+\beta))\bigr)+o(1).
\end{align}
\fi

\if0
\section{Asymptotic expectation for $(n,m,k,l)=(2m,m,2l,l)$}

\subsection{Calculating expectation for $n=2m$}\label{ss:Jac}

As mentioned in Remark \ref{rmk:sum=1} (2),
we don't have a nice explicit formula of the expectation of our distribution $P_{n,m,k,l}$
for general parameters $(n,m,k,l)$.
In this subsection, we consider the special case $n=2m$, and derive an explicit formula 
of the expectation in terms of integrals of Jacobi polynomials.
The result is as follows:

\begin{prp}\label{prp:Jac:E}
In the case $(n,m,k,l)=(2m,m,k,l)$, 
the expectation of the distribution $P_{n,m,k,l}$ is given by
\begin{align}
&\bbE[X] \ceq \sum_{x=0}^m x \, p(x) = m + \frac{1}{2} - 
 \frac{1}{4 \sqrt{2}} \bigl(2(N+1) I_1 + (2l+1) I_2\bigr), \\
&I_1 \ceq \int_{-1}^1 
 \left(\frac{z+1}{2}\right)^{l+1} P^{(1,N-M+1)}_{M-1}(z) \frac{d z}{\sqrt{1-z}}, \quad
 I_2 \ceq \int_{-1}^1 
 \left(\frac{z+1}{2}\right)^{l  } P^{(0,N-M  )}_{M  }(z) \frac{d z}{\sqrt{1-z}}
\end{align}
with $M \ceq m-l$ and $N \ceq m-k+l$.
\end{prp}

Here we used Jacobi polynomials \cite[\S 9.8]{KLS}, \cite[\S 2.5]{AAR}.
They are $\hg{2}{1}$-hypergeometric orthogonal polynomials given by
\begin{align}
 P^{(a,b)}_d(w) &\ceq 
 \frac{(a+1)_d}{d!} \HG{2}{1}{-d,d+a+b+1}{a+1}{\frac{1-w}{2}}.
\end{align}
For real $z$, we have another expression:
\begin{align}\label{eq:Jacobi}
 P^{(a,b)}_d(z) = \sum_{s=0}^d \binom{d+a}{d-s} \binom{d+b}{s} 
 \left(\frac{z-1}{2}\right)^s \left(\frac{z+1}{2}\right)^{d-s}.
\end{align}
It is a degeneration (specializing parameters) of Hahn polynomial 
as depicted in the Askey scheme (Figure \ref{fig:Askey}).
The case $a=b$ is also called 
Gegenbauer or ultraspherical polynomial \cite[\S 9.8.1]{KLS}, and 
the case $a=b=0$ coincides with Legendre polynomial \cite[\S 9.8.3]{KLS}.

Although the formula in Proposition \ref{prp:Jac:E} looks somewhat complicated,
it is useful for the asymptotic analysis in the limit $n \to \infty$ 
with the help of known asymptotic formulas for Jacobi polynomials.
The detail will be discussed in \S \ref{ss:II:Jac}.

First, we calculate the expectation of $p(x)$ in the case $n=2m$.

\begin{lem}\label{lem:Jac:E}
In the case $n=2m$, the expectation of $p(x)=p(x \midd 2m,m,k,l)$ is given by 
\begin{align}
 \bbE[X] = m + \frac{1}{2} - \frac{1}{2} \sum_{r=0}^{M \wedge N}
 (-1)^r 4^{m-r} \binom{N}{r} \binom{M+N-r}{N} \binom{2(m-r)}{m-r}^{-1}
\end{align}
with $M \ceq m-l$ and $N \ceq n-m-k+l=m-k+l$.
\end{lem}

\begin{proof}
For a while, we don't assume $n=2m$.
Recalling \eqref{eq:S:237} and \eqref{eq:cdf:f} 
in the proof of Theorem \ref{thm:sum=1}, we have
\begin{align}\label{eq:Jac:E:276}
 \sum_{x=0}^m x \, p(x) = \frac{\binom{M+N}{M}}{\binom{n}{m}} 
 \sum_{r=0}^{M \wedge N} \frac{(-M,-N)_r}{(-m,-n+m,-M-N,1)_r} 
 \sum_{x=r}^m x \left(f(x)-f(x-1)\right)
\end{align}
with $f(x) \ceq \binom{n}{x}(-x,x-n)_r$.
The inner summation can be calculated as
\begin{align}
\begin{split}
 \sum_{x=r}^m x \left(f(x)-f(x-1)\right) 
&= \sum_{x=r}^m \bigl(x f(x)- (x-1) f(x-1) - f(x-1)\bigr) \\
&= m f(m) - (r-1) f(r-1) - \sum_{x=r}^m f(x-1) 
 = m f(m) - \sum_{x=r}^{m-1} f(x).
\end{split}
\end{align}
As for the summation of $f(x)$, we have 
\begin{align}\label{eq:Jac:E:f1}
 \sum_{x=r}^{n-r} f(x) = \sum_{x=r}^{n-r} \frac{\ff{n}{x+r}}{(x-r)!}
 = \ff{n}{2r} \sum_{y=0}^{n-2r} \binom{n-2r}{y} = 2^{n-2r} \ff{n}{2r}.
\end{align}

Now, we assume $n=2m$. 
Then, the equality $f(x)=f(n-x)=f(2m-x)$ yields
\begin{align}\label{eq:Jac:E:f2}
 \sum_{x=r}^{m-1}f(x) = \frac{1}{2}\left(\sum_{x=r}^{2m-r}f(x)-f(m)\right)
 = \frac{1}{2}\left( 4^{m-r} \ff{(2m)}{2r} - f(m) \right).
\end{align}
Hence, we have 
\begin{align}
\begin{split}
 \sum_{x=r}^m x \left(f(x)-f(x-1)\right) 
&= m f(m) - \frac{1}{2}\left( 4^{m-r} \ff{(2m)}{2r} - f(m) \right) \\
&= \left(m+\tfrac{1}{2}\right)f(m) - \frac{4^{m-r}}{2} \ff{(2m)}{2r}.
\end{split}
\end{align}
Then, going back to \eqref{eq:Jac:E:276}, we have 
\begin{align}
\begin{split}
 \sum_{x=0}^m x \, p(x) 
&= m+\frac{1}{2} - \frac{1}{2} \sum_{r=0}^{M \wedge N}
 \frac{4^{m-r}}{r!} \, \frac{\binom{M+N}{M}}{\binom{2m}{m}} \, 
 \frac{(-M,-N)_r \, \ff{(2m)}{2r}}{\bigl(-m,-m,-M-N \bigr)_r} \\
&= m+\frac{1}{2} - \frac{1}{2} \sum_{r=0}^{M \wedge N} (-1)^r
 \frac{4^{m-r}}{r!} \frac{(M+N-r)!}{(M-r)! \, (N-r)!} \frac{(m-r)! \, (m-r)!}{(2m-2r)!},
\end{split}
\end{align}
which yields the consequence.
\end{proof}

\begin{rmk}\label{rmk:Jac:psb}
The condition $n=2m$ enables us to calculate the summation \eqref{eq:Jac:E:f2}.
As the calculation \eqref{eq:Jac:E:f1} indicates, we need 
a nice formula of the partial sum $\sum_{y=0}^{m-1-r} \binom{n-2r}{y}$
of binomial coefficients in general case. 
Such a partial sum formula seems to be unknown.
\end{rmk}

Using the beta integral, we can rewrite Lemma \ref{lem:Jac:E} as:

\begin{lem}\label{lem:Jac:G}
In the case $n=2m$, the expectation of $p(x)=p(x \midd 2m,m,k,l)$ is given by 
\begin{align}
&\bbE[X] = m + \frac{1}{2} - 
 \frac{1}{4 \sqrt{2}} \int_{-1}^1 G(z;m,k,l) \frac{d z}{\sqrt{1-z}}, \\
&G(z;m,k,l) \ceq  \sum_{r \ge 0}
 (-1)^r \binom{N}{r} \binom{M+N-r}{N} (2m-2r+1) \left(\frac{z+1}{2}\right)^{m-r}
\end{align}
with $M \ceq m-k$ and $N \ceq m-k+l$.
\end{lem}

\begin{proof}
Using the beta function 
$B(a,b) \ceq \Gamma(a)\Gamma(b)/\Gamma(a+b)=\int_0^1 t^{a-1} (1-t)^{b-1} d t$, we have
\begin{align}
\begin{split}
 4^{m-r} \binom{2(m-r)}{m-r}^{-1} 
&= 4^{m-r} (2m-2r+1) B(m-r+1,m-r+1) \\
&= (2m-2r+1) \int_0^1 \bigl(4t(1-t) \bigr)^{m-r} \, d t
 = \frac{2m-2r+1}{2} \int_0^1 s^{m-r} \frac{d s}{\sqrt{1-s}}
\end{split}
\end{align}
with $s \ceq 4t(1-t)$.
Putting $z \ceq 2s-1$, we have the consequence.
\end{proof}

We can express the function $G(z;m,k,l)$ in Lemma \ref{lem:Jac:G} by Jacobi polynomials:

\begin{lem}\label{lem:Jac:GJ}
The function $G(z;m,k,l)$ in Lemma \ref{lem:Jac:G} is equal to 
\begin{align}\label{eq:Jac:GJ}
&G(z;m,k,l) = 2(N+1) \left(\frac{z+1}{2}\right)^{l+1} P^{(1,N-M+1)}_{M-1}(z) + 
              (2l+1) \left(\frac{z+1}{2}\right)^{l  } P^{(0,N-M  )}_{M  }(z)
\end{align}
with $M \ceq m-l$ and $N \ceq m-k+l$.
\end{lem}

\begin{proof}
Denoting $y \ceq (z+1)/2$ and using \eqref{eq:Jacobi}, we have 
\begin{align}
 (\text{RHS of \eqref{eq:Jac:GJ}}) &=
 2(N+1) \sum_{s \ge 0} \binom{M}{s+1} \binom{N}{s} (y-1)^s y^{m-s} + 
 (2l+1) \sum_{s \ge 0} \binom{M}{s  } \binom{N}{s} (y-1)^s y^{m-s} 
 \nonumber \\
&=   \sum_{s \ge 0} \binom{N}{s} g(s) (y-1)^s y^{m-s}
 \label{eq:Jac:GJ:1}
\end{align}
with $g(s) \ceq 2(N+1)\tbinom{M}{s+1}+(2l+1)\tbinom{M}{s}$. 
We continue as
\begin{align} 
\begin{split}
 \eqref{eq:Jac:GJ:1} 
&= \sum_{s \ge 0} \sum_{t=0}^s (-1)^{s-t} y^{m-s+t} \binom{s}{t} \binom{N}{s} g(s)
 = \sum_{r \ge 0} (-1)^r y^{m-r} \sum_{s \ge r} \binom{s}{r} \binom{N}{s} g(s) \\
&= \sum_{r \ge 0} (-1)^r y^{m-r} \binom{N}{r} \sum_{s \ge r} \binom{N-r}{N-s} g(s).
\end{split}
\end{align}
Since $G(x;m,k,l)=\sum_{r=0}^M (-1)^r \binom{N}{r} \binom{M+N-r}{N} (2m-2r+1) y^{m-r}$, 
it remains to show that the inner summation is equal to $\binom{M+N-r}{M} (2m-2r+1)$.
Using the Chu-Vandermonde formula \eqref{eq:intro:CV2}, we have 
\begin{align}
\begin{split}
 \sum_{s \ge r} \binom{N-r}{N-s} g(s)
&= 2(N+1) \sum_{s \ge r} \binom{N-r}{N-s} \binom{M}{s+1}+
      (2l+1) \sum_{s \ge r} \binom{N-r}{N-s} \binom{M}{s} \\
&= (2N+2) \binom{M+N-r}{N+1} + (2l+1) \binom{M+N-r}{N} \\
&= (2m-2r+1) \binom{M+N-r}{N}.
\end{split}
\end{align}
\end{proof}

Combining Lemma \ref{lem:Jac:G} and Lemma \ref{lem:Jac:GJ}, 
we have Proposition \ref{prp:Jac:E}.

\subsection{Asymptotic expectation for $(n,m,k,l)=(2m,m,2l,l)$}\label{ss:II:Jac}


In this subsection, we show Theorem \ref{thm:E} directly in a special case. 

\begin{thm}\label{thm:Jac}
Assume 
\begin{align}\label{eq:II:Jac:prm}
 (n,m,k,l)=(2m,m,2l,l),
\end{align}
and let $X$ be a random variable 
distributed by $p(x)=p(x \midd 2m,m,2l,l)$.
Then, in the limit $n \to \infty$ with $\alpha=l/n$ fixed, 
the expectation $\bbE[X]$ behaves as 
\begin{align}
 \lim_{n \to \infty} \bbE[X/n] = \alpha = \mu.
\end{align} 
In particular, the asymptotic expectation Theorem \ref{thm:E} holds in this case.
\end{thm}

First let us explain the reason why we consider the condition \eqref{eq:II:Jac:prm}.
The starting point is the formula in Proposition \ref{prp:Jac:E}:
For $(n,m,k,l)=(2m,m,k,l)$, we have 
\begin{align}
 \bbE[X] = 
 m + \frac{1}{2} - \frac{1}{4 \sqrt{2}} \bigl(2(N+1) I_1 + (2l+1) I_2\bigr)
\end{align}
with $M \ceq m-l$, $N \ceq m-k+l$ and 
\begin{align}
\begin{split}
&I_1 \ceq \int_{-1}^1 
 \left(\frac{z+1}{2}\right)^{l+1} P^{(1,N-M+1)}_{M-1}(z) \frac{d z}{\sqrt{1-z}}, \\
&I_2 \ceq \int_{-1}^1 
 \left(\frac{z+1}{2}\right)^{l  } P^{(0,N-M  )}_{M  }(z) \frac{d z}{\sqrt{1-z}}, 
\end{split}
\label{eq:II:Jac:I12}
\end{align}
where $P_d^{(a,b)}(z)$ denotes Jacobi polynomial \eqref{eq:Jacobi},
We want to consider the limit $n \to \infty$ with $\xi=\alpha+\beta=m/n$, 
$\kappa=\alpha+\gamma=k/n$ and $\alpha=l/n$ fixed.
In order for the indices of the Jacobi polynomials 
in \eqref{eq:II:Jac:I12} to be stable under the limit, 
we should have 
\begin{align}
 n=2m \ \text{ and } \ M=N \iff (n,m,k,l)=(2m,m,2l,l),
\end{align}
which yields the condition \eqref{eq:II:Jac:prm}.

Rewriting the condition \eqref{eq:II:Jac:prm} by the fixed ratios, we have 
\begin{align}
 \xi = \alpha+\beta = 1/2, \quad \alpha =\gamma, \quad \beta = \delta.
\end{align}
In this case, the asymptotic expectation $\mu$ and 
the quantities $\eta$ and $D$ in Lemma \ref{lem:D} are given by 
\begin{align}\label{eq:II:Jac:eDm}
 \eta = \alpha\gamma+\alpha\delta+\beta\gamma = \alpha-\alpha^2, \quad 
 D = 1-4\eta = (1-2\alpha)^2, \quad 
 \mu = \frac{1-\sqrt{D}}{2} = \alpha. 
\end{align}
The integrals in \eqref{eq:II:Jac:I12} are then of the form 
\begin{align}\label{eq:II:Jac:Isp}
 I_1 = \int_{-1}^1 
 \left(\frac{z+1}{2}\right)^{l+1} P^{(1,1)}_{m-l-1}(z) \frac{d z}{\sqrt{1-z}}, \quad
 I_2 = \int_{-1}^1 
 \left(\frac{z+1}{2}\right)^{l  } P^{(0,0)}_{m-l  }(z) \frac{d z}{\sqrt{1-z}}.
\end{align}
For the estimation of the asymptotic expectation $\lim_{n \to \infty}\bbE[X/n]$,
we need asymptotic formulas of Jacobi polynomials 
which are suitable for the estimation of these integrals \eqref{eq:II:Jac:Isp}.
As announced in \S \ref{ss:Jac}, there do exist such formulas, 
which can be found in \cite[Chap.\ VIII]{Sz}:

\begin{fct}[{\cite[Theorem 8.21.2]{Sz}}]\label{fct:Sz:Legendre}
Let $\theta$ be real with $0<\theta<\pi$. Then, 
\begin{align}
 P^{(0,0)}_d(\cos \theta) = \sqrt{2} \, (d \pi \sin \theta)^{-1/2} 
 \cos\bigl((d+\tfrac{1}{2})\theta-\pi/4\bigr)+O(d^{-3/2}),
\end{align}
and the bound for the error term holds uniformly in the interval 
$\ep \le \theta \le \pi-\ep$ for any positive $\ep$.
\end{fct}

\begin{fct}[{\cite[Theorem 8.21.12]{Sz}}]\label{fct:Sz:J}
Let $a>-1$, $b \in \bbR$, and $\theta$ be real with $0<\theta<\pi$. Then, 
\begin{align}
\begin{split}
 \left(\sin \tfrac{\theta}{2}\right)^a \left(\cos \tfrac{\theta}{2}\right)^b
 P^{(a,b)}_d(\cos \theta) = 
 & L^{-a} \, \frac{\Gamma(d+a+1)}{d!}
 (\theta/\sin \theta)^{1/2} J_a(L \theta) \\ 
 & + \begin{cases}
      \theta^{1/2} \, O(d^{-3/2}) & (c/d \le \theta \le \pi-\ep) \\
      \theta^{a+2} \, O(d^a) & (0<\theta \le c/d)
     \end{cases},
\end{split}
\end{align}
where $L \ceq d+(a+b+1)/2$, $J_a(z)$ is the Bessel function of the first kind 
\cite[\S 9.13]{KLS}, \cite[\S 1.71]{Sz}:
\begin{align}
 J_a(z) \ceq \sum_{r=0}^{\infty} \frac{(-1)^r (z/2)^{a+2r}}{r! \, \Gamma(r+a+1)},
\end{align}
and $c,\ep$ are fixed positive numbers.
\end{fct}

Using these formulas, we can show:

\begin{lem}\label{lem:II:Jac:int}
In the limit $n \to \infty$ with $m/n=1/2$ and $\alpha=l/n=k/(2n)$ fixed, 
the integrals in \eqref{eq:II:Jac:Isp} behave as 
\begin{align}
&\lim_{n \to \infty} I_1 = \lim_{n \to \infty} \int_{-1}^1 
 \left(\frac{z+1}{2}\right)^{l+1} P^{(1,1)}_{m-l-1}(z) \frac{d z}{\sqrt{1-z}}
 = 2\sqrt{2}, \\
&\lim_{n \to \infty} I_2 = \lim_{n \to \infty} \int_{-1}^1 
 \left(\frac{z+1}{2}\right)^{l  } P^{(0,0)}_{m-l  }(z) \frac{d z}{\sqrt{1-z}} = 0.
\end{align}
\end{lem}


\begin{proof}
Let us denote $M \ceq m-l$ as before.
For the first integral, we have 
\begin{align}
 I_1 &= \sqrt{2} \int_0^{\pi} 
 \left(\cos \tfrac{\theta}{2}\right)^{2l+3} P^{(1,1)}_{M-1}(\cos \theta) \, d \theta.
\end{align}
We claim that for any $\delta >0$, the following two estimations hold:
\begin{align}
\label{eq:int:11}
&\lim_{n\to\infty}\int_\delta^\pi \bigl(\cos\tfrac{\theta}{2}\bigr)^{2l+3} 
 P^{(1,1)}_{M-1}(\cos \theta) \, d \theta = 0, \\
&\lim_{n\to\infty}\int_0^\delta \bigl(\cos\tfrac{\theta}{2}\bigr)^{2l+3} 
 P^{(1,1)}_{M-1}(\cos \theta) \, d \theta = 2.  
\label{eq:int:12}
\end{align}
These will yield the desired limit of $I_1$. 
For the proof of them, we note that, in either case of Fact \ref{fct:Sz:J}, 
the remainder term is $\sqrt{\theta}\, O(d^{-3/2})$ uniformly on $0< \theta \le \pi-\ep$.
Then, Fact \ref{fct:Sz:J} yields
\begin{align}\label{eq:int:SzJ}
 \sin \tfrac{\theta}{2} \cos \tfrac{\theta}{2} P^{(1,1)}_{M-1}(\cos \theta)
 = \frac{M}{M+\tfrac{1}{2}} \sqrt{\frac{\theta}{\sin \theta}} \,  
 J_1\bigl((M+\tfrac{1}{2})\theta\bigr) + \sqrt{\theta} \, O(M^{-3/2}).
\end{align}
Now, we show \eqref{eq:int:11}:
Since $P^{(A,B)}_d (x) = (-1)^d P^{(B, A)}_d(-x)$ holds in general,
the integral is transformed under the replacement $\theta=\pi-\phi$ into 
\begin{align}
 \int_{\pi-\delta}^0 \left(\cos\frac{\pi-\phi}{2}\right)^{2l+3} 
 P^{(1,1)}_{M-1}(-\cos\phi) (-1) \, d \phi 
 = (-1)^{M-1} \int_0^{\pi-\delta} \bigl(\sin\tfrac{\phi}{2}\bigr)^{2l+3}
 P^{(1,1)}_{M-1}(\cos\phi) \, d \phi.
\end{align}
Then, by \eqref{eq:int:SzJ} we can rewrite it into
\begin{align}
 = (-1)^{M-1} \int_0^{\pi-\delta} \bigl( \sin\tfrac{\phi}{2}\bigr)^{2l+3} 
 \left( \frac{2M}{M+\frac{1}{2}} \sqrt{\frac{\phi}{\sin\phi}} 
 \frac{J_1\bigl( (M+\frac{1}{2})\phi\bigr)}{\sin\phi} + 
 \frac{2 \sqrt{\phi}}{\sin\phi} \, O(M^{-3/2}) \right) \, d \phi,
\end{align}
which converges to $0$ as $n\to\infty$ by the bounded convergence theorem. 
Thus, \eqref{eq:int:11} is proved.
As for \eqref{eq:int:12}, we use \eqref{eq:int:SzJ} to divide 
the integral into two terms as
\begin{align}\label{eq:int:12-2}
 \frac{2M}{M+\frac{1}{2}} 
 \int_0^\delta \bigl( \cos \tfrac{\theta}{2}\bigr)^{2l+3} 
 \sqrt{\frac{\theta}{\sin \theta}} 
 \frac{J_1\bigl( (M+\frac{1}{2})\theta\bigr)}{\sin\theta} \, d\theta + 
 \int_0^\delta \bigl( \cos\tfrac{\theta}{2}\bigr)^{2l+3} 
 \frac{2 \sqrt{\theta}}{\sin\theta} \, O(M^{-3/2}) \, d \theta.
\end{align}
The second term in \eqref{eq:int:12-2} converges to $0$ as $n \to \infty$ 
since the improper integral near $\theta =0$ converges. 
Next, the first term in \eqref{eq:int:12-2} equals to 
\begin{align}
&\frac{2M}{M+\frac{1}{2}} \int_0^\delta 
 \bigl( \cos\tfrac{\theta}{2}\bigr)^{2l+3} 
 \Bigl( \frac{\theta}{\sin\theta}\Bigr)^{3/2} 
 \frac{J_1\bigl( (M+\frac{1}{2})\theta\bigr)}{\theta} \, d\theta =
 \frac{2M}{M+\frac{1}{2}} \int_0^\infty g_n(\phi) 
 \frac{J_1(\phi)}{\phi} \, d \phi, 
\\ 
&g_n(\phi) \ceq
 \left( \cos\frac{\phi}{2\left(M+\frac{1}{2}\right)}\right)^{2l+3} 
 \left( \frac{\frac{\phi}{M+(1/2)}}{\sin\frac{\phi}{M+(1/2)}} \right)^{3/2}
 1_{\left(0, (M+\frac{1}{2})\delta \right]}(\phi),
\end{align}
where $1_S$ denotes the indicator function of a set $S$. 
If $\delta < \pi$, then $g_n$ is bounded.
For any $\phi$, we have $\lim_{n\to\infty}g_n(\phi) =1$. 
Now, recall the asymptotic formula for Bessel function \cite[(1.71.7)]{Sz}:
\begin{align}
 J_1(z) = \begin{cases} 
  z/2 + O(z^3) &(z\downarrow 0) \\ 
  \sqrt{2/(\pi z)} \left(\cos(z-\tfrac{3}{4}\pi)+O(z^{-1})\right) &(z \uparrow \infty)
 \end{cases}.
\end{align} 
It implies that $\displaystyle \int_0^\infty \frac{J_1(\phi)}{\phi} \, d\phi$ 
is absolutely convergent.  
Also we can verify that its value is $1$. 
Then, by the dominated convergence theorem, \eqref{eq:int:12-2} converges to $2$. 
This implies \eqref{eq:int:12}. 

For the second integral, using Fact \ref{fct:Sz:Legendre}, we have 
\begin{align}
\begin{split}
 I_2 &= \sqrt{2} \int_0^{\pi} 
 \left(\cos \tfrac{\theta}{2}\right)^{2l+1} P^{(0,0)}_M(\cos \theta) \, d \theta \\
&= \sqrt{\frac{2}{\pi}} \int_{\ep}^{\pi-\ep} \frac{1}{\sqrt{M}} \left(
 \frac{\left(\cos \tfrac{\theta}{2}\right)^{2l}}
      {\left(\sin \tfrac{\theta}{2}\right)^{1/2}} 
 \cos\bigl((M+\tfrac{1}{2})\theta-\tfrac{\pi}{4}\bigr) + O(M^{-1}) \right) d \theta
 + O(\ep).
\end{split}
\end{align}
The integrand goes to $0$ under the limit $n \to \infty$ 
uniformly for $\ep \le \theta \le \pi-\ep$.
Hence, $\lim_{n \to \infty} I_2 = 0$.
\end{proof}

Let us finish the proof of Theorem \ref{thm:Jac}.

\begin{proof}[{Proof of Theorem \ref{thm:Jac}}]
By Proposition \ref{prp:Jac:E} and Lemma \ref{lem:II:Jac:int}, we have
\begin{align}
  \frac{\bbE[X]}{n} 
= \frac{m+\tfrac{1}{2}}{n} - \frac{m-l+1}{n} \frac{I_1}{2\sqrt{2}} 
- \frac{2l+1}{n} \frac{I_2}{4\sqrt{2}}
  \lto \frac{1}{2} - \left(\frac{1}{2}-\alpha \right)-0 = \alpha.
\end{align}
Recalling \eqref{eq:II:Jac:eDm}, we have the consequence.
\end{proof}
\fi


\section{Applications}\label{S-app}
It is known that the estimation of the unitary noiseless phase operation 
achieves the phase estimation via the Fourier analytical method.
When the phase operation on qubits has the standard phase damping noise,
the stochastic behavior of the estimation error 
is described by the distribution $P_{n,m,k,l}$ \cite[Appendix B]{HY}.
The paper \cite{HY} considers the case
when the probability of phase damping on one qubit is $p$
and we have $n$qubits.
When $p$ is fixed and $n$ increase, 
the error analysis on the phase operation
corresponds to 
the distribution $P_{n,m,k,l}$ with Type II limit.
When $p$ behaves as $c/n$ with a constant $c$ and $n$ increase, 
the error analysis on the phase operation
corresponds to 
the distribution $P_{n,m,k,l}$ with Type I limit.
Based on this correspondence,
the paper \cite{HY} employs the fact that
the limit pmf $q(x \midd \xi,k,l)$ is a convolution of two binomial distributions,
it is a polynomial of $\xi$.
This property was used to prove 
Lemma 1 of the paper \cite{HY}, which
takes the essential role 
to derive the Heisenberg scaling in the phase estimation
when the noise parameter $p$ behaves as $c/n$.
In this reason, 
our analysis on the distribution $P_{n,m,k,l}$
is essential for analysis of the phase estimation
with the standard phase damping noise.

As the second application, we point out the asymmetry with respect to permutation.
Asymmetry is one of resources in quantum information \cite{VAW,Marvian}.
Its amount is often measured by the mutual information.
Since the Dicke state $ \ket{\Xi_{N+M,M}}$ is invariant for permutation,
it does not have the asymmetry.
However, the state $ \ket{\Xi_{n,m|k,l}}$ is not invariant for permutation
so that it has a certain amount of invariant for permutation.
To calculate its amount, we need to discuss the distribution $P_{n,m,k,l}$.
In particular, the derivation of its asymptotic behavior needs
our asymptotic analysis on the distribution $P_{n,m,k,l}$.
In addition, to get its higher order asymptotic analysis, 
we need the exponential decreasing rate presented in Proposition \ref{prp:5.6.2}.
Since the analysis on the asymmetry is beyond the scope of this paper, 
this topic will be discussed in another paper \cite{HHY3}.

\section{Conclusion and discussion}\label{S5}
We have discussed the asymptotic behavior of the outcome of 
Schur-Weyl duality measurement under 
two kinds of settings.
The first setting addresses the case when the state is given as
the permutation mixture 
$\rho_{mix,n,l}$
of the state $\kb{1^{l} \, 0^{n-l}}$.
The second setting addresses the case when the state is given as
the tensor product
of the permutation mixture $\rho_{mix,k,l}$
and the Dicke state $ \ket{\Xi_{n-k,m-l}}$
and the distribution of the outcome is denoted by $P_{n,m,k,l}$.
We have derived various types of asymptotic distribution including 
a kind of central limit theorem when $n$ goes to infinity.

For our analysis on the distribution $P_{n,m,k,l}$,
we have employed the fact that the pmf $p(x \midd n,m,k,l)$ of 
the distribution $P_{n,m,k,l}$ is described by 
Hahn and Racah polynomials, which are 
$\hg{3}{2}$- and $\hg{4}{3}$-hypergeometric orthogonal polynomials.
Since the distribution $P_{n,m,k,l}$ has a highly complicated form,
this fact shows the usefulness of 
Hahn and Racah polynomials.
The relation between these polynomials and 
the distribution $P_{n,m,k,l}$ is based on the 
the Schur-Weyl duality, which is a key structure in quantum information.
Therefore, we can expect that 
these polynomials can be used in other topics in quantum information.
In particular,
for the above derivation of the central limit theorem under Type II limit,
we have employed the recurrence relation among three probabilities
$p(x),(x+1),p(x+2)$ that was obtained 
from the formulas for hypergeometric function in the paper \cite{HHY1}.
This fact shows that such a recurrence relation
is useful for the central limit theorem
Therefore, it is expected that a similar 
recurrence relation derives the central limit theorem.
Seeking such a possibility is another future topic.

Although Section \ref{S2-2} has presented
the asymptotic analysis under Type II limit under the assumption that
$\beta\delta>0$,
the analysis on this case already has been presented as
the first setting. 
Therefore, 
under Type II limit, we have obtained three types of distributions,
normal distributions, Rayleigh distributions, and geometric distributions
as the limit distribution
dependently of the parameters $\alpha,\beta, \delta,\gamma$.
That is, the limit distribution changes
discontinuously for these parameters.
This fact means that the convergence is not uniform
around the discontinuous points.
For example,
when $\xi(=\alpha+\beta)=\frac{1}{2}$ and $\beta\delta=0$,
the limit distribution is a Rayleigh distribution.
Hence, when $\beta=n^{-s}$ with parameter $0<s<1$
and $\xi=\frac{1}{2}$,
we can expect that the limiting distribution is an intermediate distribution
between normal distributions and Rayleigh distributions.
As an example of such distributions, we can list 
the Rayleigh-normal distribution, which was introduced in 
the reference \cite{KH}, and 
is a one-parameter family of distributions connecting normal and Rayleigh distributions.
Investigating the limit behavior of the distribution 
in this region is an interesting open problem.

Further, 
when $\xi\neq \frac{1}{2}$ and $\beta\delta=0$,
the limit distribution is a geometric distribution,
i.e., the limit distribution is a discrete distribution due to the following reason.
In this case,  since the variable $\frac{X-n \mu}{\sqrt{n}}$ converges to $0$ in probability,
we need to focus on the variable ${X-n \mu}$ that has a different scaling.
Hence, it is expected that the variable has 
a different scaling in intermediate cases, e.g., 
when 
$\beta\delta=0$ and 
$\xi= \frac{1}{2}+n^{-s}$ with parameter $0<s<1$
or when $\beta=n^{-s}$ with parameter $0<s<1$
and $\xi\neq \frac{1}{2}$.
It is another interesting open problem to clarify the limit behavior in these cases.


\section*{Acknowledgement}
M.H.\ was supported in part by the National
Natural Science Foundation of China under Grant 62171212.
A.H.\ was supported in part by JSPS KAKENHI Grant Number 19K03532. %
S.Y.\ was supported in part by JSPS KAKENHI Grant Number 19K03399.

\section*{References}


\end{document}